\newcommand{\eq}[1]{\begin{align}#1\end{align}}
\newcommand{\seq}[1]{\begin{subequations}#1\end{subequations}}
\newcommand{\lb}[1]{\Bigg\{ \begin{array}{ll} #1 \end{array} }
\newcommand{\bit}[1]{\begin{itemize}#1\end{itemize}}
\newcommand{\E}{\mathbb{E}}
\newcommand{\p}{\mathbb{P}}
\newcommand{\cT}{[T]}
\newcommand{\cP}{\mathcal{P}}
\newcommand{\cX}{\mathcal{X}}
\newcommand{\cA}{\mathcal{A}}
\newcommand{\cZ}{\mathcal{Z}}
\newcommand{\tsigma}{\tilde{\sigma}}
\newcommand{\tgamma}{\tilde{\gamma}}
\newcommand{\defeq}{\buildrel\triangle\over =}
\newcommand{\nn}{\nonumber}
\newcommand{\cS}{\mathcal{S}}
\DeclareMathAlphabet{\mathcal}{OMS}{cmsy}{m}{n}
\newtheorem{lemma}{Lemma}
\newtheorem{theorem}{Theorem}
\newtheorem{proof}{Proof}
\newtheorem{definition}{Definition}
\newtheorem{claim}{Claim}
\title{Master Equation for Discrete-Time Stackelberg Mean Field Games with single leader}
\author{Deepanshu Vasal and Randall Berry\thanks{The authors are with the Department of Electrical and Computer Engineering, Northwestern University,
  ({dvasal@umich.edu}, {rberry@northwestern.edu}).
  }
  }
\begin{document}

\maketitle

\begin{abstract}
  In this paper, we consider a discrete-time Stackelberg mean field game with a leader and an infinite number of followers. The leader and the followers each observe types privately that evolve as conditionally independent controlled Markov processes. The leader commits to a dynamic policy and the followers best respond to that policy and each other. Knowing that the followers would play a mean field game based on her policy, the leader chooses a policy that maximizes her reward. We refer to the resulting outcome as a Stackelberg mean field equilibrium (SMFE). In this paper, we provide a master equation of this game that allows one to compute all SMFE. Based on our framework, we consider two numerical examples. First, we consider an epidemic model where the followers get infected based on the mean field population. The leader chooses subsidies for a vaccine to maximize social welfare and minimize vaccination costs. In the second example, we consider a technology adoption game where the followers decide to adopt a technology or a product and the leader decides the cost of one product that maximizes his returns, which are proportional to the people adopting that technology.  
    
\end{abstract}

\section{Introduction}

%
%

With increasing amount of integration of technology in our society and with recent advancements in computation and algorithmic technologies, there is an unprecedented scale of interaction among people and devices. As an example, smartphones have penetrated our society in the last decade and more than 65\% of the world is connected through the internet.  
Such deep inter-connectedness of the society demands a  need to design and understand the behavior of the resulting \emph{large scale} interactions and a need to design policies by the government and  private players to better govern and optimally respond. In this paper, we present a new methodology to analyze such interactions through Stackelberg mean-field dynamic games.

The theory of dynamic games is a powerful tool to model such sequential strategic interaction among selfish players, introduced by \cite{Sh53}. Discrete-time dynamic games with Markovian structure have been studied extensively to model many practical applications in both engineering and economics, such as as dynamic auctions~\cite{IyJoSu14,BeSa10}, security~\cite{EtBa19}, markets~\cite{Vill99,BoPaWu18}, traffic routing~\cite{JoScSt04,MePaOzAc20}, wireless systems~\cite{AdJoGo07}, social learning~\cite{VaAn16allerton_extended, LeSuBe14}, oligopolies-- i.e. competition among firms (e.g.~\cite{BaOl98, FiVr12}), and more. 

In dynamic games with perfect and symmetric information, subgame perfect equilibrium (SPE) is an appropriate equilibrium concept.
Markov Perfect Equilibria (MPE), introduced in \cite{MaTi01}, is a refinement of SPE that is also used, where players' strategies depend on a coarser Markovian state of the systems, instead of the whole history of the game which grows exponentially with time and thus becomes unwieldy.  An analogous notion to SPE for incomplete information games is perfect Bayesian equilibrium (PBE).
However, when the number of players is large, computing MPE/PBE becomes intractable.
To model the behavior of large population strategic interactions, mean-field games were introduced independently by \cite{HuMaCa06}, and \cite{LaLi07}. In such games, there are large number of homogeneous strategic players, where each player has infinitesimal effect on system dynamics and is affected by other players through a mean-field population state. There have been a number of applications such as economic growth, security in networks, oil production, volatility formation, population dynamics (see ~\cite{La08,GuLaLi11,SuMa19,HuMa16,HUMa17,HuMa17cdc, AdJoWe15} and references therein). 

An engineering side of game theory is the theory of Mechanism design that deals with the \emph{design} of games such that when acted upon by the strategic players, the equilibrium(s) of the game coincide with the outcome desired by the designer. It could be social welfare say desired by the government or profit maximization desired by private entities. 
Stackelberg equilibrium (SE) is a notion of equilibrium related to mechanism design.
A Bayesian Stackelberg game is played between two players: a leader and a follower.
The follower has a private type that only she observes, however, the leader knows the prior distribution
on that state. The leader commits to a strategy that is observable to the follower. The
follower then plays a best response to leader’s strategy to maximize its utility. Knowing
that the follower will play a best response, the leader commits to and plays a strategy
that maximizes his utility. Such pair of strategies of the leader and the follower is called a Stackelberg equilibrium. It is
known that such strategies can provide higher utility to the leader than that obtained in a Nash
equilibrium of the game.

In this paper, we consider discrete-time Stackelberg mean field games where there is a leader and infinitely many followers. The leader and each follower sequentially make strategic decisions and are affected by other players through a mean-field population state of the followers' and the leader's actions. 
Each follower has a private type that evolves through a controlled Markov process which only she observes and leader and all the followers observe the current population state which is the distribution of all the followers' types. As before, the leader commits to a policy and all the followers best respond to it while being in equilibrium with each other such that for each time $t$ and given the leader's policy, a follower's policy (symmetric across all followers) maximizes her reward to go so that she doesn't gain by unilaterally deviating while all the other followers play the equilibrium policy. Similarly the leader plays a strategy such that when all the followers' best respond to the leader's strategy and are in equilibrium with each other, then the leader's policy maximizes her reward to go. A special case is when the leader is social welfare maximizing and her instantaneous reward is sum of expected reward of the followers, where expectation is defined through the mean field state.

In such games, a Stackelberg Mean Field Equilibrium (SMFE) is defined through a coupled fixed-point equation as follows: the mean-field state evolves through a Fokker-Planck \emph{forward} equation given an SMFE policy profile of the leader and the  followers. The followers' (symmetric) policy is a best response to the leader's equilibrium policy, given the mean-field state evolution process. Finally, the leader's policy is optimum given that followers play the best response. 
As a result, in order to compute an SMFE, one needs to solve a coupled fixed-point equation in the space of mean-field states and the equilibrium policies of the leader and the followers.
In principle, one can solve this fixed-point equation across time for the whole game; however, the resulting complexity will increase double exponentially with time. In this paper, we present an algorithm which can equivalently solve for smaller fixed-point equations for each time $t$, thereby reducing the complexity to linear in time.
This algorithm is equivalent to the {\it master equation} of continuous-time mean field games~\cite{CaDeLaLi15} that allows one to compute all mean field equilibria (MFE) of the game sequentially.
  
     Our algorithm is motivated by the developments in the theory of dynamic games with asymmetric information in~\cite{VaSiAn16arxiv,VaAn16allerton, VaAn16cdc, Va20acc ,VaMiVi21, Va20St}, where authors in these works have considered different models of such games and provided a sequential decomposition framework to compute Markovian perfect Bayesian equilibria and Stackelberg equilibria of such games. 

Using our framework, we consider two problems. The first example we consider is a malware spread problem in a cyber-physical system where the followers correspond to nodes in the system. Nodes get infected by an independent random process where each node has a higher risk of getting infected if there are more infected nodes in the system, due to negative externality imposed by other infected nodes. At each time $t$, each follower privately observes her own state and publicly observes the population of infected nodes, based on which she has to make a decision to repair or not.
Furthermore, there is a leader (e.g.~a government) that decides on subsidies to take the action `repair' with the goal of maximizing social welfare and minimizing the total cost of the subsidy. In the second example we consider technology adoption where we consider two technologies, one with a constant exogenous price and second with a price set by the firm, who dynamically prices its product to get most customers. We assume there is some ``stickyness" to the product such that if a buyer chooses that product, it is more likely to prefer that next time as well. 

The paper is structured as follows. In Section~\ref{sec:Model}, we present the model, our notation and background. In Section~\ref{sec:Prelim}, we present the notion of a Stackelberg Mean Field Equilibrium (SMFE) and the common information approach.
In Section~\ref{sec:Result}, we present our main results, where we present an algorithm to compute a SMFE for the finite horizon game. We present numerical examples in Section~\ref{sec:Example}. We conclude in Section~\ref{sec:Conclusion}. All proofs are presented in Appendix.

\subsection{Notation}
We use uppercase letters for random variables and lowercase for their realizations. For any variable, subscripts represent time indices and superscripts represent player identities. We use notation $ -i$ to represent all players other than player $i$ i.e. $ -i = \{1,2, \ldots i-1, i+1, \ldots, N \}$. We use notation $a_{t:t'}$ to represent the vector $(a_t, a_{t+1}, \ldots a_{t'})$ when $t'\geq t$ or an empty vector if $t'< t$. We use $a_t^{-i}$ to mean $(a^1_t, a^2_{t}, \ldots, a_t^{i-1}, a_t^{i+1} \ldots, a^N_{t})$ . We remove superscripts or subscripts if we want to represent the whole vector, for example $a_t$  represents $(a_t^1, \ldots, a_t^N) $. We denote the indicator function of any set $A$ by $\mathbbm{1}\{A\}$. 
For any finite set $\mathcal{S}$, $\mathcal{P}(\mathcal{S})$ represents space of probability measures on $\mathcal{S}$ and $|\mathcal{S}|$ represents its cardinality. Given a set $\mathcal A$, we denote its $n$-fold Cartesian product by $({\mathcal A})^n$.   We denote the set of real numbers by $\mathbb{R}$. For a probabilistic strategy profile of players $(\sigma_t^i)_{i\in [N]}$ where probability of action $a_t^i$ conditioned on $z_{1:t},x_{1:t}^i$ is given by $\sigma_t^i(a_t^i|z_{1:t},x_{1:t}^i)$, we use the short hand notation $\sigma_t^{-i}(a_t^{-i}|z_{1:t},x_{1:t}^{-i})$ to represent $\prod_{j\neq i} \sigma_t^j(a_t^j|z_{1:t},x_{1:t}^j)$.   We denote by $P^{\sigma}$ (or $E^{\sigma}$) the probability measure generated by (or expectation with respect to) strategy profile $\sigma$.
All equalities and inequalities involving random variables are to be interpreted in the \emph{a.s.} sense. For any variable $a$, we define $\cS_a$ as the space of all possible $a$.

\section{Model}
\label{sec:Model}
We consider a stochastic Stackelberg mean field game over a time horizon $[T]\defeq$ $\{1, 2, \ldots T\}$ with perfect recall as follows. Suppose there are two kinds of players: a leader and an infinite number of followers. Both the leader and the followers have private types, $x_t^l \in \cX^l$ for the leader and $x_t^{f,i} \in \cX^f $, for the follower $i$ at time $t$, where $x_t^{f,i},x_t^l$ evolve as a conditionally independent controlled Markov processes in the following way, where for any finite $N$ number for followers,

\eq{
P(x_t^l,x_t^{f,1},\ldots, x_t^{f,N}|z_{1:t-1},a_{1:t-1},x_{1:t-1}) &=  Q(x_t^l|z_{t-1},a_{t-1},x_{t-1}^l )\prod_{i=1}^NQ(x_t^{f,i}|z_{t-1},a_{t-1},x_{t-1}^l,x_{t-1}^f ),
}
where  $a_t = (a_t^l,a_t^f)$ is the actions taken by the leader and followers at time $t$ and $Q$ is a known kernel. The leader takes action $a_t^l\in \cA^l$ at time $t$ on observing $z_{1:t},x_{1:t}^l$, and the follower $i$ takes action $a_t^{f,i}\in \cA^f$ at time $t$ on observing $z_{1:t}$ and $x_{1:t}^{f,i}$,
where $z_t$ is the mean field population state of the followers at time $t$, i.e.,
\eq{
z_t(x) \defeq \lim_{N\to\infty} \sum_{i=1}^N \frac{1}{N}1(x_t^{f,i} = x).
}
Here, $z_{1:t}$ is common information among players, and $x_{1:t}^l(x_{1:t}^f)$ is private information of the leader (and the followers, respectively). We denote the set of possible values of the mean-field state by $\cZ$.

 At the end of interval $t$, the leader receives an instantaneous reward $R_t^l(z_t,x_t^l,a_t^l)$ and the follower $i$ receives an instantaneous reward $R_t^f(x_t^l,x_t^{f,i},a_t^{f,i},a_t^l,z_t)$. Note that the leader's reward only depends on the followers' actions through the mean-field state. Likewise for each follower, their reward depends on the actions of the other followers through the mean-field state, but does depend directly on the action of the leader and the follower's own action.

The sets $\cA^l,\cA^f, \cX^l,\cX^f $ are assumed to be finite. Let $\sigma^i = ( \sigma^i_t)_{t \in [T]}$ be a probabilistic strategy of player $i\in\{l,f \}$ where $\sigma^l_t : (\mathcal{Z})^{t}\times(\cA^l)^{t-1}\times(\cX^l)^t \to \mathcal{P}(\cA^l)$, and $\sigma^f_t : (\mathcal{Z})^{t}\times (\cA^l)^{t-1}\times(\cX^f)^t \to \mathcal{P}(\cA^f)$ such that the leader plays action $A_t^l$ according to $ A_t^l \sim \sigma^l_t(\cdot|z_{1:t},a_{1:t-1}^l,x_{1:t}^l)$, and the follower plays action $A_t^f$ according to $ A_t^f \sim \sigma^f_t(\cdot|z_{1:t},a_{1:t-1}^l,x_{1:t}^f)$, Let $ \sigma \defeq(\sigma^i)_{i\in \{l,f\}}$ be a strategy profile of all players. Suppose players discount their rewards by a discount factor $\delta\leq 1$.

\section{Preliminaries}
\label{sec:Prelim}
In this section, we first present the definition of a Stackelberg Mean field equilibrium (SMFE) which we will use in this paper. We then discuss the common agent approach that we will utilize in deriving an algorithm for finding an SMFE.

\subsection{Stackelberg mean field equilibrium}
\label{sec:PBSE}
In this paper, we will consider followers' Markovian equilibrium policies that only depend on their current states $x_t^f$, current mean field state $z_t$ and a common belief $\pi_t$, where $\pi_t(x_t^l) = P^{\sigma^f,\sigma^l}(x_t^l|z_{1:t},a_{1:t-1}^l)$ i.e. $\pi_t$ is the common belief on the leader's state given the common information $(z_{1:t},a_{1:t-1}^l)$. Thus, at equilibrium $a_t^{f,i}\sim \tsigma^{f,i}_t(\cdot|\pi_t,z_t,x_t^{f,i})$ and the leader's strategy $a_t^{l}\sim \tsigma^{l}_t(\cdot|\pi_t,z_t,x_t^l)$.\footnote{Note, however, that for the purpose of equilibrium, we allow for deviations in the space of all possible strategies that may depend on the entire observation history.} 

For the game considered, we first define several mappings as follows that we will use in turn to define a Stackelberg mean field equilibrium.


%
Let $BR_t^f: \cP(\cX^l)\times\mathcal{Z}^t\times (\cA^l)^{t-1}\times (\mathcal{X}^f)^{t}\times (\mathcal{\sigma}^l)^{T-t}\to \mathcal{S}_{\sigma^f}$ be defined by
\eq{\label{eq:BR1}
BR_t^f(\pi_t,z_{1:t},a_{1:t-1}^l,x_{1:t}^{f,i},\sigma_{t:T}^l) &:=\arg\max_{\sigma^f} \E^{\sigma_{t:T}^l,{\sigma}_{t:T}^f,\pi_t}[\sum_{n=t}^T \delta^{n-t} R^f(X_n^l,X_n^{f,i},A_n^{f,i},Z_n)|\pi_t,z_{1:t},a_{1:t-1}^l,x_{1:t}^{f,i} ].
}
 This specifies a follower's best response at time $t$ given the history of the mean-field state and its private type up to time $t$ and the leader's strategy from time $t$ on-wards.   
 Note that this mapping specifies a complete policy for the follower for all time $t$. 
 Next, let  $BR^f: \cZ^T\times(\cA^l)^{T-1} \cS_{\sigma^l}\to{\cS_{\sigma^f}}$ be given by
\eq{
BR^f(z_{1:T},\sigma^l) &:=\bigcap_{t}\bigcap_{a_{1:t-1}^l}\bigcap_{x_{1:t}^{f,i}}BR_t^f(\pi_t,z_{1:t},a_{1:t-1}^l,x_{1:t}^{f,i},\sigma_{t:T}^l).
}
This specifies the follower's best response policies which are consistent with (\ref{eq:BR1}) for all times $t$ as a function of the mean-field state trajectory $z$ and the leaders policy $\sigma^l$. 

%

With some abuse of notation, we will also say $\sigma_t^f\in BR_t^f(z_{1:t},x_{1:t}^{f,i},\sigma_{t:T}^l)$ if there exists $\hat{\sigma}^f \in BR_t^f(z_{1:t},x_{1:t}^{f,i},\sigma_{t:T}^l)$ such that $\sigma_t^f = \hat{\sigma}_t^f$.

Conversely, define a mapping $\Lambda: \cS_{\sigma^f}\times \cS_{\sigma^l} \to \cZ^T $ as follows: given $\sigma^f \in S_{\sigma^f},\sigma^l \in S_{\sigma^l}, z = \Lambda(\sigma^f,\sigma^l)$, is constructed recursively as
$\forall t,z_{1:t},a_{1:t-1}^l,x_{1:t}^f, x_{1:t}^l$ 
\eq{z_{t+1}(\cdot) = \sum_{x_t^f,a_t}z_t(x_t^f)\pi_t(x_t^l) Q( \cdot|x_t^f, a_t^f,a_t^l,z_t)\sigma^f_t(a^f_t|z_{1:t},a_{1:t-1}^l,x_{1:t}^f)\sigma_t^l(a_t^l|z_{1:t},a_{1:t-1}^l,x_{1:t}^l). 
}
This mapping determines the mean-field trajectory as a function of the leader's and the follower's policies. Finally, let
\eq{ BR^l(z) &:=\bigcap_t \bigcap_{x_{1:t}^l}  \arg\max_{ \sigma^l} \E^{\sigma^l,\hat{\sigma}^f,\pi_t} \big\{ \sum_{n=t}^T \delta^{n-t}R_n^l(X_n^l,Z_n,A_n) |\pi_t,z_{1:t},a_{1:t-1}^l,x_{1:t}^l\big\},\\
&\text{ where, } \hat{\sigma}^f \in BR^f(z,\sigma^l).
}



\begin{definition}
\label{def:MFE}
A tuple $(\tsigma^f,\tsigma^l,z)$ is a Stackelberg mean-field equilibrium (SMFE) if

[(a)]: $\tsigma^f\in BR^f(z,\tsigma^l)$,

[(b)]: $z=\Lambda(\tsigma^l,\tsigma^f)$, and

[(c)]: $\tsigma^l \in BR^l(z)$

\end{definition}



\subsection{Common agent approach}
We recall that in general, the leader and the followers generate their actions at time $t$ as follows, $a_t^l\sim \sigma_t^l(\cdot|z_{1:t},a_{1:t-1}^l,x_{1:t}^l)$ and $a_t^f\sim \sigma_t^f(\cdot|z_{1:t},a_{1:t-1}^l,x_{1:t}^f)$.
An alternative way to view the problem is as follows. As is done in the common information approach~\cite{NaMaTe13}, at time $t$, a fictitious common agent observes the common information $z_{1:t},a_{1:t-1}^l$ and generates prescription functions $\gamma_t = (\gamma_t^l,\gamma_t^f) = \psi_t[z_{1:t},a_{1:t-1}^l]$. Follower $i$ uses its prescription function $\gamma_t^{f,i}$ to operate on its private information $x_t^{f,i}$ to produce its action $a_t^{f,i}$, i.e. $\gamma_t^{f,i}:(\cX^{f,i})^t\to \cP(\cA^{f,i})$ and $a_t^{f,i}\sim\gamma_t^{f,i}(\cdot|x_{1:t}^{f,i})$. Similarly, leader uses its prescription function $\gamma_t^{l}$ to operate on its private information $x_t^{l}$ to produce its action $a_t^{l}$, i.e. $\gamma_t^{l}:(\cX^{l})^t\to \cP(\cA^{l})$ and $a_t^{l}\sim\gamma_t^{l}(\cdot|x_{1:t}^{l})$ It is easy to see that for any $\sigma$ policy profile of the players, there exists an equivalent $\psi$ profile of the common agent (and vice versa) that generates the same control actions for every realization of the information of the players.

Here, we will consider Markovian common agent's policy as follows. We call a common agent's policy be of ``type $\theta$" if the common agent observes the mean field population state $z_t$ and common belief $\pi_t$, and generates prescription functions $\gamma_t = (\gamma_t^l,\gamma_t^f) = \theta_t[\pi_t,z_t]$. 
The follower $i$ uses prescription function $\gamma_t^{f,i}$ to operate on its current private type $x_t^{f,i}$ to produce its action $a_t^{f,i}$, i.e. $\gamma_t^{f,i} : \cX^{f,i}\to \cP(\cA^{f,i})$ and $a_t^{f,i} \sim\gamma_t^{f,i}(\cdot|x_t^{f,i})$. The leader uses prescription function $\gamma_t^{l}$ to operate on its current private type $x_t^{l}$ to produce its action $a_t^{l}$, i.e. $\gamma_t^{l} : \cX^{l}\to \cP(\cA^{l})$ and $a_t^{l} \sim\gamma_t^{l}(\cdot|x_t^{l})$.

Then the mean field is updated as
\eq{z_{t+1}(\cdot) = \sum_{x_t^f,a_t}z_t(x_t^f)\pi_t(x_t^l) Q( \cdot|z_t,x_t^f, a_t^f,a_t^l)\gamma^f_t(a^f_t|x_t^f)\gamma_t^l(a_t^l|x_t^l) \label{eq:phi_def}
}
We also call the above equation as $z_{t+1} = \phi(\pi_t,z_t,\gamma_t)$

Furthermore we define a common belief $\pi_t$ on the leader's state $x_t^l$ such that $\pi_t(x_t^l) = P^{\theta}(x_t^l|z_{1:t},a_{1:t}^l)$. In the following lemma, we show that the belief $\pi_t$ can be updated using Bayes' rule.
\begin{lemma}
There exists a function $F$ independent of the strategy $\theta$ such that 
\eq{
\pi_{t+1}=F(\pi_t,z_{t},\gamma^l_t,a_t^l)
}
\end{lemma}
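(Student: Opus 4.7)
The plan is a two-stage Bayesian filtering argument on the common information. Starting from the definition
\begin{equation}
\pi_{t+1}(x_{t+1}^l) = P^{\theta}(x_{t+1}^l \mid z_{1:t+1}, a_{1:t}^l),
\end{equation}
I would first show that the new observation $z_{t+1}$ is redundant. Under the common-agent model the prescription $\gamma_t = \theta_t[\pi_t, z_t]$ is a deterministic function of $(\pi_t,z_t)$, and by (\ref{eq:phi_def}) the mean-field update $z_{t+1} = \phi(\pi_t, z_t, \gamma_t)$ is in turn a deterministic function of $(\pi_t, z_t, \gamma_t)$. Since $(\pi_t, z_t, \gamma_t)$ is measurable with respect to $(z_{1:t}, a_{1:t-1}^l)$ together with the fixed strategy $\theta$, conditioning on $z_{t+1}$ adds nothing, so
\begin{equation}
\pi_{t+1}(x_{t+1}^l) = P^{\theta}(x_{t+1}^l \mid z_{1:t}, a_{1:t}^l).
\end{equation}

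The second step is to split this into a Bayes update on the new action $a_t^l$ followed by a one-step propagation under the exogenous kernel $Q$. Since the leader draws $a_t^l \sim \gamma_t^l(\cdot \mid x_t^l)$ with $\gamma_t^l$ determined by the common information, $P^{\theta}(a_t^l \mid x_t^l, z_{1:t}, a_{1:t-1}^l) = \gamma_t^l(a_t^l \mid x_t^l)$, so Bayes' rule gives
\begin{equation}
P^{\theta}(x_t^l \mid z_{1:t}, a_{1:t}^l) = \frac{\gamma_t^l(a_t^l \mid x_t^l)\,\pi_t(x_t^l)}{\sum_{\tilde{x}_t^l} \gamma_t^l(a_t^l \mid \tilde{x}_t^l)\,\pi_t(\tilde{x}_t^l)}.
\end{equation}
Applying the law of total probability over $x_t^l$ together with the Markov kernel for the leader's type (whose dependence on follower actions in the mean-field limit is absorbed through the $z_t$ argument), I obtain
\begin{equation}
\pi_{t+1}(x_{t+1}^l) = \sum_{x_t^l} Q(x_{t+1}^l \mid z_t, a_t^l, x_t^l)\, \frac{\gamma_t^l(a_t^l \mid x_t^l)\,\pi_t(x_t^l)}{\sum_{\tilde{x}_t^l} \gamma_t^l(a_t^l \mid \tilde{x}_t^l)\,\pi_t(\tilde{x}_t^l)}.
\end{equation}
The right-hand side depends only on $(\pi_t, z_t, \gamma_t^l, a_t^l)$ and on the fixed kernel $Q$; this defines $F$. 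The strategy $\theta$ has been eliminated because $\gamma_t^l$ has been promoted to an explicit input.

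The main obstacle I foresee is the first step, namely cleanly justifying that $z_{t+1}$ is deterministic given $(\pi_t, z_t, \gamma_t)$ and therefore carries no information about $x_t^l$. This is the mean-field law of large numbers in disguise: although each follower's next type is random, the empirical distribution concentrates on its conditional expectation, producing the deterministic flow (\ref{eq:phi_def}). Once this measurability fact is granted, the remainder is routine conditional-probability algebra. A minor bookkeeping issue is the possibility that the Bayes denominator $\sum_{\tilde{x}_t^l} \gamma_t^l(a_t^l \mid \tilde{x}_t^l)\,\pi_t(\tilde{x}_t^l)$ vanishes on off-equilibrium paths; on that $P^{\theta}$-null set $F$ can be defined arbitrarily (e.g., set to $\pi_t$) without affecting the almost-sure identity.
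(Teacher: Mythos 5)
Your proof is correct and follows essentially the same route as the paper's: cancel the mean-field transition factor $\phi(z_{t+1}|\pi_t,z_t,\gamma_t)$, which is determined by the common information, and then apply Bayes' rule in $a_t^l$ with the kernel $Q$. In fact your final formula, which conditions on the observed $a_t^l$ rather than summing over it, is the form the paper actually uses later (in the step yielding \eqref{eq:F6}), whereas the displayed derivation in \eqref{eq:piupdate} erroneously sums over $a_t^l$; your treatment of the measurability of $z_{t+1}$ and of the vanishing-denominator case is also more careful than the paper's.
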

\begin{IEEEproof}
\eq{
\pi_{t+1}(x_{t+1}^l) &=  P^{\theta}(x_{1+t}^l|z_{1:t+1},a_{1:t}^l)\\ 
&=\frac{\displaystyle \sum_{x_t^l,a_t^l}\pi_t(x_t^l)\phi(z_{t+1}|\pi_t,z_t,\gamma_t)\gamma_t^l(a_t^l|x_t^l)Q(x_{t+1}^l|z_t,x_t^l,a_t^l)}{\displaystyle \sum_{x_t^l,a_t^l}\pi_t(x_t^l)\phi(z_{t+1}|\pi_t,z_t,\gamma_t)\gamma_t^l(a_t^l|x_t^l)}\\
&=\frac{\displaystyle \sum_{x_t^l,a_t^l}\pi_t(x_t^l)\gamma_t^l(a_t^l|x_t^l)Q(x_{t+1}^l|z_t,x_t^l,a_t^l)}{\displaystyle \sum_{x_t^l,a_t^l}\pi_t(x_t^l)\gamma_t^l(a_t^l|x_t^l)}
\label{eq:piupdate}
}
\end{IEEEproof}


In the next section, we design an algorithm to compute SMFE of the game.

\section{Algorithm for SMFE computation}
\label{sec:Result}
\subsection{Backward Recursion}
In this section, we define an equilibrium generating function $\theta=(\theta^{f,i}_t)_{i\in\{l,f\},t\in[T]}$, where $\theta^{f,i}_t :  \mathcal{P}(\cX^l)\times  \mathcal{P}(\cX^f) \to \big\{\cX^{f,i} \to \mathcal{P}(\cA^{f,i}) \big\}$ and a sequence of functions $(V_t^l,V_t^f)_{t\in \{ 1,2, \ldots T+1\}}$, where $V_t^{l}:  \mathcal{P}(\cX^l)\times \mathcal{P}(\cX^f)\times \cX^{l} \to \mathbb{R}$ and  $V_t^{f,i}:  \mathcal{P}(\cX^l)\times \mathcal{P}(\cX^f)\times \cX^{f,i} \to \mathbb{R}$, in a backward recursive way, as follows. 
\begin{itemize}
\item[1.] Initialize $\forall \pi_{T+1}\in\mathcal{P}(\mathcal{X}^l), z_{T+1}\in \mathcal{P}(\cX^f), x_{T+1}^{l}\in \cX^l, x_{T+1}^{f}\in \cX^f,$
\eq{
V^l_{T+1}(\pi_{T+1},z_{T+1},x_{T+1}^l) &\defeq 0.   \label{eq:VT+1}\\
V^f_{T+1}(\pi_{T+1},z_{T+1},x_{T+1}^f) &\defeq 0
}
\item[2.] For $t = T,T-1, \ldots 1, \ \forall \pi_t\in\mathcal{P}(X^l), z_t\in \mathcal{P}(\cX^f)$, let $\theta_t[\pi_t,z_t]$ be generated as follows. Set $\tilde{\gamma}_t = \theta_t[\pi_t,z_t]$, where $\tilde{\gamma}_t =( \tilde{\gamma}_t^l,\tilde{\gamma}_t^f)$ is the solution of the following fixed-point equation. For a given $\pi_t,z_t,\gamma_t^l$, define $\bar{BR}_t^f(\pi_t,z_t,\gamma_t^l)$ as follows, 
\seq{
  \eq{
 \bar{BR}_t^f(\pi_t,z_t,\gamma_t^l) &:=\big\{ \tgamma_t^f: \forall x_t^f\in \cX^f, \tgamma_t^f(\cdot|x_t^f)\in  \arg\max_{\gamma^f_t(\cdot|x_t^f)}\nn\\
 &\hspace{-2cm} \E^{\gamma^f_t(\cdot|x_t^f) {\gamma}^{l}_t,\,z_t,\pi_t} 
\big\{ R_t^f(z_t, X_t^l, X^f_t,A_t) +\delta V_{t+1}^f(F(\pi_t,z_t,\gamma_t^l,A_t^l),\phi(\pi_t, z_t,\gamma_t^l,\tilde{\gamma}^f_t), X^f_{t+1}) \big\lvert \pi_t, z_t,x_t^f \big\}  \big\}, \label{eq:m_FP}
}
where expectation in (\ref{eq:m_FP}) is with respect to random variables $(X_t^l,A_t,X^f_{t+1})$ through the measure\\
$\pi_t(x_t^l)\gamma^f_t(a^f_t|x^f_t) {\gamma}^{l}_t(a^{l}_t|x_t^l)$ $Q(x^f_{t+1}|x_t^l,x^f_t,a_t^l,a_t^f)$ and $\phi$ is defined in \eqref{eq:phi_def}. 

Then let for all $\pi_t,z_t$, $\theta[\pi_t,z_t] =(\tgamma_t^l,\tgamma_t^f)$ is a solution of the following fixed-point equation (if it exists),
\eq{
\tgamma_t^f &\in \bar{BR}_t^f(\pi_t,z_t,\tgamma_t^l)  \label{eq:FP1}
}
and 
\eq{
\tgamma_t^l &\in \arg\max_{\gamma_t^l}\E^{ \hat{\gamma}_t^f{\gamma}^{l}_t,\,z_t} \big\{ R_t^l(z_t,X^l_t,A_t^l) +\delta V_{t+1}^l(F(\pi_t,z_t,\gamma^l_t,A^l_t),\phi(\pi_t,z_t,\gamma_t^l,\hat{\gamma}_t^f),X_{t+1}^l)|\pi_t,z_t,x_t^l\big\},  \label{eq:FP2}\\
&\text{where } \hat{\gamma}_t^f\in \bar{BR}_t^f(\pi_t, z_t,\gamma_t^l)
}
where the above expectation is defined with respect to random variables $(X_t^l,X^f_t,A_t)$ through the measure $\pi_t(x_t^l)z_t(x^f_t)\hat{\gamma}^f_t(a^f_t|x_t^f) {\gamma}^{l}_t(a^{l}_t|x_t^l)Q(x_{t+1}^l|z_t,x_t^l,x_t^f,a_t^l,a_t^f)$, and $\hat{\gamma}^f_t\in BR_t^f(\pi_t,z_t,\gamma_t^l)$.
\label{eq:SMFE_FP}
}
Let $(\tgamma_t^l,\tgamma_t^f)$ be a pair of solution of the above operation. Then set $\forall x_t^f\in\cX^f$,
 \seq{
 \label{eq:Vdef}
  \eq{
  V^f_{t}(\pi_t,z_t,x_t^f) \defeq  &\;\E^{\tilde{\gamma}^{f}_t(\cdot|x_t) \tilde{\gamma}^{l}_t}\big\{ {R}_t^f (z_t,X^f_t,X_t^l,A_t) + \delta V_{t+1}^f (F(\pi_t,z_t,\tgamma^l_t,A_t),\phi(\pi_t,z_t,\tgamma_t^f, \tilde{\gamma}^l_t), X_{t+1}^f)\big\lvert \pi_t,z_t, x_t^f \big\}. 
  \\
   V^l_{t}(\pi_t,z_t,x_t^l) \defeq  &\;\E^{\tilde{\gamma}^{f}_t \tilde{\gamma}^{l}_t}\big\{ {R}_t^l (z_t,X^l_t,A_t^l) + \delta V_{t+1}^l (F(\pi_t,z_t,\tgamma^l_t,A_t),\phi(\pi_t,z_t,\tgamma_t^l, \tilde{\gamma}^f_t),X_{t+1}^l)\big\lvert \pi_t, z_t, x_t^l \big\}
   }
   }
   \end{itemize}

Based on $\theta$ defined in the backward recursion above, we now construct a set of strategies $\tsigma$ through forward induction as follows. 

For $t =1,2 \ldots T, \pi_t, z_{t}, x_{1:t}^f \in(\cX^f)^t,x_{1:t}^l \in(\cX^l)^t,a_{1:t-1}^l\in(\cA^l)^{t-1}$
\eq{
\pi_1(x_1^l) = Q(x_t^l)
}
\eq{
\tsigma_{t}^{f}(a_{t}^{f}|z_{1:t},a_{1:t-1}^l, x_{1:t}^{f}) &:= \theta_{t}^{f}[\pi_t,z_t](a^{f}_{t}| x_{t}^{f})\\
\tsigma_{t}^{l}(a_{t}^{l}|z_{1:t},a_{1:t-1}^l,x_{1:t}^l) &:= \theta_{t}^{l}[\pi_t,z_t](a^{l}_{t}|x_t^l)  \\
\pi_{t+1} &= F(\pi_t,z_t,\theta_t^l[\pi_t,z_t],a_t^l)\\
z_{t+1} &= \phi(\pi_t,z_t,\theta_t[\pi_t,z_t])
}

\begin{theorem}
\label{Thm:Main}
A strategy profile $\tsigma$, as constructed through backward-forward recursion algorithm above is an SMFE of the game
\end{theorem}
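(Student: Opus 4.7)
The plan is to verify the three defining conditions of an SMFE in Definition~\ref{def:MFE} separately, exploiting the nested backward recursion used to construct $\theta$ and the forward construction of $\tilde{\sigma}$.

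First, I would dispatch condition (b) directly from the construction. By the forward induction, $z_{t+1}=\phi(\pi_t,z_t,\theta_t[\pi_t,z_t])$, and unrolling the definition of $\phi$ in \eqref{eq:phi_def} under $\tilde{\sigma}^l,\tilde{\sigma}^f$ exactly reproduces the recursion defining $\Lambda(\tilde{\sigma}^l,\tilde{\sigma}^f)$. The belief update $\pi_{t+1}=F(\pi_t,z_t,\tilde{\gamma}^l_t,a_t^l)$ coincides with the Bayes update of the lemma, so the mean-field trajectory and common belief trajectory are consistent with the players' strategies.

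For condition (a), I would prove by backward induction on $t$ the statement: \emph{for every follower deviation $\sigma^f$, and every realization of $(\pi_t,z_t,x_t^f)$ along the equilibrium trajectory of $(\tilde{\sigma}^l,\tilde{\sigma}^{-f})$,}
\[
\E^{\sigma^f,\tilde{\sigma}^{-f},\tilde{\sigma}^l}\Big[\sum_{n=t}^T \delta^{n-t} R_n^f \,\Big|\, \pi_t,z_t,x_t^f\Big] \;\le\; V_t^f(\pi_t,z_t,x_t^f),
\]
with equality when $\sigma^f=\tilde{\sigma}^f$. The crucial observation is that a single follower has infinitesimal mass, so no deviation by one follower changes $z$ or $\pi$; the follower faces a standard MDP with leader's policy $\tilde{\sigma}^l$ and exogenous mean field $z$. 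The inductive step is immediate from \eqref{eq:FP1}, which picks $\tilde{\gamma}_t^f(\cdot|x_t^f)$ to maximize the one-stage reward plus $\delta V_{t+1}^f$ pointwise in $x_t^f$. Standard one-shot-deviation-principle arguments upgrade this to optimality against all history-dependent $\sigma^f$.

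For condition (c), I would again use backward induction on $t$, now with the statement: \emph{for every leader deviation $\sigma^l$, and every $(\pi_t,z_t,x_t^l)$,}
\[
\E^{\sigma^l,\hat{\sigma}^f}\Big[\sum_{n=t}^T \delta^{n-t} R_n^l \,\Big|\, \pi_t,z_t,x_t^l\Big] \;\le\; V_t^l(\pi_t,z_t,x_t^l),
\]
where $\hat{\sigma}^f \in BR^f(\Lambda(\sigma^l,\hat{\sigma}^f),\sigma^l)$ is the followers' best response to the leader's policy. Fix a candidate one-shot deviation $\gamma_t^l$ at time $t$ and let the leader revert to $\tilde{\gamma}^l$ thereafter; by step~(a), the followers' best-response at time $t$ is some $\hat{\gamma}_t^f\in \bar{BR}_t^f(\pi_t,z_t,\gamma_t^l)$, and from $t+1$ onward the continuation value is at most $V_{t+1}^l(F(\pi_t,z_t,\gamma_t^l,a_t^l),\phi(\pi_t,z_t,\gamma_t^l,\hat{\gamma}_t^f),x_{t+1}^l)$ by the inductive hypothesis. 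The fixed-point~\eqref{eq:FP2} precisely maximizes this object over $\gamma_t^l$, so no such deviation improves upon $\tilde{\gamma}_t^l$, closing the induction.

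The main obstacle will be the leader step. Unlike in the follower argument, a leader deviation changes both $\pi$ (via $F$) and $z$ (via $\phi$), and hence changes the followers' equilibrium response at every future time; I need to be careful that the single-stage maximization in \eqref{eq:FP2} is consistent with, rather than an under-specification of, the global leader's problem. Two points will require justification: (i) the one-shot deviation principle for the leader in this Stackelberg setting, which follows because $V_{t+1}^l$ is defined as the leader's continuation value under \emph{any} reachable $(\pi_{t+1},z_{t+1},x_{t+1}^l)$, so any multi-period deviation can be decomposed into single-period deviations with the remaining deviations re-absorbed into the continuation; and (ii) the selection issue in the inner argmax $\bar{BR}_t^f(\pi_t,z_t,\gamma_t^l)$, which is handled because the theorem asserts SMFE only when the coupled fixed point \eqref{eq:FP1}--\eqref{eq:FP2} admits a solution, i.e.\ the leader's optimum is computed against the same $\hat{\gamma}_t^f$ that the followers would actually play.
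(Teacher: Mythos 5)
Your proposal is correct and follows essentially the same route as the paper: condition (b) is read off from the forward construction, and conditions (a) and (c) are established by backward induction on $t$ using the value functions $V_t^f,V_t^l$ and the fixed-point equations \eqref{eq:FP1}--\eqref{eq:FP2}, which is exactly the paper's Part~1/Part~2 decomposition (its Lemmas on $V_t=\,$equilibrium reward-to-go, the one-step deviation inequality, and the belief-update consistency are precisely the ingredients of your induction). The two caveats you flag for the leader step --- the one-shot-deviation decomposition and the selection of $\hat{\gamma}_t^f$ in the inner best-response set --- are the same points the paper's Appendix handles (the latter only implicitly), so your plan matches both the structure and the known weak spots of the original argument.
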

\begin{IEEEproof}
We will prove this theorem in two parts. In Part 1 for the follower, we prove that $\tsigma^f \in BR^f(z,\tsigma^l)$ i.e. $\ \forall \ t\in[T]$, $\forall \sigma^f, z_{1:t},a_{1:t-1}^l,x_{1:t}^f$
\eq{
 &\E^{\tsigma_{t:T}^l,\tsigma_{t:T}^f,\pi_t} \big\{ \sum_{n=t}^T \delta^{n-t}R_n^f(Z_n,X_n,A_n) |\pi_t,z_{1:t},a_{1:t-1}^l,x_{1:t}^f\big\} \geq \nn\\
 &\E^{\tsigma_{t:T}^l,\sigma_{t:T}^f,\pi_t} \big\{ \sum_{n=t}^T \delta^{n-t}R_n^f(Z_n,X_n,A_n) |\pi_t,z_{1:t},a_{1:t-1}^l,x_{1:t}^f\big\} \label{eq:Thm_f}.
}
In Part 2 for the leader, we show that $\forall z,t,\sigma^l,a_{1:t-1}^l,x_{1:t}^l$
\eq{
&\E^{\tsigma_{t:T}^l,\tsigma_{t:T}^f,\pi_t} \big\{ \sum_{n=t}^T \delta^{n-t}R_n^l(Z_n,X_n^l,A_n^l) |\pi_t,z_{1:t},a_{1:t-1}^l,x_{1:t}^l\big\} \geq\nn\\
 &\E^{\sigma^l,\hat{\sigma}^f,\pi_t} \big\{ \sum_{n=t}^T \delta^{n-t}R_n^l(Z_n,X_n^l,A_n^l) |\pi_t,z_{1:t},a_{1:t-1}^l,x_{1:t}^l\big\},\\
 &\text{where } \hat{\sigma}^f\in BR^f(z,\sigma^l)
}
where $\tsigma^f\in BR^f(z,\tsigma^l)$, as shown in Part 1.

Finally the process $z_{1:T}$ is consistent with $\tsigma^l,\tsigma^f$ such that $z=\Lambda(\tsigma^l,\tsigma^f)$. 
Combining the above parts prove the above result. The proof is presented in Appendix~C.
\end{IEEEproof}

In the following, we show that every Stackelberg mean field equilibrium can be found using the above backward recursion. This also enables us to comment on the existence of the solution of the fixed-point equation~\eqref{eq:SMFE_FP}.

\begin{theorem}
Suppose there exists an SMFE $(\tilde{\sigma}^l,\tsigma^f,z)$ that is a solution of the fixed point equation defined in Definition~\ref{def:MFE}. Then there exists an equilibrium generating function $\theta$ that satisfies \eqref{eq:SMFE_FP} in backward recursion $\forall \pi_t,z_t$
	such that  $(\tilde{\sigma}^l,\tsigma^f,z)$ is defined through forward recursion using $\theta$. This also implies that there exists a solution of~\eqref{eq:SMFE_FP} for each time $t$.
\end{theorem}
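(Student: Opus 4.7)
The plan is backward induction on $t$, using the hypothesized SMFE to construct both the equilibrium generating function $\theta$ and the associated value functions $(V_t^l, V_t^f)$ simultaneously, and showing at each step that the Markovian prescription extracted from $\tilde\sigma$ solves the fixed-point equation \eqref{eq:SMFE_FP}. Since the SMFE is already taken to be Markovian in $(\pi_t, z_t, x_t)$, it induces a natural prescription $\tilde\gamma_t[\pi_t, z_t]$ at every $(\pi_t, z_t)$ that appears along the equilibrium trajectory $z$; at remaining pairs one would pick any selector from the set of fixed-point solutions of \eqref{eq:SMFE_FP}, whose existence is handled separately at the end.

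At level $T+1$, set $V_{T+1}^l = V_{T+1}^f \equiv 0$ as in \eqref{eq:VT+1}. Assume inductively that $\theta_{t+1:T}$ and $V_{t+1}^l, V_{t+1}^f$ have been constructed so that they satisfy the backward recursion and, along the equilibrium trajectory, agree with the continuation values and prescriptions of $(\tilde\sigma^l, \tilde\sigma^f)$. For the follower step at time $t$, I would invoke a one-stage deviation principle: because $\tilde\sigma^f \in BR^f(z, \tilde\sigma^l)$, for any reached $(\pi_t, z_t)$ and every $x_t^f$, no deviation at time $t$ that reverts to $\tilde\sigma^f$ from $t+1$ onwards can improve the conditional continuation reward. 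Substituting the induction hypothesis $V_{t+1}^f$ for the continuation value, this is exactly the statement that $\tilde\gamma_t^f \in \bar{BR}_t^f(\pi_t, z_t, \tilde\gamma_t^l)$, which is \eqref{eq:FP1}. Defining $V_t^f$ through \eqref{eq:Vdef} then keeps the construction self-consistent.

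For the leader step, the argument is more delicate because a unilateral leader deviation at time $t$ triggers a different follower best response over the entire remaining horizon. The key observation is that, by the inductive hypothesis, $V_{t+1}^l$ is defined along the $\theta$-generated continuation, whose follower side already solves $\bar{BR}_{t+1:T}^f$ against the leader's continuation; so a one-step deviation $\gamma_t^l \neq \tilde\gamma_t^l$ at time $t$ followed by the $\theta$-continuation corresponds precisely to the leader's sub-game continuation with a matching follower best response $\hat\gamma_t^f \in \bar{BR}_t^f(\pi_t, z_t, \gamma_t^l)$. The optimality of the SMFE leader policy $\tilde\sigma^l \in BR^l(z)$ over all such one-stage deviations yields \eqref{eq:FP2}. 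Consistency with the forward recursion, $z_{t+1} = \phi(\pi_t, z_t, \tilde\gamma_t)$ and $\pi_{t+1} = F(\pi_t, z_t, \tilde\gamma_t^l, a_t^l)$, then follows from SMFE condition (b) and the update rule \eqref{eq:piupdate}.

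The hard part will be the leader's one-stage deviation argument, because, unlike in a standard dynamic program, the leader anticipates a follower response to the entire continuation policy, not merely to her instantaneous action; the resolution is that $V_{t+1}^l$ already encodes the $\bar{BR}^f$-consistent continuation by the inductive hypothesis, so a local deviation at $t$ suffices to witness optimality. A secondary issue is extending $\theta_t$ to $(\pi_t, z_t)$ pairs that do not arise along the SMFE trajectory, which is required to interpret ``$\forall \pi_t, z_t$''; for these off-path pairs I would invoke a standard compactness and selection argument (for instance, Kakutani applied to the joint best-response correspondence for $(\gamma_t^l, \gamma_t^f)$) to obtain a fixed-point solution of \eqref{eq:SMFE_FP}, and combine it with the on-path prescriptions to yield a fully defined $\theta$. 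The final assertion, that a solution of \eqref{eq:SMFE_FP} exists at every time $t$, is then a byproduct of the construction.
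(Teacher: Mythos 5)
Your construction is, at its core, the contrapositive of the paper's argument: the paper proves the result by contradiction, taking the \emph{first} time $t$ in the backward recursion at which \eqref{eq:FP1} or \eqref{eq:FP2} fails for the SMFE-induced prescriptions, and then lifting the profitable one-stage prescription deviation $\hat\gamma_t^f$ (resp.\ $\hat\gamma_t^l$) to a strategy deviation $\hat\sigma_t^f$ (resp.\ $\hat\sigma_t^l$) that strictly improves the reward-to-go, contradicting Definition~\ref{def:MFE}; your direct backward induction with a one-stage deviation principle establishes exactly the same localization, using the same supporting machinery (the identification of $V_{t+1}^f,V_{t+1}^l$ with the continuation rewards, i.e.\ Lemmas~\ref{lemma:1} and \ref{l_lemma:1}, and the belief-consistency fact of Lemma~\ref{lemma:3}), so the mathematical content coincides. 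Your treatment of the leader step is also aligned with the paper's: the resolution you describe --- that the follower's response to a leader deviation confined to time $t$ is captured by $\bar{BR}_t^f(\pi_t,z_t,\gamma_t^l)$ evaluated against the unchanged continuation value $V_{t+1}^f$ --- is precisely what \eqref{eq:FP2} and Lemma~\ref{l_lemma:2} encode. The one place you genuinely diverge is the handling of $(\pi_t,z_t)$ pairs not reached along the SMFE trajectory: the paper silently works only at beliefs of the form $\pi_t(\cdot)=P^{\tilde\sigma}(\cdot|z_{1:t},a_{1:t-1}^l)$ and does not address unreached pairs, whereas you propose a Kakutani selection there. Be careful with that step: the leader's stage objective in \eqref{eq:FP2} depends on a selection $\hat\gamma_t^f\in\bar{BR}_t^f(\pi_t,z_t,\gamma_t^l)$, and this correspondence is generally neither single-valued nor lower hemicontinuous in $\gamma_t^l$, so the leader's value map need not be continuous and Kakutani does not apply off the shelf; since the theorem's existence claim only asserts what follows from the hypothesized SMFE, you can (and should) restrict the final assertion to reached $(\pi_t,z_t)$, as the paper implicitly does, rather than lean on an unproved off-path fixed-point existence.
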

\begin{IEEEproof}
Suppose there exists an SMFE $(\tilde{\sigma}^l,\tsigma^f,z)$ of the game. The proof in Appendix~\ref{app:Proof_Exist} show that all SMFE can be found using backward/forward recursion. This proves that there exists a solution of~\eqref{eq:SMFE_FP} for every $t$.
\end{IEEEproof}

\textbf{Remark:}
When the leader is social welfare maximizing, her utility can be given by 
\eq{R^l(z_t,x_t^l,a_t^l,\gamma^f_t) = \sum_{x^f,a_t^f}z_t(x^f)\gamma_t^f(a_t^f|x_t^f)R^f(z_t,x_t^f,x_t^l,a_t^f).
}

\section{ Special case: When the leader doesn't have a private state}
In this paper, we consider the special case when the leader doesn't have a private state. The algorithm in previous section simplifies as follows.
\subsection{Algorithm for SMFE computation: Backward Recursion}
\label{s_sec:Result}
In this section, we define an equilibrium generating function $\theta=(\theta^{f,i}_t)_{i\in\{l,f\},t\in[T]}$, where $\theta^{f,i}_t :   \mathcal{P}(\cX^f) \to \big\{\cX^{f,i} \to \mathcal{P}(\cA^{f,i}) \big\}$ and a sequence of functions $(V_t^l,V_t^f)_{t\in \{ 1,2, \ldots T+1\}}$, where $V_t^{l}:  \mathcal{P}(\cX^f) \to \mathbb{R}$ and  $V_t^{f,i}:  \mathcal{P}(\cX^f)\times \cX^{f,i} \to \mathbb{R}$, in a backward recursive way, as follows. 

\begin{itemize}
\item[1.] Initialize $\forall  z_{T+1}\in \mathcal{P}(\cX^f), x_{T+1}^{f}\in \cX^f,$
\eq{
V^l_{T+1}(z_{T+1}) &\defeq 0.   \label{s_eq:VT+1}\\
V^f_{T+1}(z_{T+1},x_{T+1}^f) &\defeq 0
}
\item[2.] For $t = T,T-1, \ldots 1, \ \forall z_t\in \mathcal{P}(\cX^f)$, let $\theta_t[z_t]$ be generated as follows. Set $\tilde{\gamma}_t = \theta_t[z_t]$, where $\tilde{\gamma}_t =( \tilde{\gamma}_t^l,\tilde{\gamma}_t^f)$ is the solution of the following fixed-point equation. For a given $z_t,\gamma_t^l$, define $\bar{BR}_t^f(z_t,\gamma_t^l)$ as follows, 
  \eq{
 \bar{BR}_t^f(z_t,\gamma_t^l) &=\big\{ \tgamma_t^f: \forall x_t^f\in \cX^f, \tgamma_t^f(\cdot|x_t^f)\in  \arg\max_{\gamma^f_t(\cdot|x_t^f)}\nn\\
 &\hspace{-2cm} \E^{\gamma^f_t(\cdot|x_t^f) {\gamma}^{l}_t,\,z_t} 
\big\{ R_t^f(z_t, X^f_t,A_t) +\delta V_{t+1}^f(\phi( z_t,\gamma_t^l,\tilde{\gamma}^f_t), X^f_{t+1}) \big\lvert z_t,x_t^f \big\}  \big\}, \label{s_eq:m_FP}
}
where expectation in (\ref{s_eq:m_FP}) is with respect to random variables $(A_t,X^f_{t+1})$ through the measure\\
$\gamma^f_t(a^f_t|x^f_t) {\gamma}^{l}_t(a^{l}_t)$ $Q(x^f_{t+1}|x^f_t,a_t)$ and $\phi$ is defined in \eqref{eq:phi_def}. 

Then let for all $z_t$, $\theta[z_t] =(\tgamma_t^l,\tgamma_t^f)$ is a solution of the following fixed-point equation (if it exists),
\eq{
\tgamma_t^f &\in \bar{BR}_t^f(z_t,\tgamma_t^l)  \label{s_eq:FP1}
}
and 
\eq{
\tgamma_t^l &\in \arg\max_{\gamma_t^l}\E^{ \hat{\gamma}_t^f{\gamma}^{l}_t,\,z_t} \big\{ R_t^l(z_t,A_t^l) +\delta V_{t+1}^l(\phi(z_t,\gamma_t^l,\hat{\gamma}_t^f))|z_t\big\},  \label{s_eq:FP2}\\
&\text{where } \hat{\gamma}_t^f\in \bar{BR}_t^f( z_t,\gamma_t^l)
}
where the above expectation is defined with respect to random variables $(X^f_t,A_t)$ through the measure $z_t(x^f_t)\hat{\gamma}^f_t(a^f_t|x_t^f) {\gamma}^{l}_t(a^{l}_t)$, and $\hat{\gamma}^f_t\in BR_t^f(z_t,\gamma_t^l)$.


Let $(\tgamma_t^l,\tgamma_t^f)$ be a pair of solution of the above operation. Then set $\forall x_t^f\in\cX^f$,
  \eq{
  V^f_{t}(z_t,x_t^f) \defeq  &\;\E^{\tilde{\gamma}^{f}_t(\cdot|x_t) \tilde{\gamma}^{l}_t}\big\{ {R}_t^f (z_t,X^f_t,A_t) + \delta V_{t+1}^f (\phi(z_t,\tgamma_t^f, \tilde{\gamma}^l_t), X_{t+1}^f)\big\lvert z_t, x_t^f \big\}. 
  \label{s_eq:Vdef}\\
   V^l_{t}(z_t) \defeq  &\;\E^{\tilde{\gamma}^{f}_t \tilde{\gamma}^{l}_t}\big\{ {R}_t^l (z_t,A_t^l) + \delta V_{t+1}^l (\phi(z_t, \tilde{\gamma}^f_t,\tgamma_t^l)\big\lvert z_t \big\}
   }
   \end{itemize}

Based on $\theta$ defined in the backward recursion above, we now construct a set of strategies $\tsigma$ through  as follows. 

For $t =1,2 \ldots T,  z_{t}, x_{1:t}^f \in(\cX^f)^t$

\eq{
\tsigma_{t}^{f}(a_{t}^{f}|z_{1:t}, a_{1:t-1}^l,x_{1:t}^{f}) &:= \theta_{t}^{f}[z_t](a^{f}_{t}| x_{t}^{f})\\
\tsigma_{t}^{l}(a_{t}^{l}|z_{1:t},a_{1:t-1}^l) &:= \theta_{t}^{l}[z_t](a^{l}_{t})\\
z_{t+1} &= \phi(z_t,\theta_t[z_t])
}

\section{Numerical Examples}
 \label{sec:Example}
\subsection{Infection spread}
\label{sec:example1}

We consider a vaccination problem in a society with negative externalities. It is discretized version of the malware problem presented in~\cite{HuMa16,HUMa17,HuMa17cdc,JiAnWa11}. Some other applications of this model include security of cyberphysical systems, entry and exit of firms, investment, network effects. In this model, suppose there are large number of agents where each agent has a private state $x_t^i\in\{0,1\}$ where $x_t^i= 0$ represent `healthy' state and $x^i_t= 1$ is the infected state. Each node can take action $a_t^i\in\{0,1\}$, where $a_t^i= 0$ implies ``do nothing" and $a_t^i=1$ implies repair. The dynamics are given by
\eq{
x_{t+1}^i = \lb{x_t^i + (1-x_t^i)w_t^i\;\; \text{ for } a_t^i = 0\\
0 \hspace{63pt}\;\;\text{ for } a_t^i = 1.
}
}
where $w_t^i\in \{0,1\}$ is a binary valued random variable that represents probability of getting infected and is proportional to the population of infected nodes in the systems such that $P(w_t^i = 1) = qz(1)$. Also each infected node pays a cost $k$ and each node pays $\lambda$ to repair, 
\eq{
R^f_t(x^i_t,a^{f,i}_t,a_t^l,z_t) =-kx^i_t-c_t^l a_t^{f,i}.
}

where  $c_t^l$ is the cost of repair and $k$ represents the cost of being infected.
The leader is a social welfare maximizer that choose the cost of vaccines $c_t^l\in[0,C]$. The leader doesn't have a private type. The leader's reward is given by
\eq{
R^l(z_t,c_t^l) &= z_t(0) R^f_t(0,0,z_t)\gamma_t(0|0)+z_t(0) R^f_t(0,1,z_t)\gamma_t(1|0) +\nn\\\
&z_t(1) R^f_t(1,0,z_t)\gamma_t(0|1)+z_t(1) R^f_t(1,1,z_t)\gamma_t(1|1) +( c_t^l-c)\\
}
We pose it as an infinite horizon discounted dynamic game.
We consider parameters $k= 0.2,\delta = 0.9, q=0.9$ and $\lambda = 0.2$ for numerical results presented in Figures~4-7, and $\lambda  = 0.21$ in Figures~8-10. We note that the mean field state converges to a stationary mean-field value of 1 i.e. all agents get healthy.

\begin{figure}[htbp] 
  \centering
  \includegraphics[width=3in]{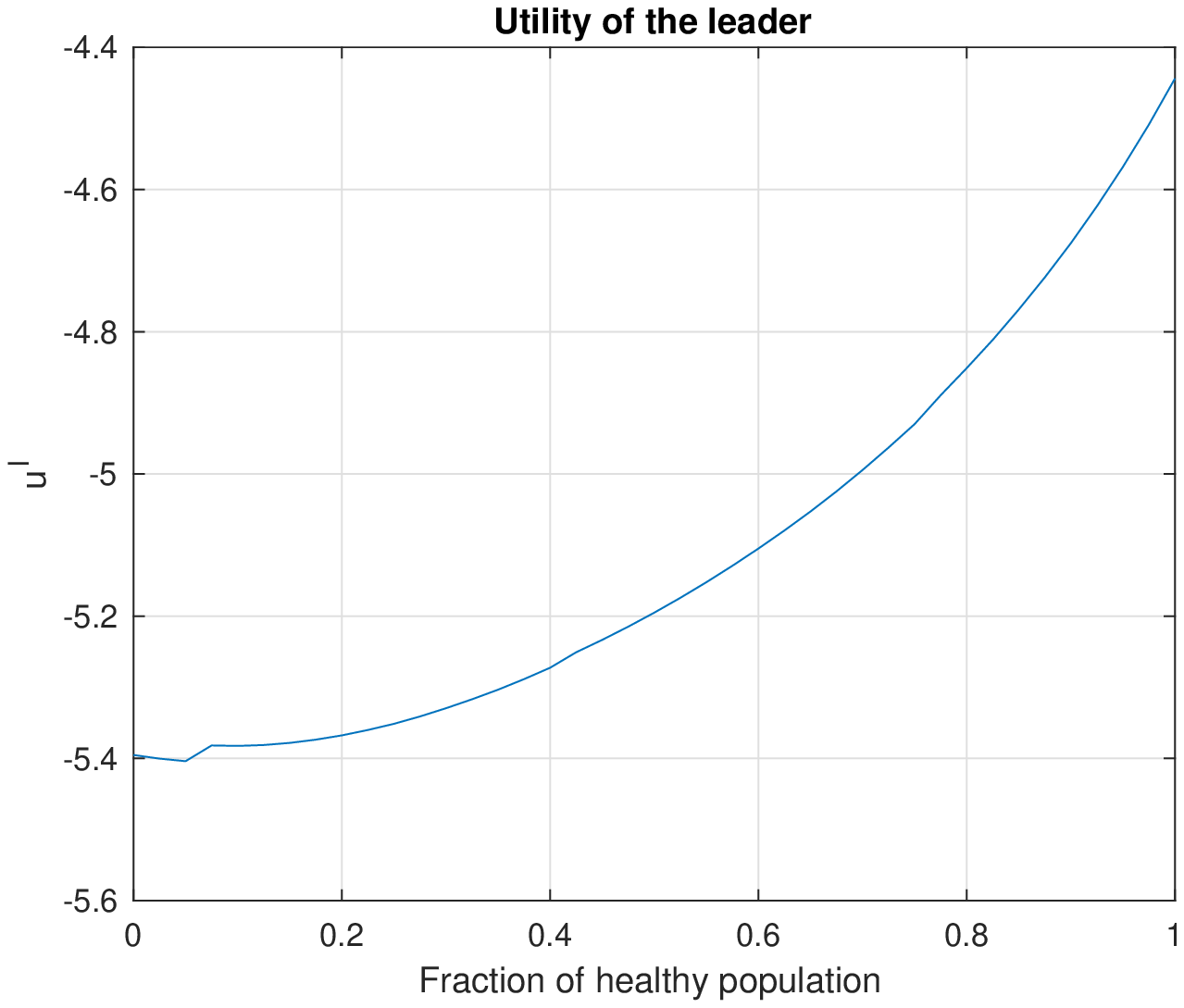} 
    \includegraphics[width=3in]{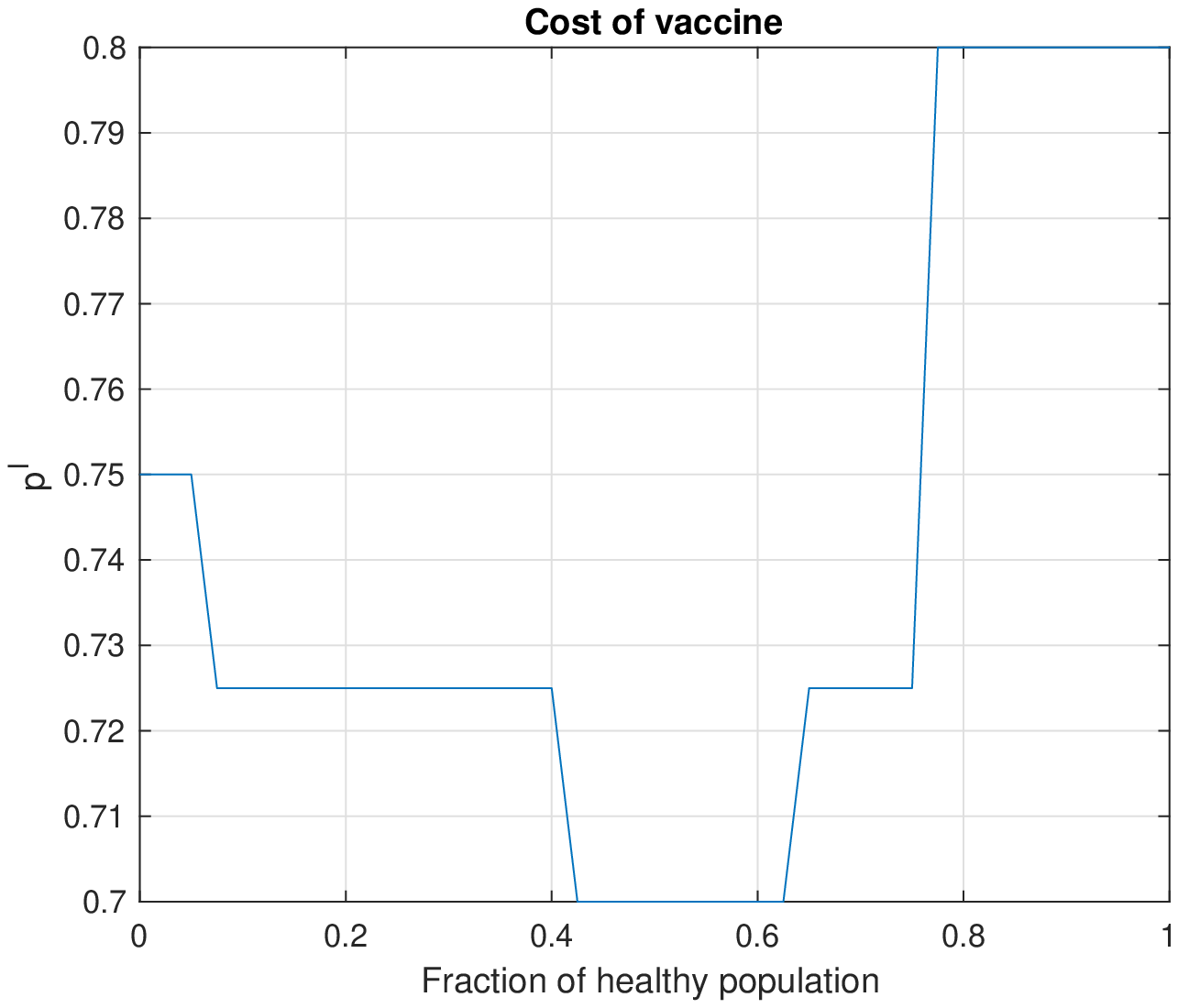} 
  \label{fig:example}
\end{figure}

\begin{figure}[htbp] 
  \centering
  \includegraphics[width=3in]{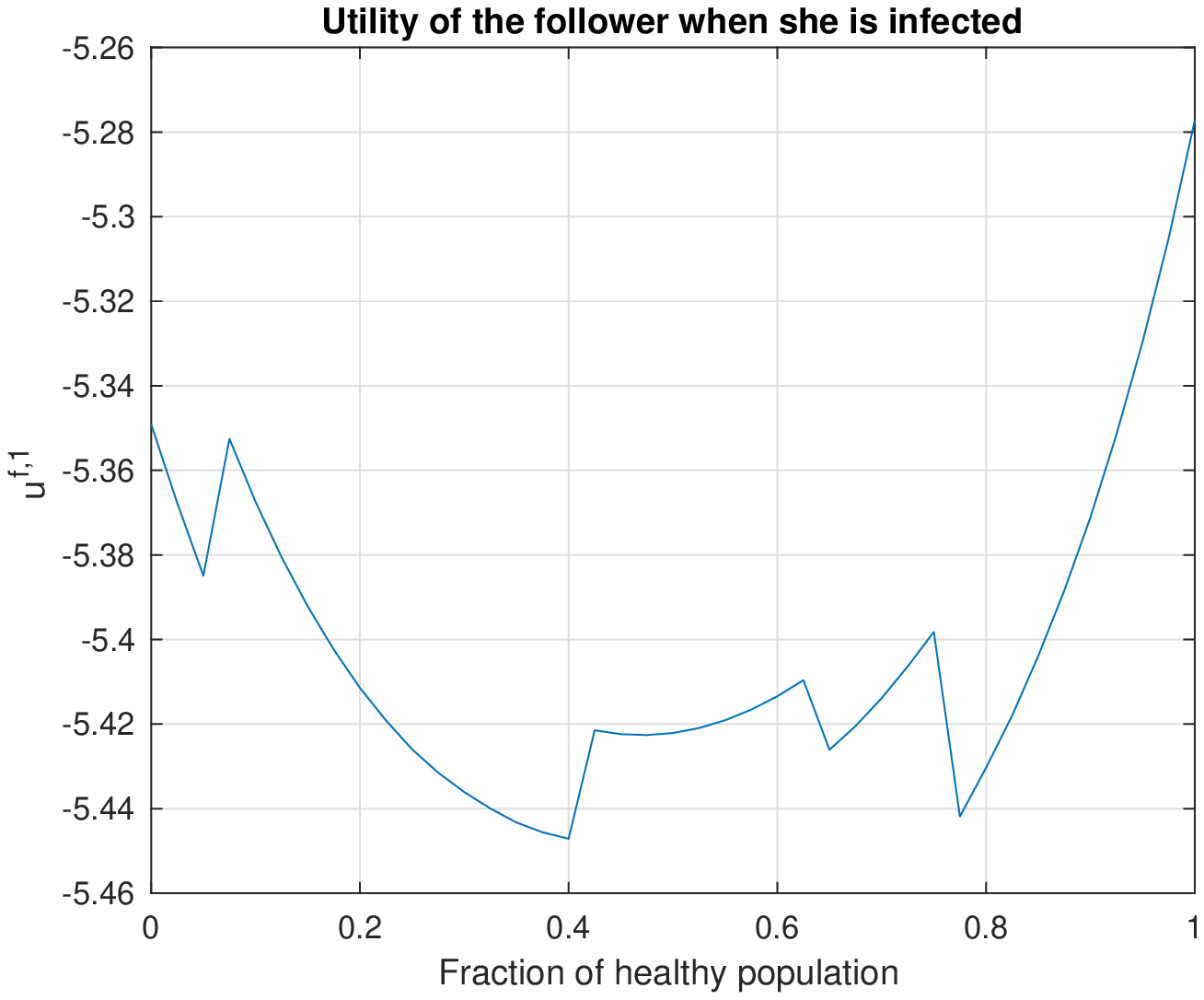} 
  \includegraphics[width=3in]{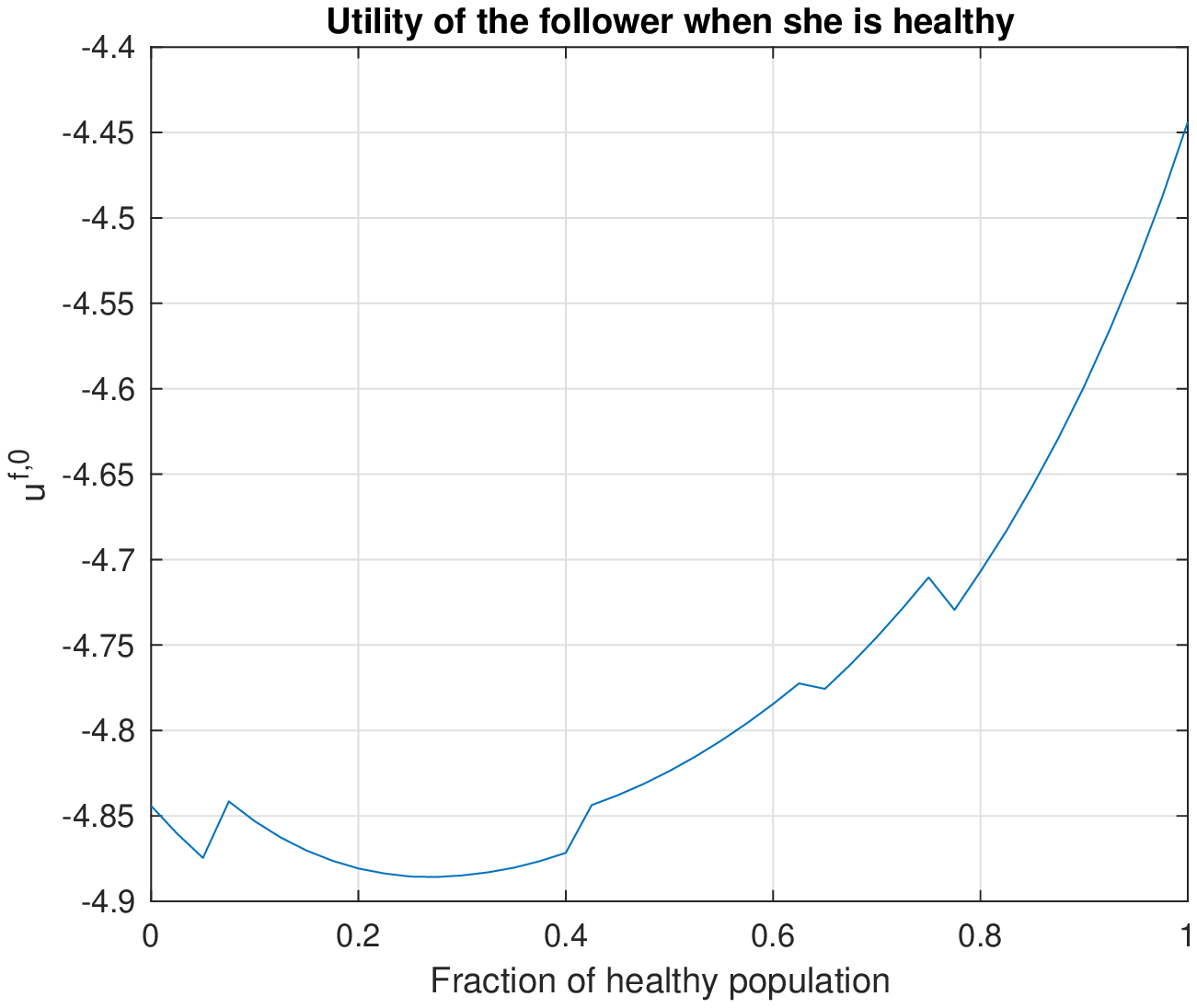}
  \label{fig:example}
\end{figure}

\begin{figure}[htbp] 
  \centering
  \includegraphics[width=3in]{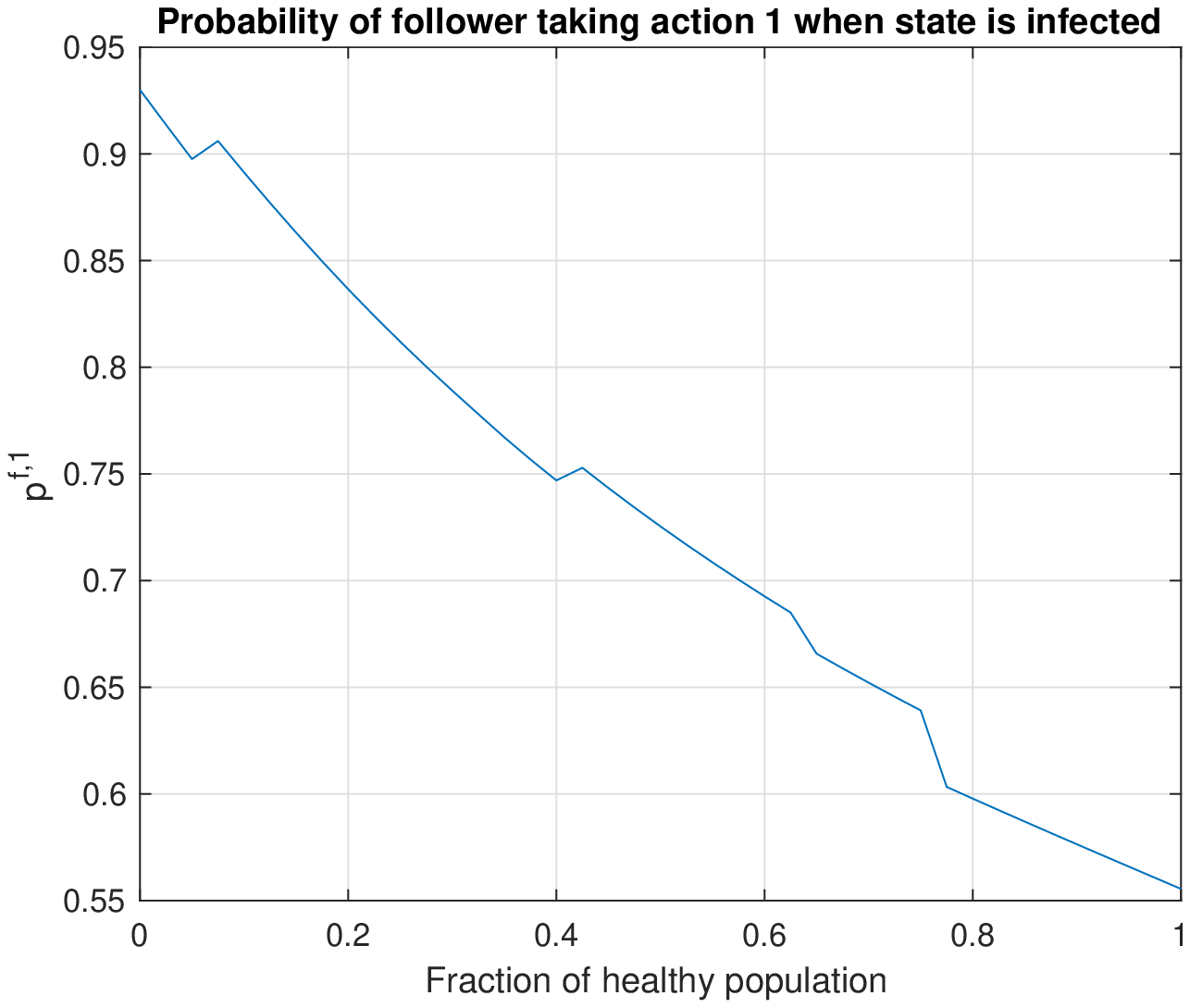} 
  \includegraphics[width=3in]{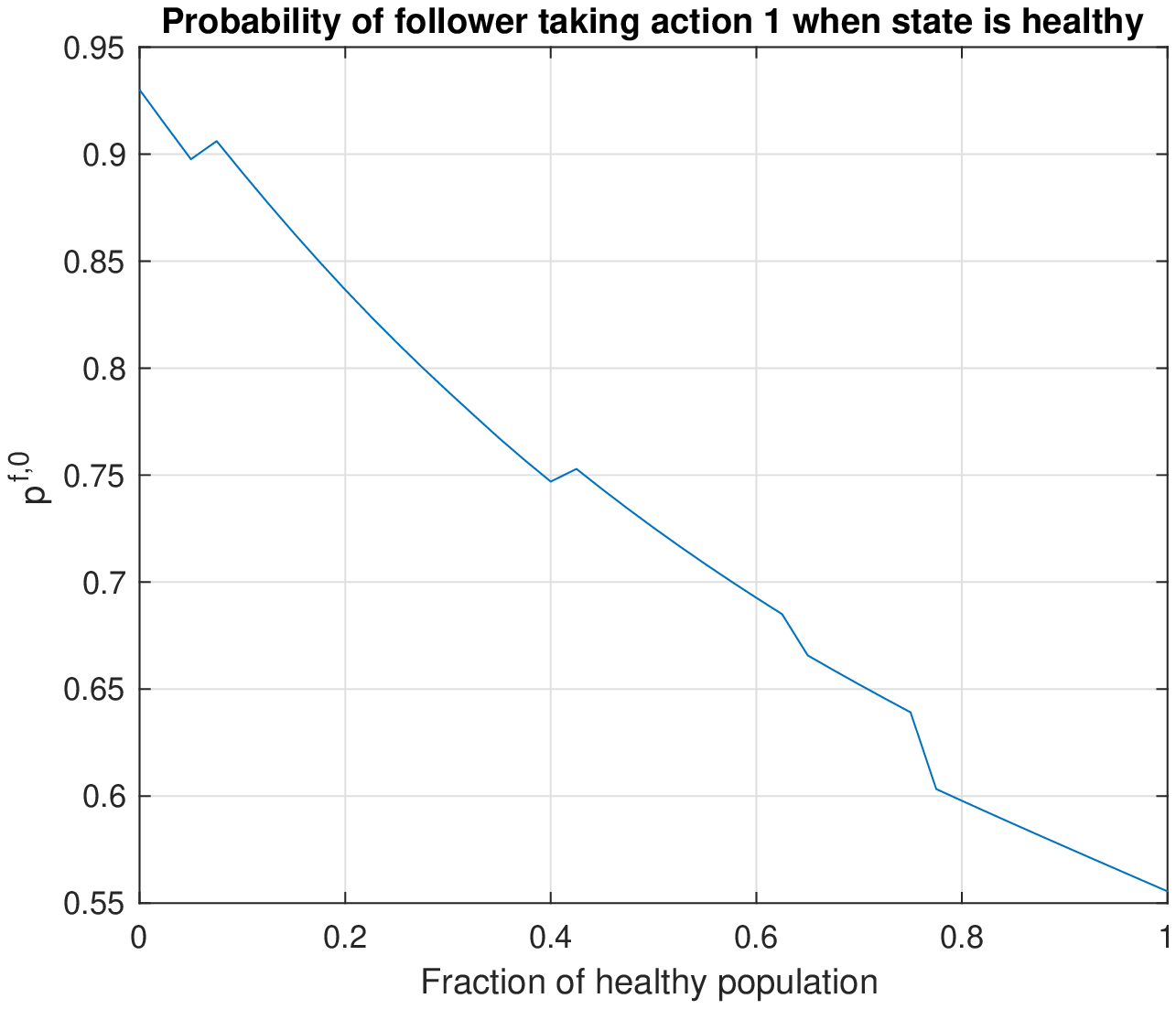}
  \label{fig:example}
\end{figure}

\begin{figure}[htbp] 
  \centering
  \includegraphics[width=3in]{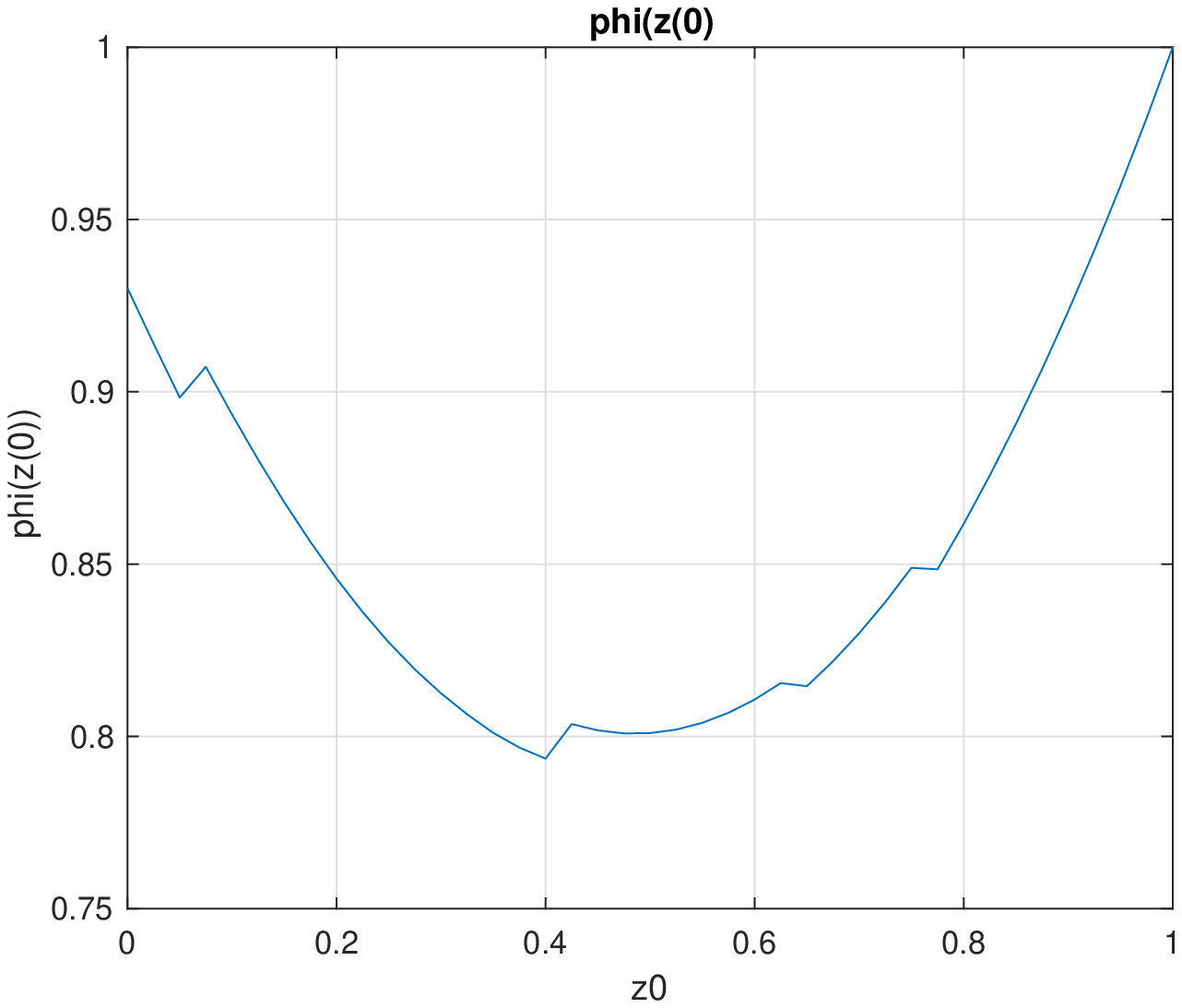} 
    \includegraphics[width=3in]{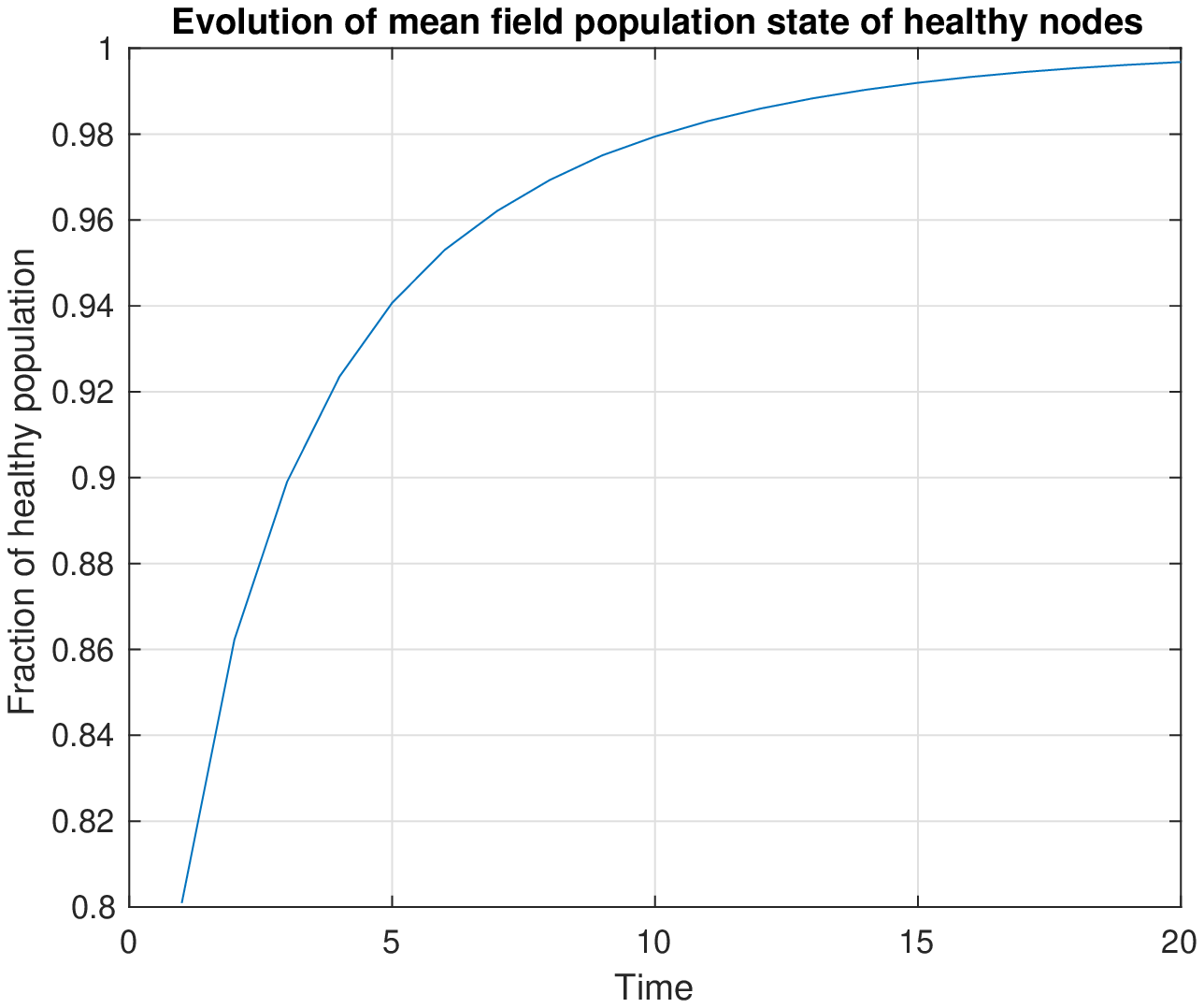} 
     \caption{$\gamma(1|0),\gamma(1|1)$: Probability of choosing repair when state $x^i=0,1$}
  \label{fig:example}
\end{figure}

\subsection{Technology adoption}

We consider a game where a firm decides the price of its product whose utility for a user depends on its personal preference and is also directly proportional to the number of other users who use that product. Each follower $i$ chooses one of the two technologies with action $a_t^{f,i} = \{-1,1\}$. She has a binary valued random variable $x_t^{f,i}\in\{-1,1\} $ that it privately observes which determines her preference for the product. The form is competing against another firm the price of whose product is fixed. The preferences of the players also evolve in an independent Markovian fashion such that
\eq{
P(x_{t+1}^{f,i}\neq x_t^{f,i}|a_t^{f,i}) =  \lb{ p^1 \text{  if  } a_t^{f,i} = x_t^{f,i}\\
 p^2 \text{  if  } a_t^{f,i} \neq x_t^{f,i},
}
}
where we assume that $p^1<p^2<1/2$. This indicates that there is some sense of ``stickiness" or inertia with the product such that if a follower has a preference for product A and choose the product A, the probability that her preference would change to the product B is lower than if she chose the product B in the first place.

Thus the mean field $z_t$ updates as
\eq{
z_{t+1}(1)&= 1- (z_t(1)\gamma_t(-1|1)Q(-1|1,-1)  +  z_t(1)\gamma_t(1|1)Q(-1|1,1)  + \nn\\
&z_t(-1)\gamma_t(-1|-1)Q(-1|-1,-1)  +  z_t(-1)\gamma_t(1|-1)Q(-1|-1,1)) \\
&=1- (z_t(1)\gamma_t(-1|1)p^2  +  z_t(1)\gamma_t(1|1)p^1  + z_t(-1)\gamma_t(-1|-1)(1-p^1)  +  z_t(-1)\gamma_t(1|-1)(1-p^2))
}
and she receives a utility 
\eq{
r_t^{f,i}(z_t,x_t^{f,i},a_t^i) = \lb{ x_t^{f,i}a_t^{f,i} + (2z_t(1)-1)a_t^{f,i} -c^1_t \text {   if    } a_t^i = 1\\
x_t^{f,i}a_t^{f,i} + (2z_t(1)-1)a_t^{f,i} -c^{-1}  \text {   if    } a_t^i = -1
}
}
where $ \{c_t^1,c_{-1}\}$ are the costs associated with actions $a_t^{f,i} = \{1,-1\}$ respectively. We assume that cost of product by the other firm is  $c^{-1}=-1$ is fixed and exogenous, however, the firm can decide the cost of the product 1.

The leader is profit maximizing and has utility
\eq{
r_t^{l}(z_t,a_t^l) = c_1\gamma_t^f(1|1)z_t(1) +c_1\gamma_t^f(1|-1)z_t(0) 
}

\begin{figure}[htbp] 
  \centering
  \includegraphics[width=3in]{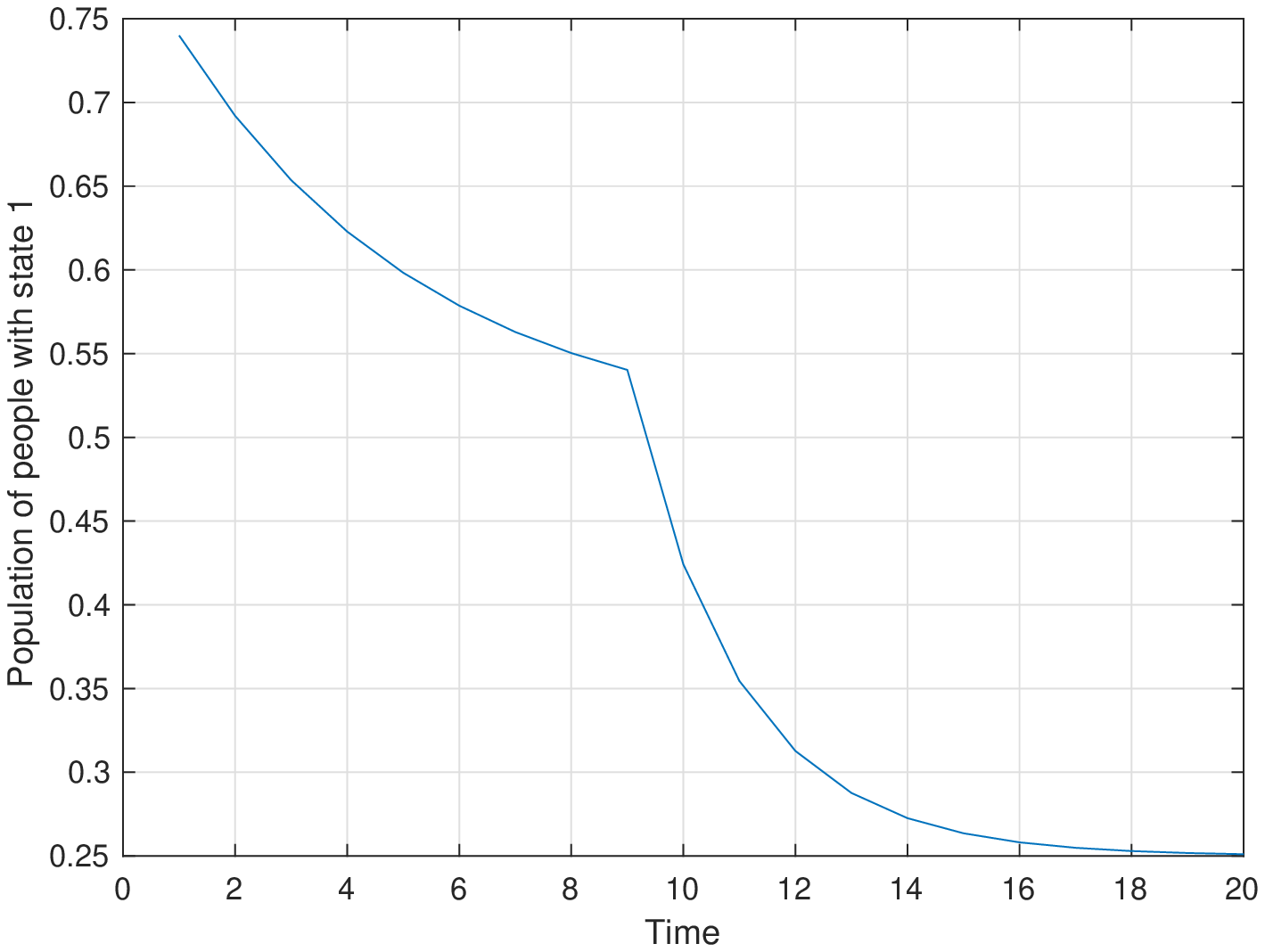} 
    \includegraphics[width=3in]{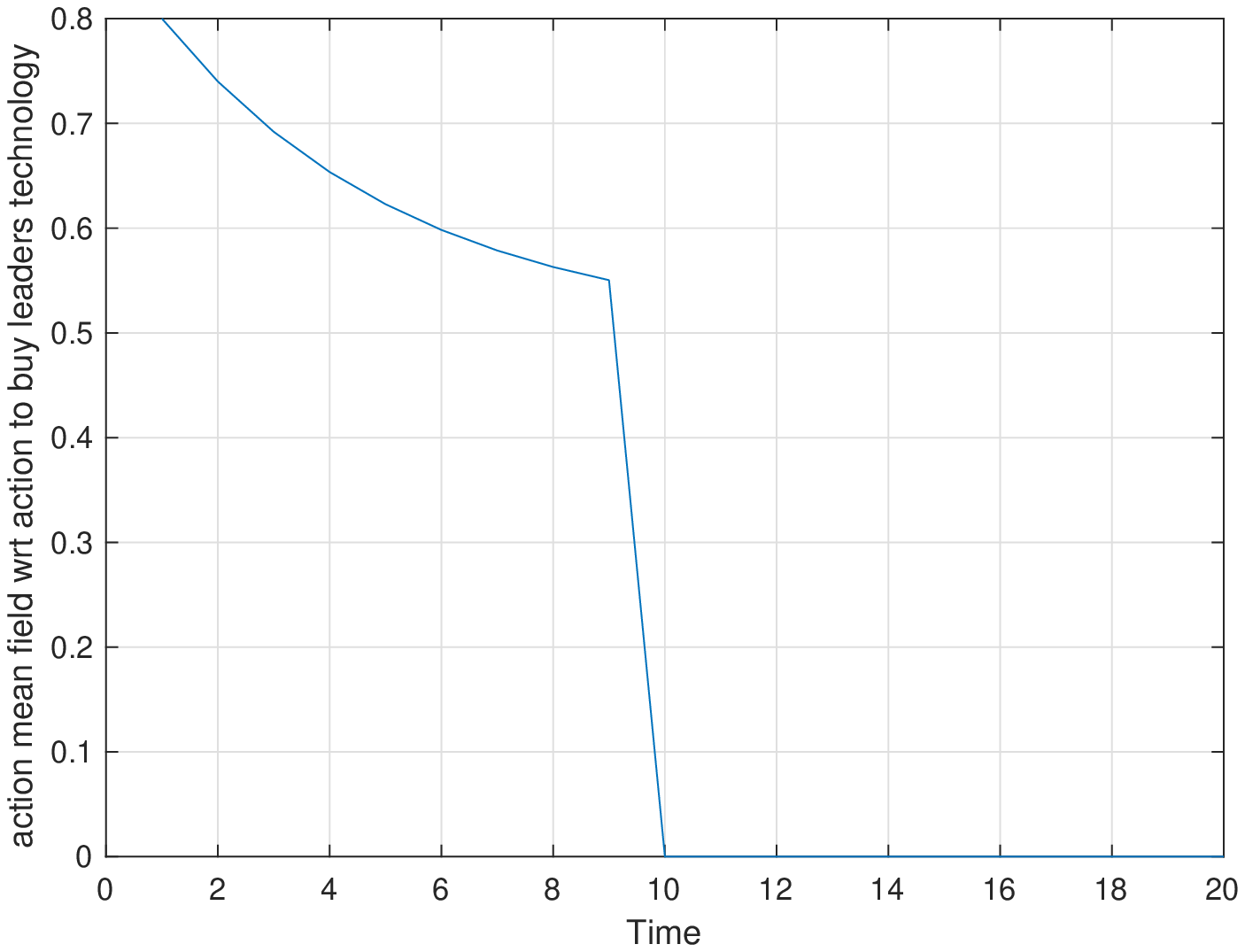} 
  \label{fig:example}
\end{figure}

\begin{figure}[htbp] 
  \centering
  \includegraphics[width=3in]{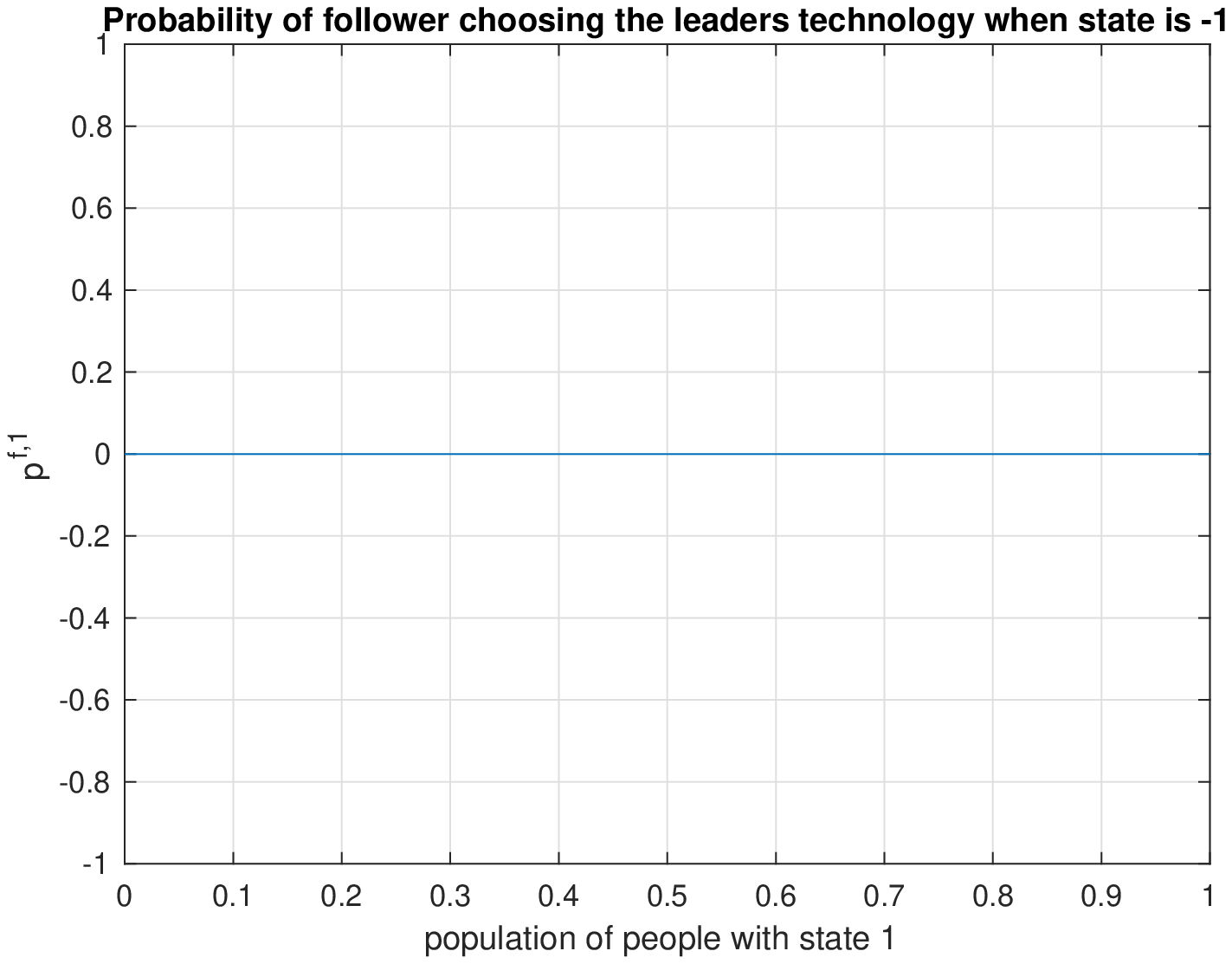} 
    \includegraphics[width=3in]{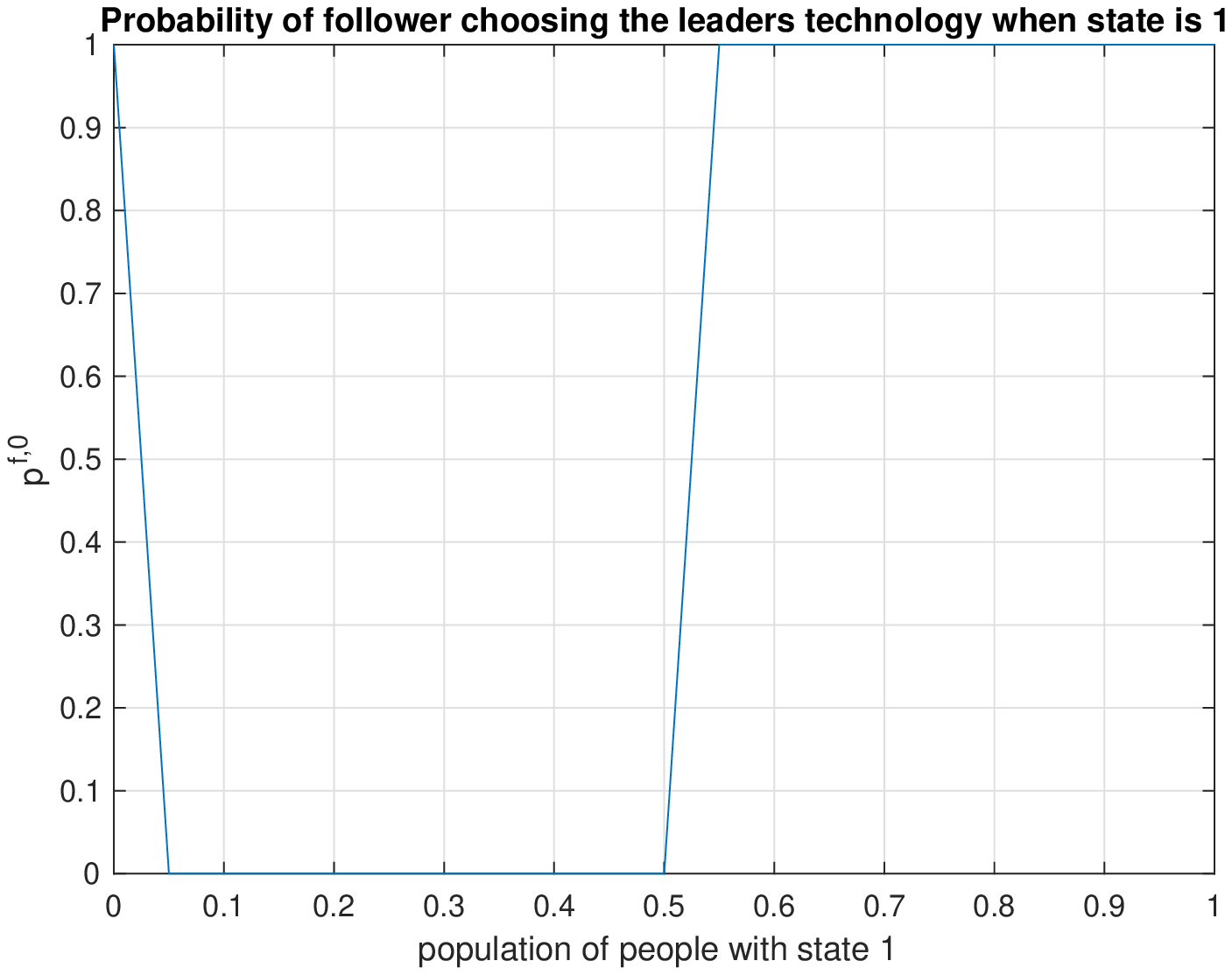} 
  \label{fig:example}
\end{figure}

\begin{figure}[htbp] 
  \centering
  \includegraphics[width=3in]{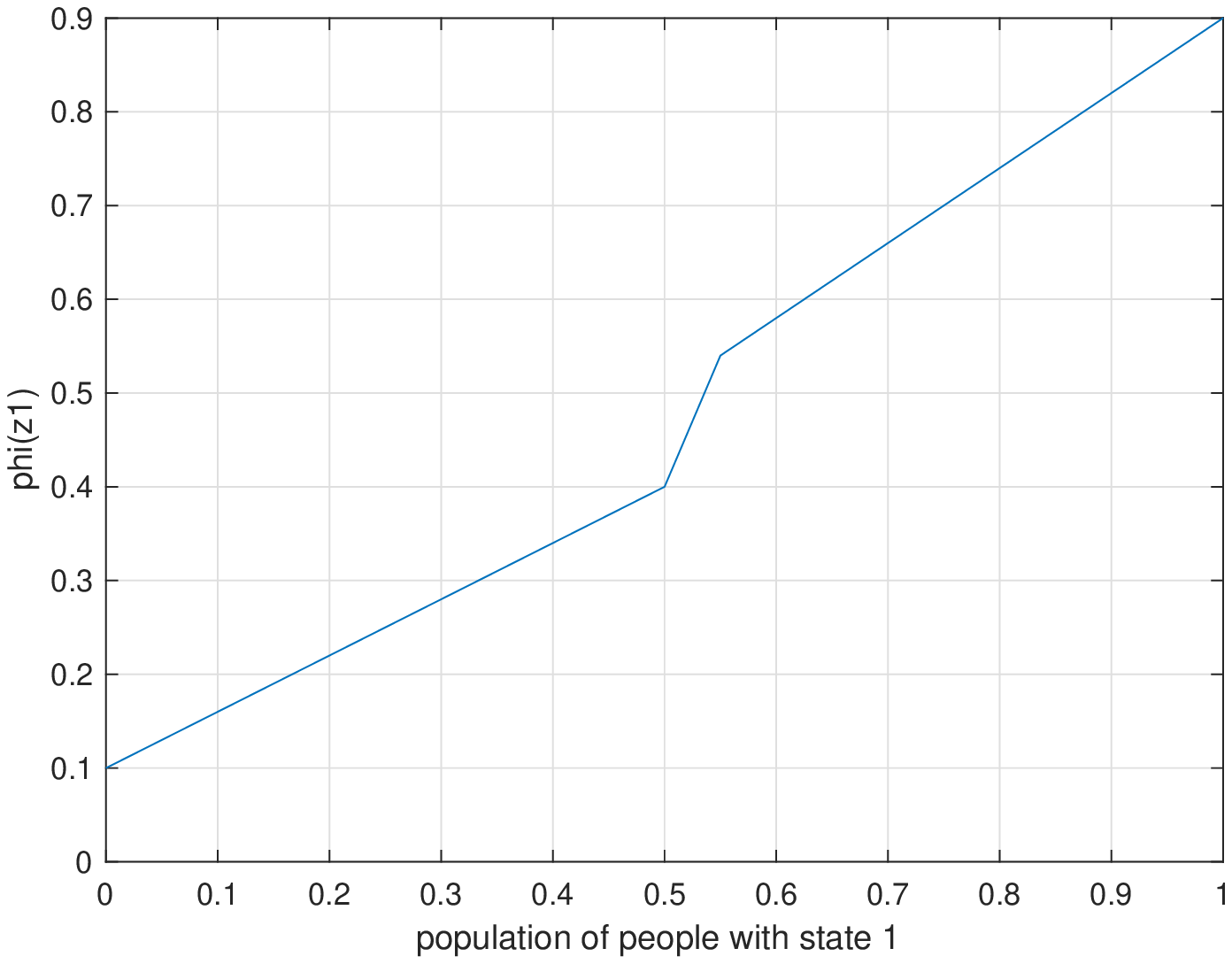} 
    \includegraphics[width=3in]{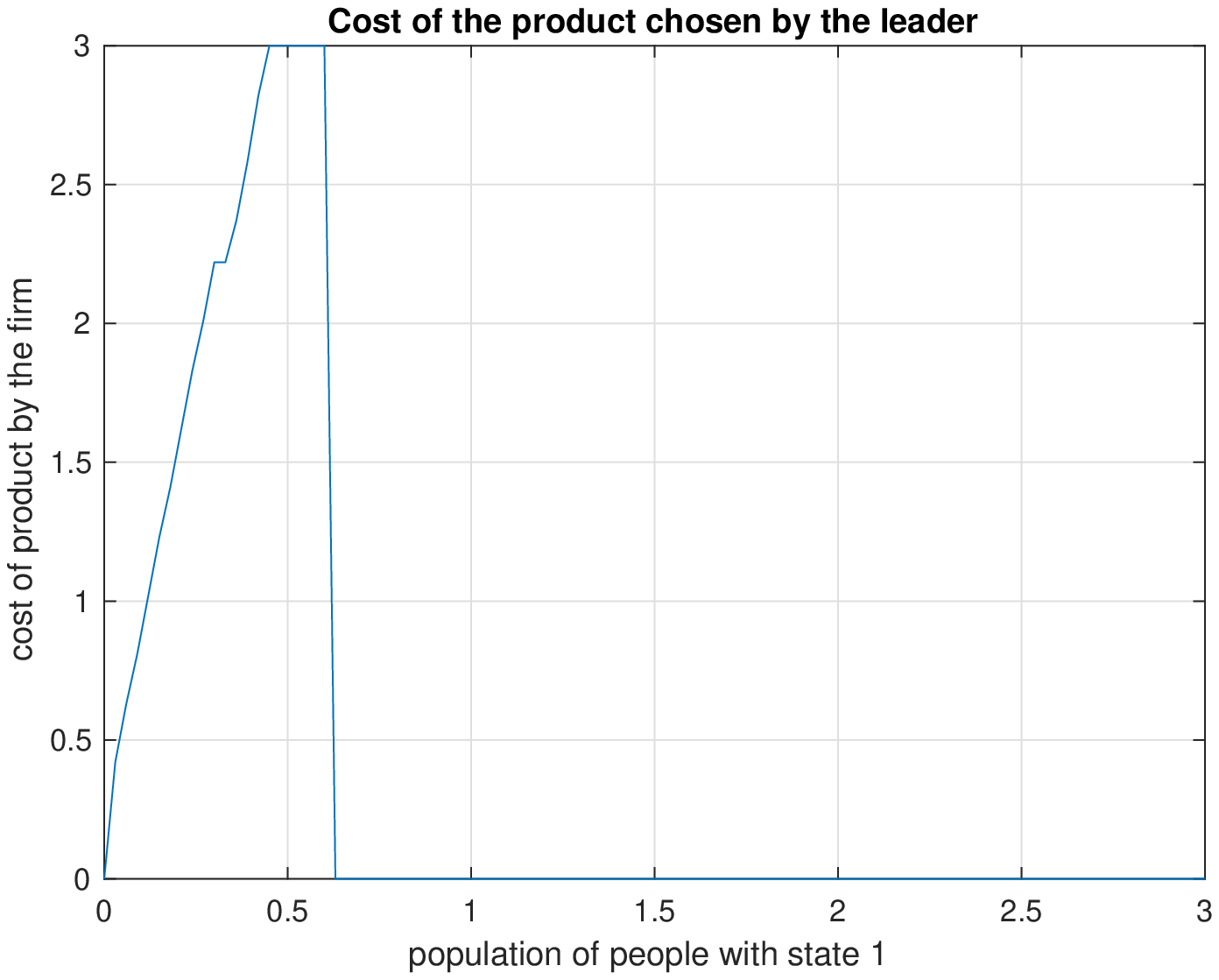} 
  \label{fig:example}
\end{figure}

\begin{figure}[htbp] 
  \centering
  \includegraphics[width=3in]{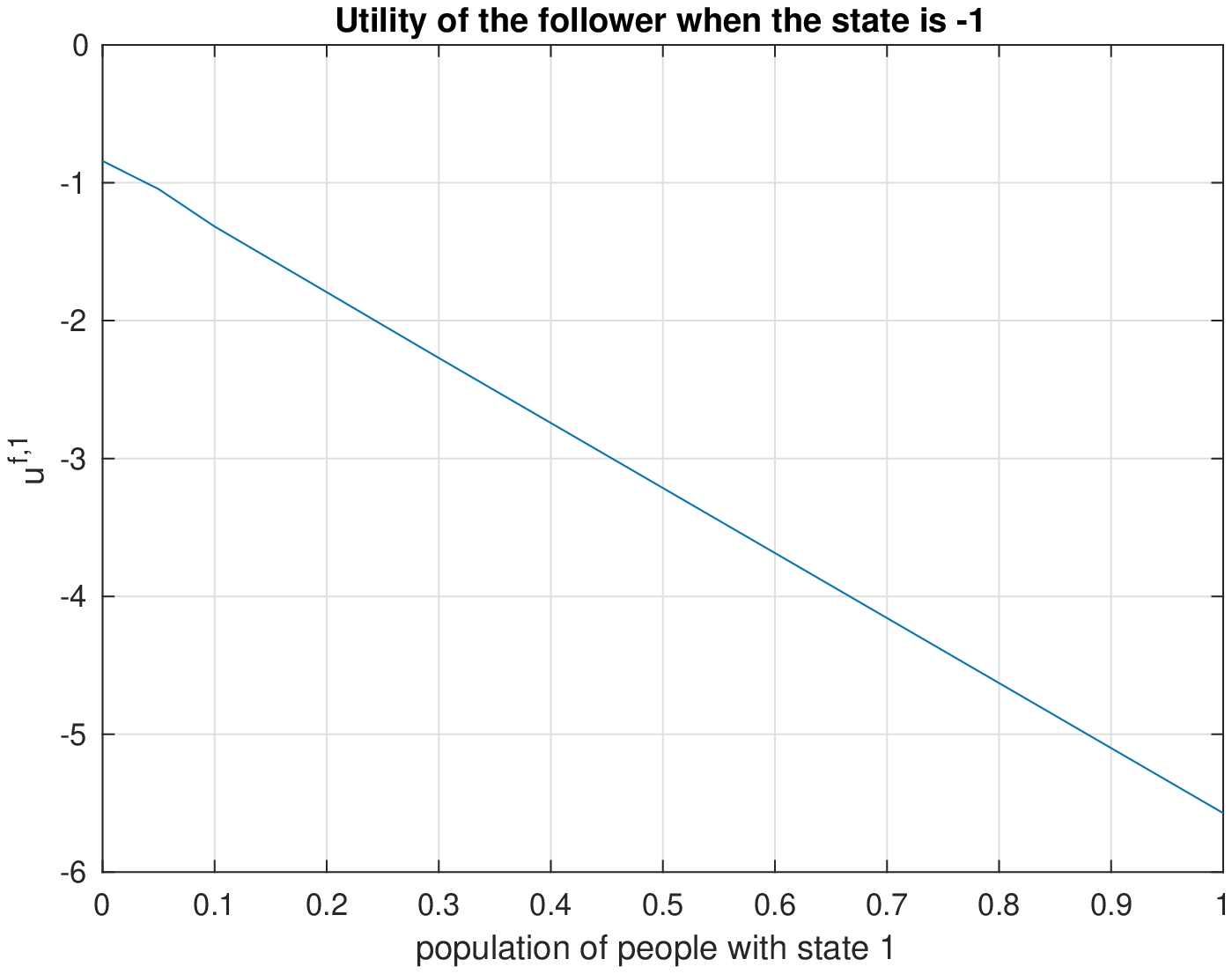} 
    \includegraphics[width=3in]{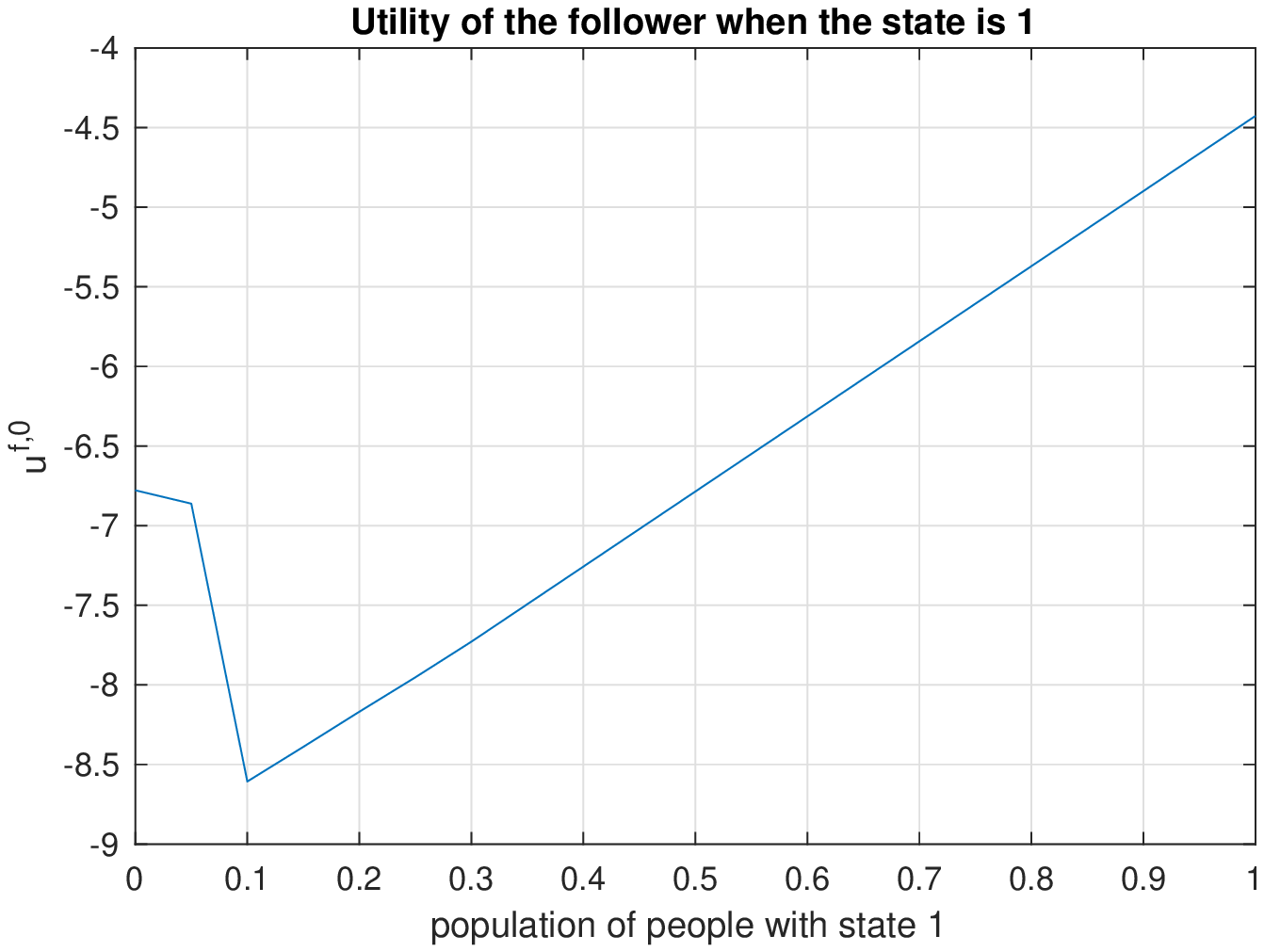} 
  \label{fig:example}
\end{figure}
\begin{figure}[htbp] 
  \centering
  \includegraphics[width=3in]{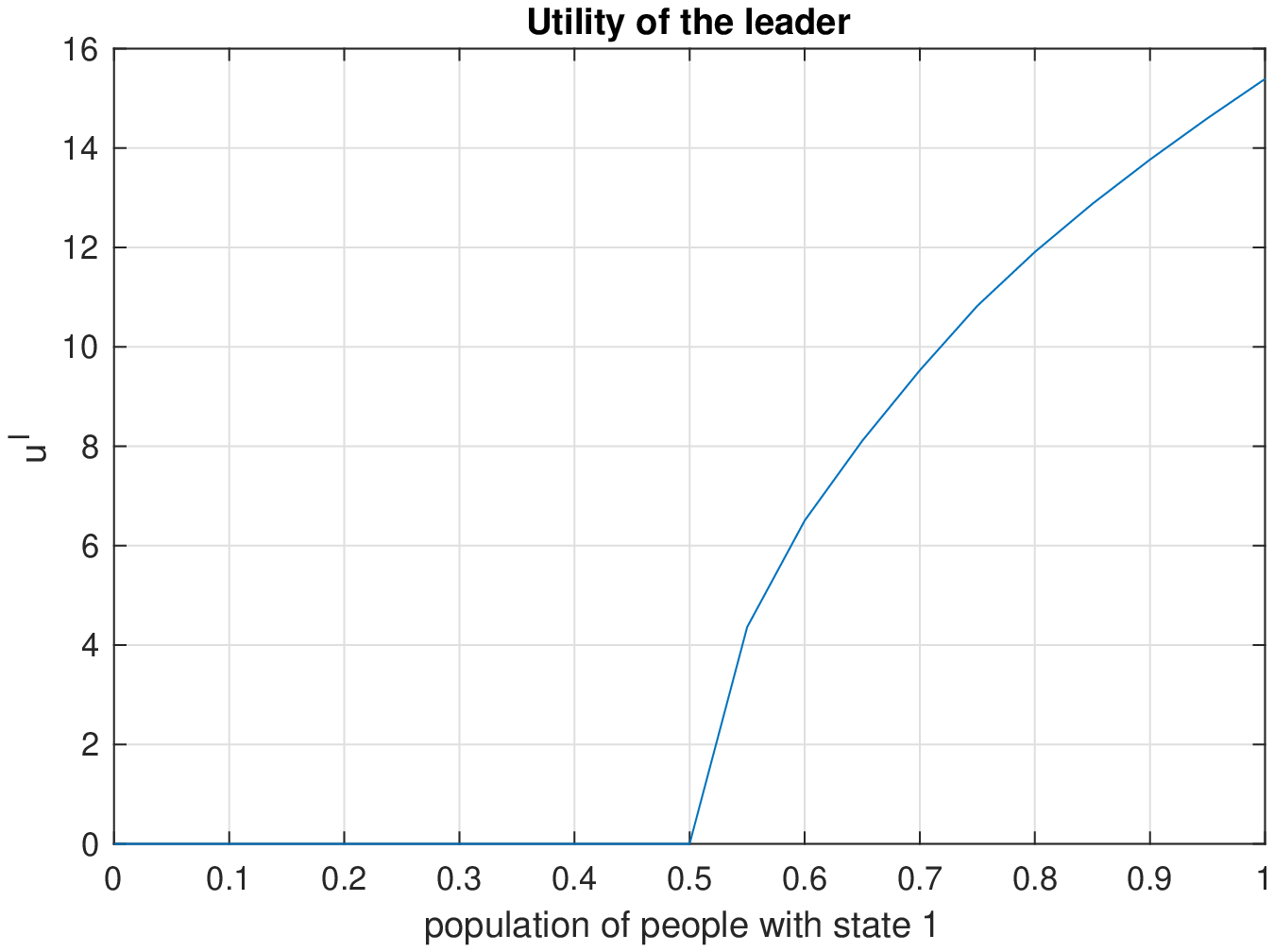} 
  \label{fig:example}
\end{figure}

Interestingly, in this game, the leader can only benefit by incentivizing to the population until the mean field converges. However, the population does eventually

\section{Conclusion}
\label{sec:Conclusion}
In this paper, we present the equivalent of Master equation for discrete time Stackelberg mean field games where the leader and the followers observe Markovian states privately and publicly observe a mean field population state. The leader commits to a dynamic policy that the followers respond to optimally. The leader, knowing that the followers will do best response, commits to a policy that maximizes her total expected reward. We define Stackelberg Mean field equilibrium (SMFE) of the game which consists of solution of a fixed-point equation across time, which consists of best response of the leader, follower and the evolution of the mean field state. We propose an algorithm to compute all SMFE of the game in a sequential manner. Based on this methodology, we numerically compute SMFE of two games: Infection spread and technology adoption. Future work involves finding sufficient conditions for the existence of the mean field equilibrium.
\appendices

\section{}
\label{app:0}
\begin{claim}
	For any policy profile $g$ and $\forall t$,
	\eq{
	\p^{\sigma}(x_{1:t}^l,x_{1:t}^f|z_{1:t},a_{1:t-1}) =  \p^{\sigma^l}(x_{1:t}^{l}|z_{1:t},a_{1:t-1})\p^{\sigma^f}(x_{1:t}^{f}|z_{1:t},a_{1:t-1})
	}
	\label{claim:CondInd}
	\end{claim}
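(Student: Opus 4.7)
The plan is to prove the claim by induction on $t$. I would lean on two structural facts throughout: (i) the leader's action kernel $\sigma_t^l$ is measurable with respect to $(z_{1:t}, x_{1:t}^l)$ alone while the follower's $\sigma_t^f$ is measurable with respect to $(z_{1:t}, x_{1:t}^f)$ alone, so conditionally on the common information the two action kernels factor multiplicatively; and (ii) the one-step transition decomposes as $Q(x_{t+1}^l|z_t,a_t,x_t^l)\,Q(x_{t+1}^f|z_t,a_t,x_t^l,x_t^f)$ with independent noises driving the leader's and follower's updates. Together these will let the joint trajectory distribution factor once one passes to the common information $\sigma$-algebra.

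For the base case $t=1$, the initial private types are drawn from a product prior and $z_1$ is (in the mean-field limit) a deterministic function of the follower prior, so $\p^\sigma(x_1^l, x_1^f | z_1)$ trivially factors into $\p^{\sigma^l}(x_1^l|z_1)\p^{\sigma^f}(x_1^f|z_1)$ because neither marginal involves any action yet.

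For the inductive step, I would assume the identity at time $t$ and apply Bayes' rule to write
\[
\p^\sigma(x_{1:t+1}^l, x_{1:t+1}^f | z_{1:t+1}, a_{1:t}) \propto \p^\sigma(x_{1:t}^l, x_{1:t}^f | z_{1:t}, a_{1:t-1})\,\sigma_t^l(a_t^l|z_{1:t},x_{1:t}^l)\,\sigma_t^f(a_t^f|z_{1:t},x_{1:t}^f)\,Q(x_{t+1}^l|z_t,a_t,x_t^l)\,Q(x_{t+1}^f|z_t,a_t,x_t^l,x_t^f),
\]
using that $z_{t+1}$ is a deterministic function of $(z_t,\pi_t,\gamma_t)$ in the mean-field limit, so it contributes only a Dirac factor. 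The inductive hypothesis splits the first factor, the action kernels peel off cleanly into leader-only and follower-only pieces, and $Q(x_{t+1}^l|\cdots)$ is already leader-only.

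The main obstacle is the appearance of $x_t^l$ inside the follower's transition $Q(x_{t+1}^f|\cdots,x_t^l,x_t^f)$, which at first glance couples the two sides. I would handle it by summing $x_t^l$ against the leader's marginal supplied by the inductive hypothesis: since $\p^{\sigma^l}(x_{1:t}^l|z_{1:t},a_{1:t-1})$ depends only on $\sigma^l$, the reduced quantity $\sum_{x_t^l} Q(x_{t+1}^f|z_t,a_t,x_t^l,x_t^f)\,\p^{\sigma^l}(x_t^l|z_{1:t},a_{1:t-1})$ is a function of common information, $x_t^f$, and $\sigma^l$ alone, and so can be absorbed into the follower-side marginal under $\p^{\sigma^f}$. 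An analogous cancellation on the normalizing denominator closes the induction, giving both the factorization and the individual strategy-dependence asserted on the right-hand side.
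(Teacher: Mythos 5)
Your overall strategy (induction on $t$, peeling off the action kernels and one-step transitions) differs from the paper's, which instead writes the full trajectory probability $\p^{\sigma}(x_{1:t},z_{1:t},a_{1:t-1})$ in closed form and observes that both it and its normalizing sum split into a product of a leader-only factor and a follower-only factor. Either route would work, but only under a hypothesis that your proof correctly flags as ``the main obstacle'' and then fails to overcome.

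The gap is precisely your treatment of the term $Q(x_{t+1}^f\,|\,z_t,a_t,x_t^l,x_t^f)$. The claim concerns the \emph{joint} conditional law of $(x_{1:t+1}^l,x_{1:t+1}^f)$, in which $x_t^l$ is a free argument of the left-hand side, not a variable to be integrated out. Summing $x_t^l$ against the leader's marginal produces the follower's \emph{marginal} $\p(x_{1:t+1}^f\,|\,\cdot)$; it says nothing about whether the joint factors into the product of marginals. If the follower's kernel genuinely varies with $x_t^l$, the factor $Q(x_{t+1}^f|z_t,a_t,x_t^l,x_t^f)$ couples $x_t^l$ and $x_{t+1}^f$ inside the joint density and the asserted factorization is simply false; secondarily, the follower ``marginal'' you construct would depend on $\sigma^l$ through the belief on $x_t^l$, contradicting the $\p^{\sigma^f}$ label on the right-hand side. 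The paper's own proof sidesteps this by writing the follower's transition as $Q_n^f(x_n^f|z_{n-1},a_{n-1},x_{n-1}^f)$ --- with no dependence on the leader's private state --- which is exactly what lets both the numerator and the denominator separate into leader-only and follower-only products. So the correct repair is not a marginalization trick but the (implicit) structural assumption that the followers' dynamics depend on the leader only through the publicly observed $(z_{t-1},a_{t-1})$ (or that the leader has no private state, as in the paper's special case); with that assumption your induction closes immediately, and without it no argument can, because the statement fails.
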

	\begin{IEEEproof}
	\seq{
	\eq{
	&\p^{\sigma}(x_{1:t}|z_{1:t},a_{1:t-1})= \frac{\p^{\sigma}(x_{1:t},z_{1:t},a_{1:t-1})}{\sum_{\bar{x}_{1:t}} \p^{\sigma}(\bar{x}_{1:t},z_{1:t},a_{1:t-1})}
	}
	Here, we will take numerator and the denominator separately.
	\eq{
	&Nr =  \left(Q_1^l(x^l_1)\sigma^l_1(a_1^l|x_{1}^{l})\prod_{n=2}^t Q_n^l(x^l_{n}|z_{n-1},a_{n-1}^l,x^l_{n-1})\phi(z_n|\pi_{n-1},z_{n-1},\tgamma_t) \sigma^l_n(a_n^l|z_{1:n},a^l_{1:n-1},x_{1:n}^{l}) \right)\\
	&\times \left(Q_1^f(x^f_1)\sigma^f_1(a_1^f|x_{1}^{f})\prod_{n=2}^t Q_n^f(x^f_{n}|z_{n-1},a_{n-1},x^f_{n-1})\phi(z_n|\pi_{n-1},z_{n-1},\tgamma_t) \sigma^f_n(a_n^f|z_{1:n},a^l_{1:n-1},x_{1:n}^f) \right)\\
	&= \left(Q_1^l(x^l_1)\sigma^l_1(a_1^l|x_{1}^{l})\prod_{n=2}^t Q_n^l(x^l_{n}|z_{n-1},a_{n-1}^l,x^l_{n-1})\phi(z_n|\pi_{n-1},z_{n-1},\tgamma_t) \sigma^l_n(a_n^l|z_{1:n},a^l_{1:n-1},x_{1:n}^{l}) \right)\\
&\times \left(Q_1^f(x^f_1)\sigma^f_1(a_1^f|x_{1}^{f})\prod_{n=2}^t Q_n^f(x^f_{n}|z_{n-1},a_{n-1},x^f_{n-1})\phi(z_n|\pi_{n-1},z_{n-1},\tgamma_t) \sigma^f_n(a_n^f|z_{1:n},a^l_{1:n-1},x_{1:n}^{f}) \right)
	}
	and 
	\eq{
	Dr&=\sum_{x_{1:t}}\left(Q_1^l(x^l_1)\sigma^l_1(a_1^l|x_{1}^{l})\prod_{n=2}^t Q_n^l(x^l_{n}|z_{n-1},a_{n-1}^l,x^l_{n-1})\phi(z_n|\pi_{n-1},z_{n-1},\tgamma_t) \sigma^l_n(a_n^l|z_{1:n},a^l_{1:n-1},x_{1:n}^{l}) \right)\\
	&\times \left(Q_1^f(x^f_1)\sigma^f_1(a_1^f|x_{1}^{f})\prod_{n=2}^t Q_n^f(x^f_{n}|z_{n-1},a_{n-1},x^f_{n-1})\phi(z_n|\pi_{n-1},z_{n-1},\tgamma_t) \sigma^f_n(a_n^f|z_{1:n},a^l_{1:n-1},x_{1:n}^f) \right)\\
	&=\sum_{x_{1:t}^l}\left(Q_1^l(x^l_1)\sigma^l_1(a_1^l|x_{1}^{l})\prod_{n=2}^t Q_n^l(x^l_{n}|z_{n-1},a_{n-1}^l,x^l_{n-1})\phi(z_n|\pi_{n-1},z_{n-1},\tgamma_t) \sigma^l_n(a_n^l|z_{1:n},a^l_{1:n-1},x_{1:n}^{l}) \right)\\
	&\times \sum_{x_{1:t}^f}\left(Q_1^f(x^f_1)\sigma^f_1(a_1^f|x_{1}^{f})\prod_{n=2}^t Q_n^f(x^f_{n}|z_{n-1},a_{n-1},x^f_{n-1})\phi(z_n|\pi_{n-1},z_{n-1},\tgamma_t) \sigma^f_n(a_n^f|z_{1:n},a^l_{1:n-1},x_{1:n}^f) \right)\\
	}
	
	Thus
	\eq{
	\p^{\sigma}(x_{1:t}^l,x_{1:t}^f|z_{1:t},a_{1:t-1}) =  \p^{\sigma^l}(x_{1:t}^{l}|z_{1:t},a_{1:t-1})\p^{\sigma^f}(x_{1:t}^{f}|z_{1:t},a_{1:t-1})
	}
	}
	\end{IEEEproof}

\section{Part 1: Follower}
\label{app:P1}
\label{app:B}
\label{app:A}
\begin{proof}
We prove Theorem~\ref{Thm:Main} using induction and the results in Lemma~\ref{lemma:2}, and \ref{lemma:1} proved in \ref{app:B}. Let $\tsigma$ be the strategies computed by the methodology in Section~III.

\seq{
For the base case at $t=T$, $ z_{1:T},a_{1:T-1}^f,x_{1:T}^{f}, \sigma^{f}$
\eq{
\E^{\tsigma_T^l,\tsigma_{T}^{f},\pi_T}\big\{  R^f_T(Z_T,X_T,A_T) \big\lvert \pi_T,z_{1:T},a_{1:T-1}^l,x_{1:T}^f \big\}
&=
V^f_T(\pi_T,z_T, x_T^f)  \label{eq:T2a}\\
&\geq \E^{\tsigma^l,\sigma_{T}^{f},\pi_T} \big\{ R^f_T(Z_T,X_T,A_T) \big\lvert\pi_T, z_{1:T}, a_{1:T-1}^l,x_{1:T}^f \big\},  \label{eq:T2}
}
}
where \eqref{eq:T2a} follows from Lemma~\ref{lemma:1} and \eqref{eq:T2} follows from Lemma~\ref{lemma:2} in Appendix~\ref{app:B}.

Let the induction hypothesis be that for $t+1$, $\forall , z_{1:t+1}, a_{1:t}^l,x_{1:t+1}^f \in (\cX)^{t+1}, \sigma^f$,
\seq{
\eq{
 \E^{\tsigma_{t+1:T}^l,\tsigma_{t+1:T}^{f},\pi_{t+1} } \big\{ \sum_{n=t+1}^T \delta^{n-t-1}R^f_n(Z_n,X_n,A_n) \big\lvert \pi_{t+1}, z_{1:t+1},a_{1:t}^l, x_{1:t+1}^f \big\} \\
 \geq
  \E^{\tsigma_{t+1:T}^l,\sigma_{t+1:T}^{f},\pi_{t+1} } \big\{ \sum_{n=t+1}^T \delta^{n-t-1} R^f_n(Z_n,X_n,A_n) \big\lvert \pi_
  {t+1},z_{1:t+1}, a_{1:t}^l, x_{1:t+1}^f \big\}. \label{eq:PropIndHyp}
}
}
\seq{
Then $\forall z_{1:t},a_{1:t-1}^l, x_{1:t}^f, \sigma^f$, we have
\eq{
&\E^{\tsigma_{t:T}^l,\tsigma_{t:T}^{f},\pi_t } \big\{ \sum_{n=t}^T \delta^{n-t-1}R^f_n(Z_n,X_n,A_n) \big\lvert\pi_t, z_{1:t},a_{1:t-1}^l, x_{1:t}^f \big\} \nonumber \\
&= V^f_t(\pi_t,z_{t}, x_t^f)\label{eq:T1}\\
&\geq \E^{\tsigma_t^l,\sigma_t^f,\pi_t} \big\{ R^f_t(Z_t,X_t,A_t) + \delta V^f_{t+1} (F(\pi_t,z_t,\tgamma^l_t,a^l_t),\phi(\pi_t,z_{t},\tgamma_t), X_{t+1}^f) \big\lvert\pi_t, z_{1:t},a_{1:t-1}^l, x_{1:t}^f \big\}  \label{eq:T3}\\
&= \E^{\tsigma_t^l,\sigma_t^f,\pi_t } \big\{ R^f_t(Z_t,X_t,A_t) + \delta \E^{\tsigma_{t+1:T}^l,\tsigma_{t+1:T}^{f},F(\pi_t,z_t,\tgamma_t^l,a_t^l)} \nn\\
&\big\{ \sum_{n=t+1}^T \delta^{n-t-1}R^f_n(Z_n,X_n,A_n) \big\lvert F(\pi_t,z_t,\tgamma^l_t,a^l_t),  z_{1:t},Z_{t+1}, x_{1:t}^f,X_{t+1}^f \big\}  \big\vert\pi_t, z_{1:t},a_{1:t-1}^l, x_{1:t}^f \big\}  \label{eq:T3b}\\
&\geq \E^{\tsigma_t^l,\sigma_t^f,\pi_t } \big\{ R^f_t(Z_t,X_t,A_t) +  \delta\E^{\tsigma_{t+1:T}^l,\sigma_{t+1:T}^{f},F(\pi_t,z_t,\tgamma^l_t,a^l_t) } \nn\\
&\big\{ \sum_{n=t+1}^T \delta^{n-t-1}R^f_n(Z_n,X_n,A_n) \big\lvert F(\pi_t,z_t,\tgamma^l_t,a^l_t), z_{1:t},Z_{t+1}, x_{1:t}^f,X_{t+1}^f\big\} \big\vert\pi_t, z_{1:t},a_{1:t-1}^l, x_{1:t}^f \big\}  \label{eq:T4} \\
&= \E^{\tsigma_t^l,\sigma_t^f,\pi_t } \big\{ R^f_t(Z_t,X_t,A_t) +   \nn\\
&\delta\E^{\tsigma_{t:T}^l,\sigma_{t:T}^{f},\pi_t }  \big\{ \sum_{n=t+1}^T \delta^{n-t-1}R^f_n(Z_n,X_n,A_n) \big\lvert F(\pi_t,z_t,\tgamma^l_t,a^l_t), z_{1:t}, Z_{t+1}, x_{1:t}^f,X_{t+1}^f\big\} \big\vert\pi_t, z_{1:t},a_{1:t-1}^l, x_{1:t}^f \big\}  
\label{eq:T5}\\
&=\E^{\tsigma_{t:T}^l,\sigma_{t:T}^{f},\pi_t } \big\{ \sum_{n=t}^T \delta^{n-t}R^f_n(Z_n,X_n,A_n) \big\lvert\pi_t, z_{1:t},a_{1:t-1}^l,x_{1:t}^f \big\}  \label{eq:T6},
}
}
where \eqref{eq:T1} follows from Lemma~\ref{lemma:1}, \eqref{eq:T3} follows from Lemma~\ref{lemma:2}, \eqref{eq:T3b} follows from Lemma~\ref{lemma:1}, \eqref{eq:T4} follows from induction hypothesis in \eqref{eq:PropIndHyp} and \eqref{eq:T5} follows from Lemma~\ref{lemma:3}.
\end{proof}

\section{}
\label{app:B}
\label{app:lemmas}
\begin{lemma}
\label{lemma:2}
Let $\tsigma$ be the strategies computed by the methodology in Section~III. 
Then $\forall t\in [T],z_{1:t},a_{1:t-1}^l,  x_{1:t}^f, \sigma^f_t$
\eq{
V^f_t(\pi_t,z_t, x_t^f) \geq \E^{\tsigma_t^l,\sigma_t^f,\pi_t} \big\{ R^f_t(Z_t,X_t,A_t) + \delta V^f_{t+1} (F(\pi_t,z_t,\tgamma^l_t,a^l_t),\phi(\pi_t,z_{t},\tgamma_t), X_{t+1}^f) \big\lvert \pi_t, z_{1:t},a_{1:t-1}^l, x_{1:t}^f \big\}.\label{eq:lemma2}
}
\end{lemma}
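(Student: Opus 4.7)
The plan is to establish this one-step Bellman-type inequality by unpacking the definition of $V_t^f$ from~\eqref{eq:Vdef} and then invoking the fixed-point characterization of $\tgamma^f_t$ in~\eqref{eq:m_FP}--\eqref{eq:FP1}. First I would rewrite the LHS via~\eqref{eq:Vdef} as an expectation under the equilibrium prescription $\tgamma^f_t(\cdot|x_t^f)$, with $X^l_t\sim\pi_t$, $A^l_t\sim\tgamma^l_t(\cdot|X^l_t)$, $A^f_t\sim\tgamma^f_t(\cdot|x_t^f)$, and $X^f_{t+1}$ drawn from the transition kernel $Q$.

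Next I would recast the RHS in the same form. Setting $\gamma^f_t(\cdot|x_t^f) := \sigma^f_t(\cdot|z_{1:t},a^l_{1:t-1},x_{1:t}^f)$, viewed as a probability measure on $\cA^f$ at the given realization of the conditioning history, the RHS becomes an expectation under $\gamma^f_t(\cdot|x_t^f)$ together with $\tgamma^l_t$ and $\pi_t$. Two facts make the histories $z_{1:t}$, $a^l_{1:t-1}$, $x_{1:t-1}^f$ collapse to the sufficient statistic $(\pi_t,z_t,x_t^f)$ inside the conditional expectation: (a) by Claim~\ref{claim:CondInd} together with the definition of the common belief, the conditional law of $X^l_t$ given the follower's information $(z_{1:t},a^l_{1:t-1},x_{1:t}^f)$ equals $\pi_t$; and (b) the mean-field update $\phi(\pi_t,z_t,\tgamma_t)$ is unaffected by a single follower's deviation, since an individual follower has zero mass in the mean-field limit, so the next-period arguments $F(\pi_t,z_t,\tgamma^l_t,A^l_t)$ and $\phi(\pi_t,z_t,\tgamma^l_t,\tgamma^f_t)$ are determined by the equilibrium prescriptions alone.

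With the RHS written in this form, the inequality is immediate from~\eqref{eq:m_FP}--\eqref{eq:FP1}: the equilibrium prescription $\tgamma^f_t(\cdot|x_t^f)$ was \emph{defined} to maximize this very functional over $\gamma^f_t(\cdot|x_t^f)\in\mathcal{P}(\cA^f)$, so any competitor --- in particular the induced $\gamma^f_t(\cdot|x_t^f)=\sigma^f_t(\cdot|z_{1:t},a^l_{1:t-1},x_{1:t}^f)$ --- can do no better pointwise at the realized $x_t^f$. Substituting back yields~\eqref{eq:lemma2}.

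The main obstacle I expect is the measurability/conditioning step that rewrites the conditional expectation given the entire follower history $(z_{1:t},a^l_{1:t-1},x_{1:t}^f)$ as a conditional expectation that depends on the common information only through $(\pi_t,z_t)$ and on the private information only through $x_t^f$. Concretely, one must verify that $\p^{\tsigma^l,\sigma^f}(X^l_t=\cdot\mid z_{1:t},a^l_{1:t-1},x_{1:t}^f)=\pi_t(\cdot)$, which is a direct corollary of the conditional-independence identity in Claim~\ref{claim:CondInd} applied with the equilibrium leader policy; the other ingredients (the transition law of $X^f_{t+1}$ and the insensitivity of $\phi$ to a single follower's action) are structural features of the mean-field model itself.
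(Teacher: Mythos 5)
Your proposal is correct and matches the paper's argument in substance: the paper proves the same inequality by contradiction, constructing from the deviating strategy $\sigma^f_t$ a prescription $\hat{\gamma}^f_t(\cdot|x_t^f)=\sigma^f_t(\cdot|z_{1:t},a^l_{1:t-1},x_{1:t}^f)$ at the realized history and invoking the $\arg\max$ characterization of $\tgamma^f_t$ from \eqref{eq:m_FP}--\eqref{eq:FP1} (equivalently, the max characterization of $V^f_t$), exactly as you do in direct form. The conditioning points you flag --- that the conditional law of $X^l_t$ given the follower's information is $\pi_t$ (via Claim~\ref{claim:CondInd}) and that the mean-field update $\phi(\pi_t,z_t,\tgamma_t)$ is insensitive to a single follower's deviation --- are the same facts the paper uses implicitly when it writes the relevant expectations with respect to the measure $\pi_t(x_t^l)\hat{\gamma}^f_t(a^f_t|\hat{x}_t^f)Q(\cdot)$.
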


\begin{proof}
We prove this lemma by contradiction.

 Suppose the claim is not true for $t$. This implies $\exists i, \hat{\sigma}_t^f, \hat{z}_{1:t},\hat{a}_{1:t-1}^l, \hat{x}_{1:t}^f$ such that
\eq{
\E^{\tsigma_t^l,\hat{\sigma}_t^f,\pi_t} \big\{ R^f_t(Z_t,X_t,A_t) +  \delta V^f_{t+1} (F(\pi_t,z_t,\tgamma^l_t,A^l_t),\phi(\pi_t,z_{t},\tgamma_t), X_{t+1}^f) \big\lvert \pi_t, \hat{z}_{1:t},\hat{a}_{1:t-1}^l,\hat{x}_{1:t}^f \big\} 
> V^f_t(\pi_t,z_t, \hat{x}_{t}^f).\label{eq:E8}
}
We will show that this leads to a contradiction.
Construct 
\begin{equation}
\hat{\gamma}^f_t(a_t^f|x_t^f) = \lb{\hat{\sigma}_t^f(a_t^f|\hat{z}_{1:t},\hat{a}_{1:t-1}^l,\hat{x}_{1:t}^f) \;\;\;\;\; x_t^f = \hat{x}_t^f \\ \text{arbitrary} \;\;\;\;\;\;\;\;\;\;\;\;\;\; \text{otherwise.}  }
\end{equation}

Then for $ \hat{z}_{1:t},\hat{a}_{1:t-1},\hat{x}_{1:t}^f$, we have
\seq{
\eq{
&V^f_t(\pi_t,z_t, \hat{x}_t^f) 
\nn \\
&= \max_{\gamma_t^f(\cdot|\hat{x}_t^f)} \E^{\tsigma^l,\gamma^f_t(\cdot|\hat{x}_t^f),\pi_t} \big\{ R^f_t(z_t,X_t^l,\hat{x}_t^f,A_t^f) + \delta V^f_{t+1} (F(\pi_t,z_t,\tgamma^l_t,A^l_t),\phi(\pi_t,z_t, \tgamma_t), X_{t+1}^f) \big\lvert \pi_t,\hat{z}_{t}, \hat{x}_{t}^f \big\}, \label{eq:E11}\\
&\geq\E^{\tsigma^l,\hat{\gamma}_t^f(\cdot|\hat{x}_t^f),\pi_t} \big\{ R^f_t(z_t,X_t,a_t) + \delta V^f_{t+1} (F(\pi_t,z_t,\tgamma^l_t,A^l_t),\phi(\pi_t,z_t, \tgamma_t), {X}_{t+1}^f) \big\lvert \pi_t, \hat{z}_t,\hat{x}_{t}^f \big\}   
\\ 
&=\sum_{x_t^l,a_t^f,x_{t+1}^f}   \big\{ R^f_t(z_t,x_t^l,\hat{x}_t^f,a_t^f) + \delta V^f_{t+1} (F(\pi_t,z_t,\tgamma^l_t,a^l_t),\phi(\pi_t,z_t, \tgamma_t), x_{t+1}^f)\big\}
\pi_t(x_t^l)\hat{\gamma}^f_t(a^f_t|\hat{x}_t^f)\nn\\
&Q_t^f(x_{t+1}^f|\hat{z}_t,\hat{x}_t^l,\hat{x}_t^f,a_t^l,a_t^f) 
\\ 
&= \sum_{x_t^l,a_t^f,x_{t+1}^f}  \big\{ R^f_t(z_t,x_t^l,\hat{x}^f_t,a_t) + \delta V^f_{t+1} (F(\pi_t,z_t,\tgamma^l_t,A^l_t),\phi(\pi_t,z_t, \tgamma_t), x_{t+1}^f)\big\}
\pi_t(x_t^l)\hat{\sigma}^f_t(a_t^f|\hat{z}_{1:t},\hat{a}_{1:t-1}^l,\hat{x}_{1:t}^f)\nn\\
&Q_t^f(x_{t+1}^f|\hat{z}_t,\hat{x}_t^l,\hat{x}_t^f,a_t^l,a_t^f) \label{eq:E9}\\
&= \E^{\tsigma^l,\hat{\sigma}_t^f,\pi_t } \big\{ R^f_t(z_t,X_t^l,\hat{x}^f_t,a_t)+ \delta V^f_{t+1} (F(\pi_t,z_t,\tgamma^l_t,A^l_t),\phi(\pi_t,z_t, \tgamma_t), X_{t+1}^f) \big\lvert \pi_t, \hat{z}_{1:t},\hat{a}_{1:t-1}^l, \hat{x}_{1:t}^f \big\}  \\
&> V^f_t(\pi_t,\hat{z}_t, \hat{x}_{t}^f), \label{eq:E10}
}
where \eqref{eq:E11} follows from definition of $V^f_t$ in \eqref{eq:Vdef}, \eqref{eq:E9} follows from definition of $\hat{\gamma}_t^f$ and \eqref{eq:E10} follows from \eqref{eq:E8}. However this leads to a contradiction.
}
\end{proof}

\begin{lemma}
\label{lemma:1}
Let $\tsigma$ be the strategies computed by the methodology in Section~III. 
Then $\forall t\in [T], z_{1:t},a_{1:t-1}^l,x_{1:t}^f$,
\begin{gather}
V^f_t(\pi_t,z_{t}, x_t^f) =
\E^{\tsigma_{t:T}^l,\tsigma_{t:T}^{f},\pi_t} \big\{ \sum_{n=t}^T \delta^{n-t}R^f_n(Z_n,X_n,A_n) \big\lvert \pi_t,  z_{1:t},a_{1:t-1}^l,x_{1:t}^f \big\} .
\end{gather} 
\end{lemma}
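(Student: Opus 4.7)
The plan is to prove the identity by backward induction on $t$, using the recursive definition of $V^f_t$ from the algorithm in Section~III together with the common-agent reduction that replaces conditioning on the full common history $(z_{1:t}, a_{1:t-1}^l)$ by conditioning on the pair of sufficient statistics $(\pi_t, z_t)$.

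For the base case $t = T$, the claim reduces to $V^f_T(\pi_T, z_T, x_T^f) = \E^{\tsigma_T^l, \tsigma_T^f, \pi_T}\{R^f_T(Z_T, X_T, A_T) \mid \pi_T, z_{1:T}, a_{1:T-1}^l, x_{1:T}^f\}$. Since $V^f_{T+1} \equiv 0$ by initialization, the defining display \eqref{eq:Vdef} for $V^f_T$ is exactly a one-step expectation of $R^f_T$ under the prescriptions $\tgamma^l_T = \theta^l_T[\pi_T, z_T]$ and $\tgamma^f_T = \theta^f_T[\pi_T, z_T]$, conditioned on $(\pi_T, z_T, x_T^f)$. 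Combined with the common-agent reduction discussed below, this matches the right-hand side.

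For the inductive step, assume the identity at $t+1$ for every $(z_{1:t+1}, a_{1:t}^l, x_{1:t+1}^f)$. Starting from the right-hand side at time $t$, I split the sum as $R^f_t + \delta \sum_{n=t+1}^{T} \delta^{n-t-1} R^f_n$ and take iterated conditional expectations, conditioning next on $(X_t^l, A_t, Z_{t+1}, \pi_{t+1}, X_{t+1}^f)$. Under $\tsigma$ the updates $Z_{t+1} = \phi(\pi_t, z_t, \tgamma_t)$ and $\pi_{t+1} = F(\pi_t, z_t, \tgamma^l_t, A_t^l)$ are deterministic functions of quantities already being conditioned on, so the induction hypothesis applied to the tail converts it into $\delta\, V^f_{t+1}(\pi_{t+1}, Z_{t+1}, X_{t+1}^f)$. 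The surviving one-step expression is exactly the defining formula for $V^f_t(\pi_t, z_t, x_t^f)$ in \eqref{eq:Vdef}, closing the induction.

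The main obstacle is justifying the reduction from conditioning on the whole common history $(z_{1:t}, a_{1:t-1}^l)$ to conditioning on the sufficient statistic $(\pi_t, z_t)$ when evaluating the one-step expectation. Concretely, under $\tsigma$ the joint conditional law of $(X_t^l, A_t, X_{t+1}^f)$ given $(z_{1:t}, a_{1:t-1}^l, x_{1:t}^f)$ must coincide with the measure $\pi_t(x_t^l)\,\tgamma^l_t(a_t^l|x_t^l)\,\tgamma^f_t(a_t^f|x_t^f)\,Q(x_{t+1}^f|z_t, x_t^l, x_t^f, a_t^l, a_t^f)$ used in \eqref{eq:Vdef}. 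The posterior-on-$x_t^l$ part follows from the conditional-independence identity of Claim~\ref{claim:CondInd}, which gives $\p^{\tsigma}(x_t^l|z_{1:t}, a_{1:t-1}^l, x_{1:t}^f) = \pi_t(x_t^l)$; the prescription factors appear because $\tsigma^l_t$ and $\tsigma^f_t$ depend on their histories only through $(\pi_t, z_t)$ by construction in Section~III; and the $X_{t+1}^f$ factor is the given Markov transition kernel. Once this identification is established, the rest of the argument is bookkeeping.
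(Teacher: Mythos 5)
Your proposal is correct and follows essentially the same route as the paper's own proof: backward induction on $t$, splitting off the stage reward, applying iterated expectations so that the induction hypothesis turns the tail sum into $\delta V^f_{t+1}(F(\pi_t,z_t,\tgamma^l_t,A^l_t),\phi(\pi_t,z_t,\tgamma_t),X^f_{t+1})$, and then recognizing the defining display \eqref{eq:Vdef}. The "main obstacle" you identify — replacing conditioning on $(z_{1:t},a^l_{1:t-1},x^f_{1:t})$ by the measure $\pi_t(x^l_t)\tgamma^l_t\tgamma^f_tQ$ — is exactly what the paper isolates in Claim~\ref{claim:CondInd} and Lemma~\ref{lemma:3}, and you resolve it the same way.
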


\begin{proof}
%
\seq{
We prove the lemma by induction. For $t=T$,
\eq{
 \E^{\tsigma_T^l,\tsigma_{T}^{f},\pi_T } \big\{  R(X_T^f,A_T^f,Z_T) \big\lvert\pi_T, z_{1:T},a_{1:T-1}^l,x_{1:T}^f \big\}
 &= \sum_{a_T^f} R^f_T(z_T,x_T^l,x_T^f,a_T) \pi_T(x_T^l)\tsigma_{T}^{f}(a_T^f|z_{T},x_{T}^f) \\
 &= V^f_T(\pi_T,z_{T}, x_T^f) \label{eq:C1},
}
}
where \eqref{eq:C1} follows from the definition of $V^f_t$ in \eqref{eq:Vdef}.
Suppose the claim is true for $t+1$, i.e., $\forall  t\in [T],  z_{1:t+1},a_{1:t}^l,x_{1:t+1}^f$
\begin{gather}
V^f_{t+1}(\pi_{t+1},z_{t+1}, x_{t+1}^f) = \E^{\tsigma_{t+1:T}^l,\tsigma_{t+1:T}^{f},\pi_{t+1}}
\big\{ \sum_{n=t+1}^T \delta^{n-t-1}R^f_n(Z_n,X_n,A_n) \big\lvert \pi_{t+1}, z_{1:t+1},a_{1:t}^l, x_{1:t+1}^f \big\} 
\label{eq:CIndHyp}.
\end{gather}
Then $\forall  t\in [T],z_{1:t},a_{1:t-1}^l, x_{1:t}^f$, we have
\seq{
\eq{
&\E^{\tsigma_{t:T}^l,\tsigma_{t:T}^{f},\pi_t } \big\{ \sum_{n=t}^T \delta^{n-t} R^f_n(Z_n,X_n,A_n) \big\lvert \pi_t, z_{1:t}, a_{1:t-1}^l,x_{1:t}^f \big\} 
\nonumber 
\\
&=  \E^{\tsigma_{t:T}^l,\tsigma_{t:T}^{f},\pi_t} \big\{R^f_t(Z_t,X_t,A_t)
\nonumber \\ 
&+\delta \E^{\tsigma_{t:T}^l,\tsigma_{t:T}^{f},\pi_t }  \big\{ \sum_{n=t+1}^T \delta^{n-t-1}R^f_n(Z_n,X_n,A_n)\big\lvert F(\pi_t,z_t,\gamma_t^l,a_t^l), z_{1:t},Z_{t+1}, x_{1:t}^f,X_{t+1}^f\big\} \big\lvert \pi_t, z_{1:t}, a_{1:t-1}^l, x_{1:t}^f \big\} \label{eq:C2}
\\
&=  \E^{\tsigma_{t:T}^l,\tsigma_{t:T}^{f},\pi_t} \big\{R^f_t(Z_t,X_t,A_t) +\delta\E^{\tsigma_{t+1:T}^l,\tsigma_{t+1:T}^{f},F(\pi_t,z_t,\tgamma^l_t,a_t^l)}
\nonumber 
\\
&\big\{ \sum_{n=t+1}^T \delta^{n-t-1}R^f_n(Z_n,X_n,A_n)\big\lvert F(\pi_t,z_t,\gamma^l_t,a^l_t), z_{1:t},Z_{t+1}, x_{1:t}^f,X_{t+1}^f\big\} \big\lvert \pi_t, z_{1:t}, a_{1:t-1}^l,x_{1:t}^f \big\} \label{eq:C3}
\\
&=  \E^{\tsigma_{t:T}^l,\tsigma_{t:T}^{f},\pi_t} \big\{R^f_t(Z_t,X_t,A_t) +  \delta V^f_{t+1}(F(\pi_t,z_t,\tgamma^l_t,A^l_t),\phi(\pi_t,z_t,\gamma_t^l,\tgamma^f_t), X_{t+1}^f) \big\lvert \pi_t,  z_{1:t},a_{1:t-1}^l, x_{1:t}^f \big\} 
\label{eq:C4}
\\
&=  \E^{\tsigma_t^l,\tsigma_{t}^{f},\pi_t} \big\{R^f_t(Z_t,X_t,A_t) +  \delta V^f_{t+1}(F(\pi_t,z_t,\tgamma^l_t,A^l_t),\phi(\pi_t,z_t,\tgamma_t), X_{t+1}^f) \big\lvert \pi_t,  z_{1:t},a_{1:t-1}^l, x_{1:t}^f \big\} 
\label{eq:C5}
\\
&=V^f_{t}(\pi_t,z_t, x_t^f) \label{eq:C6},
}
}
\eqref{eq:C4} follows from the induction hypothesis in \eqref{eq:CIndHyp} and \eqref{eq:C6} follows from the definition of $V^f_t$ in \eqref{eq:Vdef}.
\end{proof}

\begin{lemma}
\label{lemma:3}
$\forall  t\in \mathcal{T}, (z_{1:t+1},a_{1:t}^l, x_{1:t+1}^f)$ and
$\sigma^f_{t}$
\eq{
&\E^{\tsigma_{t:T}^l,  \sigma^{f}_{t:T},\,\pi_t}  \big\{ \sum_{n=t+1}^T R_n^f(Z_n,X_n,A_n) \big\lvert \pi_{t}, z_{1:t+1},a_{1:t}^l, x_{1:t+1}^f \big\} =\nn\\
& \E^{\tsigma^l_{t+1:T} \sigma^{f}_{t+1:T},F(\pi_t,z_t,\tgamma^l_t,a^l_t)}  \big\{ \sum_{n=t+1}^T R_n^f(Z_n,X_n,A_n) \big\lvert \pi_{t+1}, z_{1:t+1},a_{1:t}^l, x_{1:t+1}^f \big\}. \label{eq:F1}
}

\end{lemma}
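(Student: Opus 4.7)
The plan is to show that the two conditional expectations coincide by verifying that the conditional joint law of the future trajectory $(X^l_{t+1:T}, X^f_{t+2:T}, A_{t+1:T}, Z_{t+2:T})$ is the same under both conditioning setups. The only structural difference between the two sides is that the LHS conditions on $\pi_t$ while the RHS conditions on $\pi_{t+1}$, and the strategy at time $t$ appears only on the LHS. Since $\pi_{t+1} = F(\pi_t, z_t, \tgamma^l_t, a_t^l)$ is deterministic given the LHS conditioning variables by the update derived in Section~III, showing that the posterior of $X^l_{t+1}$ on the LHS is precisely $\pi_{t+1}$ will collapse the difference.

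First, I would expand the LHS expectation by writing the joint distribution of the remaining trajectory as a product of leader transitions, follower transitions, the prescribed strategies generated by $\theta$, and the mean-field updates $\phi$, conditioned on $(\pi_t, z_{1:t+1}, a_{1:t}^l, x_{1:t+1}^f)$. Using an argument analogous to Claim~\ref{claim:CondInd}, the leader's future state trajectory is conditionally independent of the followers' future state trajectory given the common information, so the leader-side and follower-side factors separate. I would then marginalize $x_t^l$ out of the leader-side factor using Bayes' rule exactly as in \eqref{eq:piupdate}: the numerator contains $\pi_t(x_t^l)\tgamma_t^l(a_t^l|x_t^l)Q(x_{t+1}^l|z_t,x_t^l,a_t^l)$ together with a $\phi(z_{t+1}|\pi_t,z_t,\tgamma_t)$ factor, and this $\phi$ factor cancels against the same factor in the denominator, leaving precisely $F(\pi_t,z_t,\tgamma^l_t,a_t^l)(x_{t+1}^l) = \pi_{t+1}(x_{t+1}^l)$ as the conditional law of $X^l_{t+1}$.

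With $X^l_{t+1} \sim \pi_{t+1}$ established, the remainder of the trajectory from time $t+1$ through $T$ is generated by $(\tsigma^l_{t+1:T}, \sigma^f_{t+1:T})$ through the same transition kernels, prescriptions, and mean-field updates as on the RHS, which conditions on $(\pi_{t+1}, z_{1:t+1}, a_{1:t}^l, x_{1:t+1}^f)$ and uses the same future strategies. Neither $\tsigma^l_t$ nor $\sigma^f_t$ enters the rewards for $n \geq t+1$ nor any transition beyond time $t$ once $(a_t^l, z_{t+1}, x_{t+1}^f)$ is fixed in the conditioning. Consequently the two joint laws agree, and therefore so do the two expectations.

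The main obstacle will be the careful bookkeeping in the Bayes-rule step: tracking how the $\phi(z_{t+1}|\pi_t,z_t,\tgamma_t)$ factor appears identically in numerator and denominator when one conditions on $z_{t+1}$, and verifying that marginalizing out $x_t^l$ does not entangle with the followers' variable $x_t^f$, which requires explicitly invoking the conditional independence from Claim~\ref{claim:CondInd} together with the fact that under $\tsigma^l$ the time-$t$ leader prescription depends only on $(\pi_t, z_t)$, both of which lie in the conditioning set.
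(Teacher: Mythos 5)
Your proposal is correct and follows essentially the same route as the paper's proof: expand the conditional law of the future trajectory as a ratio, invoke the conditional independence of leader and follower types given the common information (Claim~\ref{claim:CondInd}), cancel the common $\phi$, follower-strategy, and follower-kernel factors between numerator and denominator, and recognize the surviving ratio as the Bayes update $\pi_{t+1}=F(\pi_t,z_t,\tgamma^l_t,a_t^l)$ from \eqref{eq:piupdate}, after which the remaining trajectory is governed by the same strategies and kernels as on the right-hand side. No substantive differences to report.
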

\begin{IEEEproof} 
Since the above expectations involve random variables $X_{t+1}^{l}, Z_{t+1:T},A_{t+1:T}, X_{t+2:T}$, we consider the probability 
\seq{
\eq{
&\p^{\tsigma^l_{t:T},\sigma^f_{t:T},\,\pi_t} (x_{t+1}^l, z_{t+1:T},a_{t+1:T}, ,x_{t+2:T}\big\lvert \pi_t, z_{1:t+1}, a_{1:t}^l, x_{1:t+1}^f ) = \frac{Nr}{Dr} \label{eq:F2}
}
\vspace{-0.4cm}
\eq{
&\text{where}\nn\\ 
&Nr 
=\sum_{x_t^l,a_t^f}\p^{\tsigma^{l}_{t:T},\sigma^f_{t:T},\,\pi_t} (x_t^{l},a_t,z_{t+1}, x_{t+1}, z_{t+2:T},a_{t+1:T}, ,x_{t+2:T} \big\lvert\pi_t, z_{1:t},a_{1:t-1}^l, x_{1:t}^f ) \\
&= \sum_{x_t^{l},a_t^f}\p^{\tsigma^{l}_{t:T}, \sigma^f_{t:T},\,\pi_t} (x_t^l \big\lvert \pi_t, z_{1:t-1},a_{t+1:T}, x_{1:t}^f )\phi(z_{t+1}|\pi_{t},z_{t},\tgamma^l_t,\gamma_t^f)\sigma_t^{f}(a_t^{f}|z_{1:t},a_{1:t-1}^l, x_{1:t}^{f})\tsigma_t^{l}(a_t^{l}|\pi_t,z_{t},x_t^{l}) 
\nonumber 
\\
&Q(x_{t+1}|z_t,x_t, a_t)\p^{\tsigma^{l}_{t:T}, \sigma^f_{t:T},\,\pi_t} (z_{t+2:T}, a_{t+1:T},x_{t+2:T}| z_{1:t},a_{1:t-1}^l ,x_{1:t-1}^f, x_{t:t+1}) 
\\
=&\sum_{x_t^{l}}\pi_t(x_t^{l})\phi(z_{t+1}|\pi_{t},z_{t},\tgamma^l_t,\gamma_t^f)\sigma_t^{f}(a_t^{f}|z_{1:t},a_{1:t-1}^l, x_{1:t}^{f})  \tsigma_t^{l}(a_t^{l}|\pi_t,z_{t},x_t^{l}) Q^f(x^f_{t+1}|z_t,x^f_t, a_t)Q^{l}(x^{l}_{t+1}|z_t,x^{l}_t, a_t)
\\
&\p^{\tsigma^{l}_{t+1:T},\sigma^f_{t+1:T} ,\, \pi_{t+1}} (z_{t+2:T}, a_{t+1:T},x_{t+2:T}| \pi_t,z_{1:t} ,a_{1:t-1}^l,x_{1:t}^f,x_t^l, x_{t+1}),\label{eq:Nr2}
}
where \eqref{eq:Nr2} follows from the conditional independence of types given common information, as shown in Claim~1 in~Appendix~\ref{app:0}, and the fact that probability on $(z_{t+1:T},a_{t+1:T}, ,x_{2+t:T})$ given $z_{1:t}, x_{1:t}^f,x_t^l,x_{t+1}, \pi_{t} $ depends on $z_{1:t}, a_{1:t-1}^l,x_{1:t}^f,x_{t+1}, \pi_{t+1} $ through ${\sigma_{t+1:T}^{f}, \tsigma_{t+1:T}^{l} }$. Similarly, the denominator in \eqref{eq:F2} is given by
\eq{
Dr &= \sum_{\tilde{x}_{t}^{l},a_t^f} \p^{ \tsigma^{l}_{t:T},\sigma^f_{t:T},\, \pi_t} (\tilde{x}_t^{l}, a_t, z_{t+1},x_{t+1}^f\big\lvert \pi_t, z_{1:t-1},a_{1:t-1}^l, x_{1:t}^f )\\
&=\sum_{\tilde{x}_{t}^{l},a_t^f} \p^{\tsigma^{l}_{t:T},\sigma^f_{t:T},\,\pi_t} (\tilde{x}_t^{l} | z_{1:t-1},a_{1:t-1}^l, x_{1:t}^f )\phi(z_{t+1}|\pi_t,z_t,\tgamma_t),  \sigma_t^{f}(a_t^{f}|z_{1:t}, a_{1:t-1}^l,x_{1:t}^{f}) \nn\\
&\tsigma_t^{l}(a_t^{l}|\pi_t,z_{t}, \tilde{x}_t^{l})Q^f(x^f_{t+1}|z_t,x^f_t, a_t)\label{eq:F3}\\
=&\sum_{\tilde{x}_{t}^{l},a_t^f} \pi_t(\tilde{x}_t^{l})\phi(z_{t+1}|\pi_t,z_t,\tgamma_t) \sigma_t^{f}(a_t^{f}|z_{1:t}, a_{1:t-1}^l,x_{1:t}^{f}) \tsigma_t^{l}(a_t^{l}|\pi_t,z_{t}, \tilde{x}_t^{l})Q^f(x^f_{t+1}|z_t,x^f_t, a_t).  \label{eq:F4}
%
}

By canceling the terms $\phi(\cdot),\sigma_t^f(\cdot)$ and $Q^f(\cdot)$ in the numerator and the denominator, \eqref{eq:F2} is given by
\eq{
&\frac{\sum_{x_t^{l}}\pi_t(x_t^{l}) \tsigma_t^{l}(a_t^{l}|\pi_t,z_{t}, x_t^{l}) Q_{t+1}^{l}(x^{l}_{t+1}|z_t,x^{l}_t, a_t)}{\sum_{\tilde{x}_{t}^{l}} \pi_t^{l}(\tilde{x}_t^{l}) \tsigma_t^{l}(a_t^{l}|\pi_t,z_{t}, \tilde{x}_t^{l})}  \nonumber \\
&\times\p^{\tsigma^{l}_{t+1:T}, \sigma^f_{t+1:T} ,\, \pi_{t+1}} (z_{t+1:T}, a_{t+1:T},x_{t+2:T}|\pi_t, z_{1:t},a_{1:t-1}^l,x_{1:t}^f, x_{t+1})\\
=&\pi_{t+1}^{l}(x_{t+1}^{l}) \p^{ \tsigma^{l}_{t+1:T},\sigma^f_{t+1:T}\, \pi_{t+1}} (z_{t+1:T}, a_{t+1:T},x_{t+2:T}|\pi_t, z_{1:t} ,a_{1:t-1}^l,x_{1:t}^f, x_{t+1})\label{eq:F6}\\
=& \p^{\tsigma_{t+1:T}^{l},\sigma_{t+1:T}^{f} \, \pi_{t+1} } (x_{t+1}^{l} ,z_{t+1:T},a_{1:t}^l,x_{t+2:T} |\pi_{t+1}, z_{1:t},a_{1:t-1}^l, x_{1:t+1}^f ),
}
}
where \eqref{eq:F6} follows from using the definition of $\pi_{t+1}(x_{t+1}^{l})$ in \eqref{eq:piupdate}.

\end{IEEEproof}

\section{Part 2: Leader}
\label{app:P2}
In the following, we will show that, $t,\forall z_{1:t},a_{1:t-1}^l, x_{1:t}^l, \sigma^l$
\eq{
&\E^{\tsigma^l,\tsigma^f,\pi_t} \big\{ \sum_{n=t}^T \delta^{n-t}R_n^l(Z_n,X_n^l,A_n^l) |\pi_t,z_{1:t},a_{1:t-1}^l,x_{1:t}^l\big\} \nn\\
&\geq
 \E^{\sigma^l,\hat{\sigma}^f,\pi_t} \big\{ \sum_{n=t}^T \delta^{n-t}R_n^l(Z_n,X_n^l,A_n^l) |\pi_t,z_{1:t},a_{1:t-1}^l,x_{1:t}^l\big\},
}
where $\tsigma^f\in BR^f(z,\tsigma^l)$ as shown in Part 1 and $\hat{\sigma}^f\in BR^f(z,\sigma^l)$.

\begin{proof}
We prove the above result using induction and from results in Lemma~\ref{l_lemma:2} and \ref{l_lemma:1} proved in Appendix~\ref{l_app:lemmas}. 

For the base case at $t=T$, $\forall z_{1:T},a_{1:T-1}^l, x_{1:T}^l, \sigma^l$
\seq{
\eq{
&\hspace{-10pt}\E^{\tsigma_{T}^{f} \tsigma_{T}^{l},\pi_t}\big\{  R_T^l(Z_T,X_T^l,A_T^l) \big\lvert \pi_T, z_{1:T},a_{1:T-1}^l,x_{1:T}^l\big\}\nn \\
&\hspace{-10pt}=V^l_T(\pi_T,z_{T},x_T^l)  \label{l_eq:T2a}\\
&\hspace{-10pt}\geq \E^{\hat{\sigma}_T^f, \sigma_{T}^{l},\pi_t }\big\{ R_T^l(Z_T,X^l_T,A^l_T) \big\lvert \pi_T, z_{1:T},a_{1:T-1}^l,x_{1:T}^l \big\} \label{l_eq:T2},\nn\\
&\text{ where } \hat{\sigma}_T^f\in BR_T^f(\pi_T,z_{1:T},a_{1:T-1}^l,x_{1:T}^l,\sigma_T^l)
}
}
where (\ref{l_eq:T2a}) follows from Lemma~\ref{l_lemma:1} and (\ref{l_eq:T2}) follows from Lemma~\ref{l_lemma:2} in Appendix~\ref{l_app:lemmas}. Let the induction hypothesis be that for $t+1$, $\forall z_{1:t+1}, a_{1:t}^l,x_{1:t+1}^l, \sigma^l$,
\seq{
\eq{
 & \E^{\tsigma_{t+1:T}^{f} \tsigma_{t+1:T}^{l},\pi_{t+1}} \big\{ \sum_{n=t+1}^T R_n^l(Z_n,X_n^l,A_n^l) \big\lvert \pi_{t+1}, z_{1:t+1},a_{1:t}^l,x_{1:t+1}^l \big\} \nn\\
  &\geq \E^{\hat{\sigma}_{t+1:T}^f \sigma_{t+1:T}^{l},\pi_{t+1}} \big\{ \sum_{n=t+1}^T R_n^l(Z_n,X_n^l,A_n^l) \big\lvert \pi_{t+1}, z_{1:t+1},a_{1:t}^l,x_{1:t+1}^l \big\} \label{l_eq:PropIndHyp}\\
  &\text{where }  \hat{\sigma}^f_{t+1:T}\in BR_{t+1}^f(\pi_{t+1},z_{1:t+1},a_{1:t}^l,x_{1:t+1}^l,\sigma_{t+1:T}^l)
}
Then $\forall z_{1:t},a_{1:t-1}^l,x_{1:t}^l, \sigma^l$, we have
\eq{
&\E^{\tsigma_{t:T}^{f} \tsigma_{t:T}^{l},\pi_T} \big\{ \sum_{n=t}^T R_n^l(Z_n,X_n^l,A_n^l) \big\lvert \pi_{t}, z_{1:t},a_{1:t-1}^l,x_{1:t}^l\big\} \nn \\
&= V^l_t(\pi_t,z_{t},x_t^l)\label{l_eq:T1}\\
&\geq \E^{\hat{\gamma}^f_t, \gamma_t^{l},\pi_t} \big\{ R_t^l(Z_t,X_t^l,A_t^l) +  V_{t+1}^l (F(\pi_t,z_t,\gamma_t^l,\hat{\gamma}^f_t,A_t),\phi(\pi_t,z_t,\gamma_t^l,\hat{\gamma}_t^f),X_{t+1}^l)\big\vert \pi_{t}, z_{1:t},a_{1:t-1}^l,x_{1:t}^l \big\}  \label{l_eq:T3}\\
&= \E^{ \hat{\sigma}_t^f,\sigma_t^{l},\pi_t} \big\{ R_t^l(Z_t,X_t^l,A_t^l) +\nn\\
&\E^{\tsigma_{t+1:T}^{f} \tsigma_{t+1:T}^{l},F(\pi_t,z_t,\gamma_t^l,\hat{\gamma}^f_t,A_t)}   \big\{ \sum_{n=t+1}^T R_n^l(X_n,A_n)  \big\lvert z_{1:t},\phi(\pi_t,z_t,\gamma_t^l,\hat{\gamma}_t^f),x_{1:t}^l,X_{t+1}^l\big\}  \big\vert \pi_{t}, z_{1:t},a_{1:t-1}^l,x_{1:t}^l\big\}  \label{l_eq:T3b}\\
&\geq \E^{\hat{\sigma}_t^f, \sigma_t^{l},\pi_t} \big\{ R_t^l(Z_t,X_t^l,A_t^l) \nn\\ &+\E^{\hat{\sigma}_{t+1:T}^f \sigma_{t+1:T}^{l},F(\pi_t,z_t,\gamma_t^l,\hat{\gamma}^f_t,A_t)} \big\{ \sum_{n=t+1}^T R_n^l(Z_n,X_n^l,A_n^l)  \big\lvert z_{1:t},\phi(\pi_t,z_t,\gamma_t^l,\hat{\gamma}_t^f),x_{1:t}^l,X_{t+1}^l\big\} \big\vert \pi_{t}, z_{1:t},a_{1:t-1}^l,x_{1:t}^l \big\}  \label{l_eq:T4} \\
%
%
&= \E^{\hat{\sigma}_t^f, \sigma_t^{l},\pi_t} \big\{ R_t^l(Z_t,X_t^l,A_t^l) + \E^{\hat{\sigma}_{t:T}^{f},\sigma_{t:T}^{l},\pi_t}\nn\\
& \big\{ \sum_{n=t+1}^T R_n^l(Z_n,X_n^l,A_n^l)  \big\lvert z_{1:t},\phi(\pi_t,z_t,\sigma_t^l(\cdot|z_t,\cdot),\hat{\sigma}_t^f(\cdot|z_t,\cdot)),x_{1:t}^l,X_{t+1}^l\big\} \big\vert \pi_{t}, z_{1:t},a_{1:t-1}^l,x_{1:t}^l\big\}  \label{l_eq:T5}\\
&=\E^{\hat{\sigma}_{t:T}^f\sigma_{t:T}^{l},\pi_t} \big\{ \sum_{n=t}^T R_n^l(Z_n,X_n^l,A_n^l) \big\lvert \pi_{t}, z_{1:t},a_{1:t-1}^l,x_{1:t}^l\big\}  \label{l_eq:T6},
}
}
where $\hat{\gamma}_t^f\in \bar{BR}_t^f(\pi_t, z_t,\gamma_t^l), \hat{\sigma}_t^f \in BR^f(\pi_t,z_{1:t},a_{1:t-1}^l,x_{1:t}^l,\sigma_t^l,\tsigma_{t+1:T}^l), \hat{\gamma}_t^f = \hat{\sigma}_t^f(\cdot|z_{1:t},a_{1:t-1}^l,\cdot)$,\\ $ \hat{\sigma}^f_{t+1:T}\in BR_{t+1}^f(\pi_{t+1},z_{1:t+1},a_{1:t}^l,x_{1:t+1}^l,\sigma_{t+1:T}^l)$, (\ref{l_eq:T1}) follows from Lemma~\ref{l_lemma:1}, (\ref{l_eq:T3}) follows from Lemma~\ref{l_lemma:2}, (\ref{l_eq:T3b}) follows from Lemma~\ref{l_lemma:1} and 
(\ref{l_eq:T4}) follows from induction hypothesis in (\ref{l_eq:PropIndHyp}), (\ref{l_eq:T5}) follows from the fact that probability on $(z_{t+1:T},a^l_{t+1:T},x^l_{2+t:T})$ given $\pi_t,z_{1:t+1}, a_{1:t}^l,x_{1:t+1}^l$ depends on $\pi_{t+1},z_{1:t+1}, a_{1:t}^l,x_{1:t+1}^l$ through ${\hat{\sigma}_{t+1:T}^{f}, \tsigma_{t+1:T}^{l} }$. 
\end{proof}

\section{}
\label{l_app:lemmas}
\begin{lemma}
\label{l_lemma:2}
$\forall t\in [T], z_{1:t},a_{1:t-1}^l,x_{1:t}^l, \sigma^l_t$
\eq{
&V_t^l(\pi_t,z_{t},x_t^l) \geq \E^{\sigma_t^{l},\bar{\sigma}_t^f,\pi_t} \big\{ R_t^l(Z_t,X_t^l,A_t^l) + V_{t+1}^l (F(\pi_t,z_t,\gamma_t^l,A_t),\phi(\pi_t,z_{t},\gamma_t^l,\bar{\gamma}_t^f),X_{t+1}^l) \big\lvert \pi_{t}, z_{1:t},a_{1:t-1}^l,x_{1:t}^l\big\}\label{l_eq:lemma2}
}
where $\bar{\sigma}_t^f\in BR^f_t(\pi_t,z_{1:t},a_{1:t-1}^l,x_{1:t}^l,\sigma_t^l,\tsigma_{t+1:T}^l)$, ${\gamma}^l_t = {\sigma}_t^l(\cdot|{z}_{1:t},a_{1:t-1}^l,x_{1:t-1}^l,\cdot)$, and $\bar{\gamma}_t^f \in \bar{BR}_t^f(\pi_t,z_t,\gamma_t^l)$

\end{lemma}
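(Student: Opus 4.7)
The plan is to argue by contradiction, mirroring the structure of Lemma~\ref{lemma:2} but accounting for the fact that the follower's best response moves with the leader's deviation. Suppose the claim fails: there exist $\hat{\sigma}_t^l$, a history $(\hat{z}_{1:t}, \hat{a}_{1:t-1}^l, \hat{x}_{1:t}^l)$, and a corresponding $\bar{\sigma}_t^f \in BR^f_t(\pi_t, \hat{z}_{1:t}, \hat{a}_{1:t-1}^l, \hat{x}_{1:t}^l, \hat{\sigma}_t^l, \tsigma_{t+1:T}^l)$ such that the right-hand side of \eqref{l_eq:lemma2} strictly exceeds $V_t^l(\pi_t, \hat{z}_t, \hat{x}_t^l)$. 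From $\hat{\sigma}_t^l$ I would extract the one-step prescription $\hat{\gamma}_t^l:\cX^l\to\cP(\cA^l)$ by setting $\hat{\gamma}_t^l(\cdot|\hat{x}_t^l) := \hat{\sigma}_t^l(\cdot|\hat{z}_{1:t}, \hat{a}_{1:t-1}^l, \hat{x}_{1:t}^l)$ at the realized state and defining it arbitrarily elsewhere, exactly as in the construction used in \eqref{eq:E11}--\eqref{eq:E10} of the follower proof.

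Next, I would pick any $\bar{\gamma}_t^f \in \bar{BR}^f_t(\pi_t, \hat{z}_t, \hat{\gamma}_t^l)$ as the corresponding one-step follower best response. The key reduction is that, because $V^f_{t+1}$ equals the follower's cumulative expected reward under $\tsigma$ (Lemma~\ref{lemma:1}), the full-horizon object $\bar{\sigma}_t^f$ at time $t$ agrees pointwise at the relevant states with some element of $\bar{BR}^f_t(\pi_t, \hat{z}_t, \hat{\gamma}_t^l)$; this lets me swap the $\bar{\sigma}_t^f$-expectation for a one-step expectation against $\bar{\gamma}_t^f$ and collapse the conditioning on the whole history into the Markovian summary $(\pi_t, \hat{z}_t, \hat{x}_t^l)$, invoking Claim~1 from Appendix~A for the conditional-independence factorization. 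The hypothesized violation then reads
\[\E^{\hat{\gamma}_t^l, \bar{\gamma}_t^f, \pi_t}\{R^l_t(Z_t, X_t^l, A_t^l) + \delta V^l_{t+1}(F(\pi_t,z_t,\hat\gamma_t^l,A_t^l),\phi(\pi_t,z_t,\hat\gamma_t^l,\bar\gamma_t^f), X^l_{t+1}) \mid \pi_t, \hat{z}_t, \hat{x}_t^l\} > V^l_t(\pi_t, \hat{z}_t, \hat{x}_t^l),\]
which contradicts \eqref{eq:FP2}, since $\tgamma_t^l$ was chosen as the arg max of precisely this expression over all $\gamma_t^l$ with the corresponding $\bar{BR}^f_t$-best response plugged in.

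The main obstacle I anticipate is exactly the reduction above: showing that the leader's deviation $\hat{\sigma}^l$ permits a clean one-step decomposition of the follower's full-horizon best response $\bar{\sigma}^f$ at time $t$. In the follower's Bellman step (Lemma~\ref{lemma:2}) this step is trivial because the other agents' strategies are frozen; here, $\bar{\sigma}^f$ rebuilds itself in response to $\hat{\sigma}^l$ at every time, and the one-step reduction relies on the dynamic-programming characterization of $BR^f$ already developed in Part~1 via the $V^f_{t+1}$ equality. Once this reduction is in place, the remaining bookkeeping for the leader's type transitions and belief update mirrors \eqref{eq:E11}--\eqref{eq:E10} of the follower's proof, and the contradiction is immediate from the fixed-point definition of $\tgamma^l_t$ and $V^l_t$.
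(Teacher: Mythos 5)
Your proposal is correct and follows essentially the same route as the paper's proof: argue by contradiction, extract the one-step prescription $\hat{\gamma}_t^l$ from the deviating leader strategy at the realized history, use the fact that $V_t^l$ is defined as a maximum over prescriptions in \eqref{eq:FP2} with the follower's $\bar{BR}_t^f$-response plugged in, and conclude $V_t^l > V_t^l$. You are in fact somewhat more explicit than the paper about the one delicate step — identifying the time-$t$ restriction of the full-horizon best response $\bar{\sigma}_t^f$ with an element of $\bar{BR}_t^f(\pi_t,\hat{z}_t,\hat{\gamma}_t^l)$ via the dynamic-programming characterization from Part~1 — which the paper simply asserts.
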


\begin{proof}
We prove this lemma by contradiction. Suppose the claim is not true for $t$. This implies $\exists \hat{\sigma}^l,\hat{z}_{1:t},\hat{a}_{1:t-1}^l,\hat{x}_{1:t}^l$ such that 
\eq{
&\E^{\hat{\sigma}_t^f, \hat{\sigma}_t^{l},\pi_t} \big\{ R_t^l(Z_t,X_t^l,A_t^l) +  V_{t+1}^l (F(\pi_t,z_t,\hat{\gamma}_t^l,A_t^l),\phi(\pi_t,z_t,\hat{\gamma}_t^l,\hat{\gamma}_t^f),X_{t+1}^l) \big\lvert \pi_{t}, \hat{z}_{1:t},\hat{a}_{1:t-1}^l,\hat{x}_{1:t}^l\big\} > V_t^l(\pi_t,\hat{z}_{t},\hat{x}_t^l),\label{l_eq:E8}
}
where $\hat{\sigma}_t^f\in BR_t^f(\pi_t,z_{1:t},a_{1:t-1}^l,x_{1:t}^l,\hat{\sigma}_t^l,\tsigma_{t+1:T}^l)$ and $\hat{\gamma}_t^l$ satisfies
\eq{
\hat{\gamma}^l_t &= \hat{\sigma}_t^l(\cdot|\hat{z}_{1:t},\hat{a}_{1:t-1}^l,\hat{x}_{1:t-1}^l,\cdot)
}


Then for $\hat{z}_{1:t},\hat{a}_{1:t-1}^l,\hat{x}_{1:t}^l$, we have
\seq{
\eq{
&V_t^l(\pi_t,\hat{z}_{t},\hat{x}_t^l)\\
&= \max_{\gamma^l_t} \E^{\gamma^l_t,\breve{\gamma}_t^f,\pi_t } \big\{ R_t^l(Z_t,X_t^l,A_t^l) + V_{t+1}^l (F(\pi_t,z_t,\gamma_t^l,A_t^l),\phi(\pi_t,\hat{z}_t,\gamma_t^l,\breve{\gamma}^f_t),X_{t+1}^l) \big\lvert \pi_{t}, \hat{z}_{1:t},\hat{a}_{1:t-1}^l,\hat{x}_{1:t}^l\big\} \label{l_eq:E11}\\
&\geq\E^{\hat{\gamma}_t^l \hat{\gamma}_t^f,\pi_t} \big\{ R_t^l(Z_t,X_t^l,A_t^l)+V_{t+1}^l (F(\pi_t,z_t,\hat{\gamma}_t^l,A^l_t),\phi(\pi_t,\hat{z}_{t},\hat{\gamma}_t^l,\hat{\gamma}_t^f),X_{t+1}^l) \big\lvert \pi_{t}, \hat{z}_{1:t},\hat{a}_{1:t-1}^l,\hat{x}_{1:t}^l \big\}   \\
&= \E^{\hat{\sigma}_t^l \hat{\sigma}_t^f,\pi_t} \big\{ R_t^l(Z_t,X_t^l,A_t^l) +  V_{t+1}^l (F(\pi_t,z_t,\hat{\gamma}_t^l,A^l_t),\phi(\pi_t,\hat{z}_{t},\hat{\gamma}_t^l,\hat{\gamma}_t^f),X_{t+1}^l) \big\lvert \pi_{t}, \hat{z}_{1:t},\hat{a}_{1:t-1}^l,\hat{x}_{1:t}^l\big\}  \label{l_eq:E9}\\
&> V_t^l(\pi_t,\hat{z}_{t},\hat{x}_t^l) \label{l_eq:E10} 
}
where $\breve{\gamma} \in \bar{BR}_t^f(\pi_t,z_t,\gamma_t^l)$, (\ref{l_eq:E11}) follows from definition of $V_t^l$ in (\ref{eq:Vdef}), (\ref{l_eq:E9}) follows from definition of $\hat{\gamma}_t^l$ and (\ref{l_eq:E10}) follows from (\ref{l_eq:E8}). However this leads to a contradiction. 
}
\end{proof}

\begin{lemma}
\label{l_lemma:1}
$\forall t\in [T],z_{1:t},a_{1:t-1}^l,x_{1:t}^l$
\eq{
V^l_t(\pi_t,z_{t},x_t^l)&= \E^{\tsigma_{t:T}^{f} \tsigma_{t:T}^{l},\pi_t} \big\{ \sum_{n=t}^T R_n^l(Z_n,X_n^l,A_n^l) \big\lvert \pi_{t}, z_{1:t},a_{1:t-1}^l,x_{1:t}^l\big\} .
}
\end{lemma}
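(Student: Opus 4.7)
The plan is to mirror the structure of Lemma~\ref{lemma:1} for the follower, proceeding by backward induction on $t$ from $T+1$ down to $1$. The key observation is that, by construction of $\tsigma$ in the forward recursion, once we condition on $(\pi_t,z_{1:t},a_{1:t-1}^l,x_{1:t}^l)$ the actions and transitions generated under $\tsigma^l,\tsigma^f$ only depend on $(\pi_t,z_t,x_t^l)$, so the conditional expectation collapses to the Markovian value $V^l_t(\pi_t,z_t,x_t^l)$.

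For the base case at $t=T$, I would use the initialization $V^l_{T+1}\equiv 0$ together with the definition \eqref{eq:Vdef} of $V^l_T$ to write
\eq{
V^l_T(\pi_T,z_T,x_T^l)
= \E^{\tilde{\gamma}^f_T\tilde{\gamma}^l_T}\{R^l_T(Z_T,X_T^l,A_T^l)\mid \pi_T,z_T,x_T^l\}.
}
Since $\tsigma^l_T(\cdot\mid z_{1:T},a_{1:T-1}^l,x_{1:T}^l)=\theta^l_T[\pi_T,z_T](\cdot\mid x_T^l)=\tilde{\gamma}^l_T(\cdot\mid x_T^l)$ and similarly for $\tsigma^f_T$, the right-hand side above equals the desired conditional expectation under $\tsigma^l_T,\tsigma^f_T$ given the full history, proving the base case.

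For the inductive step, assume the claim holds at $t+1$ for all histories $(z_{1:t+1},a_{1:t}^l,x_{1:t+1}^l)$. Starting from the right-hand side at time $t$, I would split the sum into the $t$-th term plus the tail, apply the tower property conditioning on $(Z_{t+1},A_t^l,X_{t+1}^l)$ in addition to the history, and use the induction hypothesis to replace the inner tail expectation by $V^l_{t+1}(\pi_{t+1},Z_{t+1},X_{t+1}^l)$. Here $\pi_{t+1}=F(\pi_t,z_t,\tilde{\gamma}^l_t,A_t^l)$ and $Z_{t+1}=\phi(\pi_t,z_t,\tilde{\gamma}_t)$ by the Bayes' update lemma and the mean-field recursion \eqref{eq:phi_def}. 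The outer expectation then reduces to one over $(X_t^l,A_t,X_{t+1}^l)$ under the measure $\pi_t(x_t^l)\tilde{\gamma}^f_t(a_t^f\mid x_t^f)\tilde{\gamma}^l_t(a_t^l\mid x_t^l)Q(x_{t+1}^l\mid\cdot)$, which matches exactly the definition of $V^l_t(\pi_t,z_t,x_t^l)$ in \eqref{eq:Vdef}.

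The main obstacle is justifying that the conditional expectation of the continuation cost on the full history $(\pi_t,z_{1:t},a_{1:t-1}^l,x_{1:t}^l)$ and the sampled $(A_t^l,Z_{t+1},X_{t+1}^l)$ reduces to $V^l_{t+1}(\pi_{t+1},Z_{t+1},X_{t+1}^l)$. This uses two facts: (i) the conditional independence of the leader's and followers' types given the common information, established in Claim~1 of Appendix~A, so that the posterior on $x_{t+1}^l$ given the new common information is exactly $\pi_{t+1}=F(\pi_t,z_t,\tilde{\gamma}^l_t,A_t^l)$; and (ii) that under $\tsigma$, the future policies $\tsigma^l_{t+1:T},\tsigma^f_{t+1:T}$ depend on the history only through $(\pi_{t+1},z_{t+1})$, so the induction hypothesis applied at time $t+1$ delivers precisely the Markovian value function. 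Once this sufficiency argument is in place, the chain of equalities closes and the lemma follows.
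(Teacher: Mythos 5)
Your proposal is correct and follows essentially the same route as the paper: backward induction with the base case read off from the definition of $V^l_T$ in \eqref{eq:Vdef}, the inductive step splitting off the stage reward and collapsing the tail expectation to $V^l_{t+1}(F(\pi_t,z_t,\tilde{\gamma}^l_t,A^l_t),\phi(\pi_t,z_t,\tilde{\gamma}_t),X^l_{t+1})$ via the induction hypothesis. The ``sufficiency'' step you flag as the main obstacle is exactly what the paper invokes at its step \eqref{l_eq:C3} (the future law depending on the history only through $(\pi_{t+1},z_{t+1})$ and the continuation strategies, resting on the conditional-independence Claim~1), so no new ideas are needed beyond what you outline.
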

\begin{proof}
\seq{
We prove the lemma by induction. For $t=T$, 
\eq{
 &\E^{\tsigma_{T}^{f} \tsigma_{T}^{l},\pi_t} \big\{  R_T^l(Z_T,X_T^l,A_T^l) \big\lvert \pi_{T}, z_{1:T},a_{1:T-1}^l,x_{1:T}^l\big\}\nn \\
 &= \sum_{x_T^f, a_T^l} z_T(x_T^f)R_T^l(z_T,x_T^l,a_T^l) \tsigma_{T}^{f}(a_T^l|z_{T},x_T^f) \tsigma_{T}^{l}(a_T^{l}|z_{T},x_{1:T}^l)\\ 
 &=V^l_T(\pi_T,z_{T},x_T^l) \label{l_eq:C1}
}
}
where (\ref{l_eq:C1}) follows from the definition of $V_t^l$ in (\ref{eq:Vdef}).

Suppose the claim is true for $t+1$, i.e., $\forall  t\in [T], z_{1:t+1},a_{1:t}^l,x_{1:t+1}^l$
\eq{
&V^l_{t+1}(\pi_{t+1},z_{t+1},x_{t+1}^l) = \E^{\tsigma_{t+1:T}^{f} \tsigma_{t+1:T}^{l},\pi_{t+1}} \big\{ \sum_{n=t+1}^T R_n^l(Z_n,X_n^l,A_n^l) \big\lvert\pi_{t+1}, z_{1:t+1},a_{1:t}^l,x_{1:t+1}^l\big\} \label{l_eq:CIndHyp}.
}
Then $\forall  t\in [T], z_{1:t},a_{1:t-1}^l,x_{1:t}^l$, we have
	\seq{
\eq{
&\E^{\tsigma_{t:T}^{f} \tsigma_{t:T}^{l},\pi_t} \big\{ \sum_{n=t}^T R_n^l(Z_n,X_n^l,A_n^l) \big\lvert \pi_t, z_{1:t},a_{1:t-1}^l,x_{1:t}^l \big\} \nn \\
&=  \E^{\tsigma_{t:T}^{f} \tsigma_{t:T}^{l}, \pi_t } \big\{R_t^l(Z_t,X_t^l,A_t^l) +\nn\\
&\E^{\tsigma_{t:T}^{f} \tsigma_{t:T}^{l},\pi_t } \big\{ \sum_{n=t+1}^T R_n^l(Z_n,X_n^l,A_n^l) \big\lvert F(\pi_t,z_t,\tgamma^l_t,A_t^l),z_{1:t},  \phi(\pi_t,z_t,\tgamma_t),a_{1:t-1}^l,A_t^l,x_{1:t}^l,X_{t+1}^l\big\} \big\lvert \pi_t, z_{1:t},a_{1:t-1}^l,x_{1:t}^l\big\} \label{l_eq:C2}
\\
&=  \E^{\tsigma_{t:T}^{f} \tsigma_{t:T}^{l},\pi_t} \big\{R_t^l(Z_t,X_t^l,A_t^l)+\E^{\tsigma_{t+1:T}^{f} \tsigma_{t+1:T}^{l},\pi_{t+1}}\nn\\
& \big\{ \sum_{n=t+1}^T R_n^l(Z_n,X_n^l,A_n^l) \big\lvert F(\pi_t,z_t,\tgamma^l_t,A_t^l),z_{1:t},\phi(\pi_t,z_t,\tgamma_t),a_{1:t-1}^l,A_t,x_{1:t}^l,X_{t+1}^l\big\} \big\lvert \pi_t, z_{1:t},a_{1:t-1}^l,x_{1:t}^l\big\} \label{l_eq:C3}\\
&=  \E^{\tsigma_{t}^{f} \tsigma_{t}^{l},\pi_t} \big\{R_t^l(Z_t,X_t^l,A_t^l) + V^l_{t+1}(F(\pi_t,z_t,\gamma^l_t,A^l_t),\phi(\pi_t,z_t,\tgamma_t),X_{t+1}^l) \big\lvert \pi_t, z_{1:t},a_{1:t-1}^l,x_{1:t}^l\big\} \label{l_eq:C5}\\
&=V^l_{t}(\pi_t,z_t,x_t^l) \label{l_eq:C6},
}
}
where (\ref{l_eq:C3}) follows from the fact that probability on $(z_{t+1:T},a^l_{t+1:T},x^l_{2+t:T})$ given $\pi_t,z_{1:t+1}, a_{1:t}^l,x_{1:t+1}^l$ depends on $\pi_{t+1},z_{1:t+1}, a_{1:t}^l,x_{1:t+1}^l$ through ${\sigma_{t+1:T}^{f}, \tsigma_{t+1:T}^{l} }$, (\ref{l_eq:C5}) follows from the induction hypothesis in (\ref{l_eq:CIndHyp}), 
and (\ref{l_eq:C6}) follows from the definition of $V_t^l$ in (\ref{eq:Vdef}).
\end{proof}

\section{}
\label{app:Proof_Exist}

\begin{proof}
We prove this by contradiction. This implies there exists $\pi_t,z_t$ such that either (a)~\eqref{eq:FP1}
 doesn't have a solution or (b)~\eqref{eq:FP2} doesn't have a solution. 

\bit{
\item[(a)] ~\eqref{eq:FP1} doesn't have a solution (concerning the follower)

\seq{
Suppose for any equilibrium generating function $\theta$ that generates $(\tsigma^l,\tsigma^f,z)$ through forward recursion, there exists $t\in\cT, z_{1:t}, a_{1:t-1}^l$ such that for $\pi_t(\cdot)= P^{\tsigma^l,\tsigma^f}(\cdot|z_{1:t},a_{1:t-1}^l)$, \eqref{eq:FP1} is not satisfied for $\theta$
i.e. for $\tgamma^f_t = \theta^f[\pi_t,z_t] = \tsigma_t^f(\cdot|\pi_t,z_t,\cdot), \tgamma^l_t = \theta^l[\pi_t,z_t] = \tsigma_t^l(\cdot|\pi_t,z_t,\cdot)$, $\exists x_t^f$ such that
\eq{
\tgamma_t^f(\cdot|x_t^f)\notin  \arg\max_{\gamma^f_t(\cdot|x_t^f)}\nn\\
 &\hspace{-2cm} \E^{\gamma^f_t(\cdot|x_t^f) {\tgamma}^{l}_t,\,z_t,\pi_t} 
\big\{ R_t^f(z_t, X_t^l, X^f_t,A_t) +\delta V_{t+1}^f(F(\pi_t,z_t,\tgamma_t^l,A_t),\phi(\pi_t, z_t,\tgamma_t^l,\tilde{\gamma}^f_t), X^f_{t+1}) \big\lvert \pi_t, z_t,x_t^f \big\}  
  }
  Let $t$ be the first instance in the backward recursion when this happens. This implies $\exists\ \hat{\gamma}_t^f$ such that
  \eq{
  \E^{\hat{\gamma}^f_t(\cdot|x_t^f) {\tgamma}^{l}_t,\,z_t,\pi_t} 
\big\{ R_t^f(z_t, X_t^l, X^f_t,A_t) +\delta V_{t+1}^f(F(\pi_t,z_t,\tgamma_t^l,A_t^l),\phi(\pi_t, z_t,\tgamma_t^l,\tilde{\gamma}^f_t), X^f_{t+1}) \big\lvert \pi_t, z_t,x_t^f \big\}  
  \nn\\
  >  \E^{\tgamma^f_t(\cdot|x_t^f) {\tgamma}^{l}_t,\,z_t,\pi_t} 
\big\{ R_t^f(z_t, X_t^l, X^f_t,A_t) +\delta V_{t+1}^f(F(\pi_t,z_t,\tgamma_t^l,A^l_t),\phi(\pi_t, z_t,\tgamma_t^l,\tilde{\gamma}^f_t), X^f_{t+1}) \big\lvert \pi_t, z_t,x_t^f \big\}   \label{a_eq:E1}
  }
  This implies for $\hat{\sigma}^f_t(\cdot|z_{1:t},a_{1:t-1}^l,x_{1:t-1}^f,\cdot) = \hat{\gamma}_t^f$,
  \eq{
  &\E^{\tsigma_{t:T}^{f} \tsigma_{t:T}^{l},\pi_t} \big\{ \sum_{n=t}^T R_n^f(Z_n,X_n,A_n) \big\lvert \pi_t, z_{1:t},a_{1:t-1}^l, x_{1:t}^f \big\}
  \nn\\
  &= \E^{\tsigma_t^{f} \tsigma_t^{l}, \pi_t} \big\{ R_t^f(Z_t,X_t,A_t) + \E^{\tsigma_{t:T}^{f} \tsigma_{t:T}^{l},\pi_t}\nn\\
  &\big\{ \sum_{n=t+1}^T R_n^f(Z_n,X_n,A_n) \big\lvert \pi_t,z_{1:t},\phi(\pi_t,z_t,\tgamma_t),a_{1:t-1},A_t, x_{1:t+1}^f \big\}  \big\vert \pi_t, z_{1:t}, a_{1:t-1}^l, x_{1:t}^f \big\}
\\
  &= \E^{\tsigma_t^{f} \tsigma_t^{l}, \,\pi_t} \big\{ R_t^f(Z_t,X_t,A_t) + \E^{\tsigma_{t+1:T}^{f} \tsigma_{t+1:T}^{l},F(\pi_t,z_t,\tgamma_t^l,A_t^l)}\nn\\
  &\{ \sum_{n=t+1}^T R_n^f(X_n,A_n) \big\lvert z_{1:t}, \phi(\pi_t,z_t,\tgamma_t), a_{1:t-1}^l,A^l_t, x_{1:t+1}^f \big\}  \big\vert \pi_t, z_{1:t}, a_{1:t-1}^l,x_{1:t}^f \big\} \label{a_eq:E2}
  \\
  &=\E^{\tgamma^f_t(\cdot|x_t^f) \tilde{\gamma}^{l}_t, \, \pi_t} \big\{ R_t^f(Z_t,X_t,A_t) + V_{t+1}^f (F(\pi_t, z_t,\tilde{\gamma}^l_t, A^l_t),\phi(\pi_t,z_t,\tgamma_t), X_{t+1}^f) \big\lvert\pi_t,x_t^f \big\} \label{a_eq:E3}
  \\
  &< \E^{\hat{\sigma}^f_t(\cdot|\pi_t,x_t^f) \tilde{\gamma}^{l}_t, \, \pi_t} \big\{ R_t^f(Z_t,X_t,A_t) + V_{t+1}^f (F({\pi}_t, \tilde{\gamma}_t, A_t), \phi(\pi_t,z_t,\tgamma_t),X_{t+1}^f) \big\lvert \pi_t, x_t^f \big\}\label{a_eq:E4}
  \\
  &= \E^{\hat{\sigma}_t^f \sigma_t^{l}, \pi_t} \big\{ R_t^f(Z_t,X_t,A_t) +  \E^{\tsigma_{t+1:T}^{f} \tsigma_{t+1:T}^{l} \pi_{t+1}}\nn\\
  &\big\{ \sum_{n=t+1}^T R_n^f(Z_n,X_n,A_n) \big\lvert z_{1:t},\phi(\pi_t,z_t,\tgamma_t), a_{1:t-1},A_t, x_{1:t}^f,X_{t+1}^f\big\} \big\vert \pi_t,z_{1:t},a_{1:t-1}^l, x_{1:t}^f \big\}\label{a_eq:E5}
  \\
  &=\E^{\hat{\sigma}_t^f,\tsigma_{t+1:T}^{f} \tsigma_{t:T}^{l},\pi_t} \big\{ \sum_{n=t}^T R_n^f(Z_n,X_n,A_n) \big\lvert \pi_t,z_{1:t}, a_{1:t-1}^l, x_{1:t}^f \big\},\label{a_eq:E6}
  }
  where \eqref{a_eq:E2} follows from Lemma~\ref{lemma:3}, \eqref{a_eq:E3} follows from the definitions of $\tgamma_t^f$ and $\pi_t$ and Lemma~\ref{lemma:1}, \eqref{a_eq:E4} follows from \eqref{a_eq:E1} and the definition of $\hat{\sigma}_t^f$, \eqref{a_eq:E5} follows from Lemma~\ref{lemma:2}, \eqref{a_eq:E6} follows from Lemma~\ref{lemma:3}. However, this leads to a contradiction since $(\tsigma^l,\tsigma^f,z)$ is a GMFE of the game.
}

\seq{
\item[(b)] ~\eqref{eq:FP2}
 doesn't have a solution (concerning the leader)

Suppose for any equilibrium generating function $\theta$ that generates $(\tsigma^l,\tsigma^f,z)$ through forward recursion, there exists $t\in\cT, z_{1:t}, a_{1:t-1}^l$ such that for $\pi_t(\cdot)= P^{\tsigma^l,\tsigma^f}(\cdot|z_{1:t},a_{1:t-1}^l)$, \eqref{eq:FP2} is not satisfied for $\theta$
i.e. for $\tgamma^f_t = \theta^f[\pi_t,z_t] = \tsigma_t^f(\cdot|\pi_t,z_t,\cdot), \tgamma^l_t = \theta^l[\pi_t,z_t] = \tsigma_t^l(\cdot|\pi_t,z_t,\cdot)$, $\exists x_t^l$ such that
\eq{
\tgamma_t^l &\notin \arg\max_{\gamma_t^l}\E^{ \bar{\gamma}_t^f{\gamma}^{l}_t,\,z_t} \big\{ R_t^l(z_t,x^l_t,A_t^l) +\delta V_{t+1}^l(F(\pi_t,z_t,\gamma^l_t,A_t^l),\phi(\pi_t,z_t,\gamma_t^l,\bar{\gamma}_t^f),X_{t+1}^l)|\pi_t,z_t,x_t^l\big\},  \label{b_eq:FP2}\\
&\text{where } \bar{\gamma}_t^f\in \bar{BR}_t^f(\pi_t, z_t,\gamma_t^l)
} 
  Let $t$ be the first instance in the backward recursion when this happens. This implies $\exists\ \hat{\gamma}_t^l$ such that
  \eq{
  \E^{ \hat{\gamma}_t^f\hat{\gamma}^{l}_t,\,z_t} \big\{ R_t^l(z_t,x^l_t,A_t^l) +\delta V_{t+1}^l(F(\pi_t,z_t,\hat{\gamma}^l_t,A_t^l),\phi(\pi_t,z_t,\hat{\gamma}_t^l,\hat{\gamma}_t^f),X_{t+1}^l)|\pi_t,z_t,x_t^l\big\}  
  \nn\\
  >  \E^{ {\tgamma}_t^f{\tgamma}^{l}_t,\,z_t} \big\{ R_t^l(z_t,x^l_t,A_t^l) +\delta V_{t+1}^l(F(\pi_t,z_t,\tgamma^l_t,A^l_t),\phi(\pi_t,z_t,\tgamma_t^l,{\tgamma}_t^f),X_{t+1}^l)|\pi_t,z_t,x_t^l\big\}   \label{b_eq:E1}
  }
  \eq{
 &\text{where } \hat{\gamma}_t^f\in \bar{BR}_t^f(\pi_t, z_t,\hat{\gamma}_t^l)
 }
  This implies for $\hat{\sigma}^l_t(\cdot|z_{1:t},a_{1:t-1}^l,x_{1:t-1}^l,\cdot) = \hat{\gamma}_t^f$,
  \eq{
  &\E^{\tsigma_{t:T}^{f} \tsigma_{t:T}^{l},\pi_t} \big\{ \sum_{n=t}^T R_n^l(Z_n,X^l_n,A^l_n) \big\lvert \pi_t, z_{1:t},a_{1:t-1}^l, x_{1:t}^l \big\}
  \nn\\
  &= \E^{\tsigma_t^{f} \tsigma_t^{l}, \pi_t} \big\{ R_t^l(Z_t,X^l_t,A^l_t) + \E^{\tsigma_{t:T}^{f} \tsigma_{t:T}^{l},\pi_t}\nn\\
  &\big\{ \sum_{n=t+1}^T R_n^l(Z_n,X^l_n,A^l_n) \big\lvert \pi_t,z_{1:t},\phi(\pi_t,z_t,\tgamma_t),a_{1:t-1},A_t, x_{1:t+1}^l \big\}  \big\vert \pi_t, z_{1:t}, a_{1:t-1}^l, x_{1:t}^l \big\}
\\
  &= \E^{\tsigma_t^{f} \tsigma_t^{l}, \,\pi_t} \big\{ R_t^l(Z_t,X^l_t,A^l_t) + \E^{\tsigma_{t+1:T}^{f} \tsigma_{t+1:T}^{l},F(\pi_t,z_t,\tgamma_t,A_t)}\nn\\
  &\{ \sum_{n=t+1}^T R_n^l(Z_n,X^l_n,A^l_n) \big\lvert \pi_t, z_{1:t}, \phi(\pi_t,z_t,\tgamma_t), a_{1:t-1},A_t, x_{1:t+1}^l \big\}  \big\vert \pi_t, z_{1:t}, a_{1:t-1}^l,x_{1:t}^l \big\} \label{b_eq:E2}
  \\
  &=\E^{\tgamma^f_t(\cdot|x_t^f) \tilde{\gamma}^{l}_t, \, \pi_t} \big\{ R_t^l(Z_t,X^l_t,A^l_t) + V_{t+1}^l (F(\pi_t,z_t, \tilde{\gamma}^l_t, A^l_t),\phi(\pi_t,z_t,\tgamma_t), X_{t+1}^l) \big\lvert\pi_t,x_t^l \big\} \label{b_eq:E3}
  \\
  &< \E^{\hat{\sigma}^f_t(\cdot|\pi_t,x_t^f) \tilde{\gamma}^{l}_t, \, \pi_t} \big\{ R_t^l(Z_t,X_t,A_t) + V_{t+1}^l (F({\pi}_t, \tilde{\gamma}^l_t, A^l_t), \phi(\pi_t,z_t,\tgamma_t),X_{t+1}^l) \big\lvert \pi_t, x_t^l \big\}\label{b_eq:E4}
  \\
  &= \E^{\hat{\sigma}_t^f \sigma_t^{l}, \pi_t} \big\{ R_t^l(Z_t,X^l_t,A^l_t) +  \E^{\tsigma_{t+1:T}^{f} \tsigma_{t+1:T}^{l} \pi_{t+1}}\nn\\
  &\big\{ \sum_{n=t+1}^T R_n^l(Z_n,X^l_n,A^l_n) \big\lvert z_{1:t},\phi(\pi_t,z_t,\tgamma_t), a_{1:t-1},A_t, x_{1:t}^l,X_{t+1}^l\big\} \big\vert \pi_t,z_{1:t},a_{1:t-1}^l, x_{1:t}^l \big\}\label{b_eq:E5}
  \\
  &=\E^{\hat{\sigma}_t^f,\tsigma_{t+1:T}^{f} \tsigma_{t:T}^{l},\pi_t} \big\{ \sum_{n=t}^T R_n^l(Z_n,X^l_n,A^l_n) \big\lvert \pi_t,z_{1:t}, a_{1:t-1}^l, x_{1:t}^l \big\},\label{b_eq:E6}
  }
  where \eqref{b_eq:E2} follows from the fact that probability on $(z_{t+1:T},a^l_{t+1:T},x^l_{2+t:T})$ given $\pi_t,z_{1:t+1}, a_{1:t}^l,x_{1:t+1}^l$ depends on $\pi_{t+1},z_{1:t+1}, a_{1:t}^l,x_{1:t+1}^l$ through ${\sigma_{t+1:T}^{f}, \tsigma_{t+1:T}^{l} }$, \eqref{b_eq:E3} follows Lemma~\ref{l_lemma:1}, \eqref{b_eq:E4} follows from \eqref{b_eq:E1} and the definition of $\hat{\sigma}_t^f$, \eqref{b_eq:E5} follows from Lemma~\ref{l_lemma:2}, \eqref{b_eq:E6} again follows from the fact that probability on $(z_{t+1:T},a^l_{t+1:T},x^l_{2+t:T})$ given $\pi_t,z_{1:t+1}, a_{1:t}^l,x_{1:t+1}^l$ depends on $\pi_{t+1},z_{1:t+1}, a_{1:t}^l,x_{1:t+1}^l$ through ${\sigma_{t+1:T}^{f}, \tsigma_{t+1:T}^{l} }$. However, this leads to a contradiction since $(\tsigma^l,\tsigma^f,z)$ is a GMFE of the game.
}
}
\end{proof}
\bibliographystyle{IEEEtran}

\end{document}